\declaretheorem[name=Informal Theorem]{informal}
\declaretheorem{theorem}
\declaretheorem[sibling=theorem]{lemma}
\declaretheorem[sibling=theorem]{corollary}
\newcommand{\E}{\mathbf{E}}
\renewcommand{\Pr}{\mathbf{Pr}}
\newcommand{\alg}{A}
\newcommand{\opt}{\textsc{Opt}}
\newcommand{\jl}{\textsc{JL}}
\newcommand{\algo}{\textsc{Alg}}
\newcommand{\defeq}{\overset{\text{def}}{=}}
\newcommand{\Sref}[1]{\hyperref[#1]{\S\ref*{#1}}}
\newcommand{\R}{\mathbb{R}}
\newcommand{\Z}{\mathbb{Z}}
\newcommand{\Dx}{{\Delta_x}}
\newcommand{\Dt}{{\Delta_t}}
\title{The Power of Multiple Choices in Online Stochastic Matching}
\author{
    Zhiyi Huang
    \thanks{The University of Hong Kong. Email: zhiyi@cs.hku.hk.}
    \and
    Xinkai Shu
    \thanks{The University of Hong Kong. Email: xkshu@cs.hku.hk.}
    \and
    Shuyi Yan
    \thanks{Tsinghua University. Email: yansy18@mails.tsinghua.edu.cn.}
}
\date{March 2022}
\begin{document}

\begin{titlepage}
    \thispagestyle{empty}
    \maketitle
    \begin{abstract}
        \thispagestyle{empty}
        We study the power of multiple choices in online stochastic matching.
Despite a long line of research, existing algorithms still only consider two choices of offline neighbors for each online vertex because of the technical challenge in analyzing multiple choices.
This paper introduces two approaches for designing and analyzing algorithms that use multiple choices.
For unweighted and vertex-weighted matching, we adopt the online correlated selection (OCS) technique into the stochastic setting, and improve the competitive ratios to $0.716$, from $0.711$ and $0.7$ respectively.
For edge-weighted matching with free disposal, we propose the Top Half Sampling algorithm.
We directly characterize the progress of the whole matching instead of individual vertices, through a differential inequality.
This improves the competitive ratio to $0.706$, breaking the $1-\frac{1}{e}$ barrier in this setting for the first time in the literature. 
Finally, for the harder edge-weighted problem without free disposal, we prove that no algorithms can be $0.703$ competitive, separating this setting from the aforementioned three.

    \end{abstract}
\end{titlepage}

\section{Introduction}
\label{sec:intro}

Optimization problems in real-life scenarios often need to consider the uncertainties of what will happen in future.
In the multibillion business of online advertising, for example, a platform receives over time user queries, a.k.a., impressions, and selects advertisements to show immediately when each impression arrives without accurate knowledge of what impressions will come later.

If the algorithm knows nothing about future impressions, this is modeled by the online bipartite matching problem of \citet*{KarpVV:STOC:1990} and its variants.
Advertisers and impressions are vertices on the two sides of a bipartite graph.
Advertisers are known to the platform beforehand, and therefore are called the \emph{offline vertices}.
Impressions arrive over time and are the \emph{online vertices}.
Edges of the bipartite graph model which advertiser is interested in which impression, e.g., based on the search keywords.
When each online vertex arrives, the algorithm sees its incident edges and irrevocably decides how to match it right away.
\citet{KarpVV:STOC:1990} studied unweighted matching and gave a $1-\frac{1}{e}$ competitive Ranking algorithm.
It means that the expected size of the online matching is at least $1-\frac{1}{e}$ times the maximum matching size in hindsight.
They also proved it to be the best possible under worst-case analysis.

Online advertising platforms, however, have a lot of data which provide information about future impressions.
Can we use the information to break the $1-\frac{1}{e}$ barrier?
Modeling the information as a distribution from which each online impression is drawn, \citet{FeldmanMMM:FOCS:2009} proposed the online stochastic matching model and gave an affirmative answer to this question assuming integral arrival rates of online vertices.%
\footnote{That is, the expected number of each type of online vertex realized in the bipartite graph is a positive integer.}
Their algorithm builds on an idea called \emph{the power of two choices}, originally from online load balancing~\cite{AzarBKU:STOC:1994, Mitzenmacher:TPDS:2001}.
For each online vertex, the algorithm finds two candidate offline neighbors based on the distribution of online vertices, \emph{independent of which offline vertices have been matched}; it then matches the online vertex to the first unmatched candidate if there is any.
A long line of subsequent research not only improved the competitive ratio, but also generalized the results to vertex-weighted matching~\cite{BrubachSSX:EDA:2016, JailletL:MOR:2014, HuangS:STOC:2021} and edge-weighted matching~\cite{HaeuplerMZ:WINE:2011, BrubachSSX:EDA:2016}, and to arbitrary arrival rates~\cite{ManshadiOS:MOR:2012, JailletL:MOR:2014, HuangS:STOC:2021}.
Despite much progress, these improved algorithms are still under the above framework of the power of two choices.

This paper asks two natural questions:
(1) \emph{Why not examine multiple choices for each online vertex?}
(2) \emph{Why not focus on unmatched neighbors when selecting candidates?}
To our knowledge, previous algorithms did not make these obvious improvements mainly because of the difficulties in the analysis.
Existing analyses rely on fine-grained characterizations of the match probabilities of offline vertices.
On the one hand, restricting to two choices implies that each offline vertex can only be matched by two means: as the first candidate, or as the second candidate when the first is already matched.
The analyses then consider the probabilities of the two cases separately.
On the other hand, selecting candidates independent of past matched events ensures that the match events of different offline vertices are almost independent (e.g., \cite{JailletL:MOR:2014, HuangS:STOC:2021}).
The independence substantially simplifies the calculations of the offline vertices' matched probabilities.

\subsection{Our Contributions}
This paper proposes two approaches to overcome the above difficulties.
The first approach adopts a recent technique called \emph{online correlated selection} (OCS)~\cite{FahrbachHTZ:FOCS:2020, GaoHHNYZ:FOCS:2021, ShinA:arXiv:2021, BlancC:FOCS:2021} into the stochastic setting, so that we can make adaptive matching decisions with only unmatched neighbors as candidates, and can still analyze the matched probabilities of offline vertices.
The second approach sidesteps the characterization of individual offline vertices' matched probabilities, and considers the online algorithm's overall progress directly.
Both approaches lead to improved algorithms and competitive ratios in various settings of online stochastic matching.
We next briefly introduce our algorithms and techniques in the Poisson arrival model, in which online vertices of each type independently follow a Poisson process with time horizon $[0, 1]$.
This is asymptotically equivalent to the original online stochastic matching model~\cite{HuangS:STOC:2021}.
Table~\ref{tab:summary} gives a summary of our improved competitive ratios for general arrival rates;
see Subsection~\ref{sec:related-work} for previous results that assumed integral arrival rates.

\begin{table}[t]
    \renewcommand{\thefootnote}{\fnsymbol{footnote}}
    \renewcommand{\arraystretch}{1.1}
    \centering
    \caption{Summary of Results. We round down algorithmic results and round up hardness results, to three decimal places. The results in this paper are on the right of the arrows in bold.}
    \label{tab:summary}
    \begin{tabular}{l@{\hskip 24pt}l@{\hskip 6pt}l@{\hskip 6pt}l@{\hskip 24pt}l@{\hskip 6pt}l@{\hskip 6pt}l}
        \toprule
        & \multicolumn{3}{c}{Algorithms} & \multicolumn{3}{c}{Hardness} \\
        \midrule
        Unweighted & $0.711$~\cite{HuangS:STOC:2021} & $\to$ & $\mathbf{0.716}$~\textbf{(\Sref{sec:vertex-weighted-ocs})} & $0.823$~\cite{ManshadiOS:MOR:2012} \\
        Vertex-weighted & $0.700$~\cite{HuangS:STOC:2021} & $\to$ & $\mathbf{0.716}$~\textbf{(\Sref{sec:vertex-weighted-ocs})} & $0.823$~\cite{ManshadiOS:MOR:2012} \\
    Edge-weighted (Free Disposal) & $0.632$~\cite{FeldmanMMM:FOCS:2009} \footnotemark[2] & $\to$ & $\mathbf{0.706}$~\textbf{(\Sref{sec:edge-weighted-differential-inequality})} & $0.823$~\cite{ManshadiOS:MOR:2012} \\
        Edge-weighted & $0.632$~\cite{FeldmanMMM:FOCS:2009} \footnotemark[2] & & & $0.823$~\cite{ManshadiOS:MOR:2012} & $\to$ & $\mathbf{0.703}$~\textbf{(\Sref{sec:hardness-edge-weighted})}\\
        \bottomrule
        \multicolumn{7}{p{16cm}}{\footnotesize\footnotemark[2] Although \citet{FeldmanMMM:FOCS:2009} only analyzed unweighted matching, they effectively showed that every edge is matched with probability $1-\nicefrac{1}{e}$ times the LP variable, which is sufficient for edge-weighted matching as well.}
    \end{tabular}
\end{table}

\paragraph{Poisson Online Correlated Selection.}
Like previous works, we rely on the optimal solution of a linear program (LP) relaxation.
For any online vertex type $i$ and offline vertex $j$, the LP gives $0 \le x_{ij} \le 1$ as a reference of how likely a type $i$ online vertex should be match to $j$.
Intuitively, we may want to match a type $i$ online vertex to each unmatched neighbor $j$ with probability proportional to $x_{ij}$.
Inspired by the multi-way semi-OCS of \citet{GaoHHNYZ:FOCS:2021}, we propose the Poisson OCS which further adjusts the match probabilities based on the LP matched level $x_j = \sum_i x_{ij}$ of offline vertices $j$.
If a type $i$ online vertex arrives at time $t$, Poisson OCS matches it to each unmatched neighbor $j$ with probability proportional to $e^{t x_j} x_{ij}$.
The exponential weights come from an informal invariant.
If we match each type $i$ online vertex independently to a neighbor $j$ (matched or not) with probability $x_{ij}$, the probability that an offline vertex $j$ is still unmatched at time $t$ equals $e^{-t x_j}$.
Since Poisson OCS is better, this unmatched probability is at most $e^{-t x_j}$.
The expected mass of matching $i$ to $j$ at time $t$ in Poisson OCS is therefore at most $e^{t x_j} x_{ij} \cdot e^{-t x_j} = x_{ij}$.

\paragraph{Poisson Matching Linear Program Hierarchy.}
Our Poisson OCS analysis requires that the LP does not match the online vertices close to deterministically.
If every online vertex type $i$ is fully matched to a single offline neighbor, the competitive ratio is only $1-\frac{1}{e}$.
Fortunately, existing LPs in the literature satisfy this requirement.
We first study the Natural LP and a corresponding Converse Jensen Inequality of \citet{HuangS:STOC:2021}, which already allow us to improve the competitive ratio of vertex-weighted online stochastic matching from $0.7$ to $0.707$.
Moreover, we introduce a Poisson Matching LP Hierarchy that contains the Natural LP at its first level.
We give a polynomial-time separation oracle for the LP at any constant level, and thus show its computational tractability.
Finally, we prove a Converse Jensen Inequality for the second level Poisson Matching LP, and get the following result for unweighted and vertex-weighted matching.

\begin{informal}
    Poisson OCS with the second level Poisson Matching LP is a polynomial-time, $0.716$-competitive algorithm for unweighted and vertex-weighted online stochastic matching.
\end{informal}

\paragraph{Top Half Sampling.}
We next turn to edge-weighted matching.
First consider the \emph{free disposal} model which allows the algorithm to rematch a matched offline vertex to a heavier edge.
In online advertising, it means that we show an advertisement multiple times, but only charge the advertiser for the most valuable impression.
We design the Top Half Sampling algorithm.
When a type $i$ online vertex comes, first consider the marginal weight of matching it to each offline neighbor $j$:
if $j$ is currently matched to an edge with weight $w$, the marginal equals $\max \{w_{ij} - w, 0 \}$.
Next, we sort the offline neighbors by descending order of the marginals, double the LP matched probability $x_{ij}$, and truncate the excessive probabilities if the total exceeds $1$.
For example, if there are three neighbors sorted by the marginals and the $x_{ij}$'s equal $0.4$, $0.4$, and $0.2$, then the adjusted probabilities are $0.8$, $0.2$, and $0.0$.
That is, the algorithm focuses on the most valuable half of the neighbors.
Finally, match to an offline neighbor according to the adjusted probabilities.

\paragraph{Whole-match Analysis via Differential Inequality.}
Top Half Sampling biases towards some of the offline vertices by definition.
Therefore, some offline vertices are matched faster than what the LP solution indicates, and the others are matched slower.
This calls for an analysis that considers the progress of the whole matching instead of individual offline vertices.
Indeed, our analysis will characterize the whole-match progress by a linear differential inequality.
Let $\bar{\alg}(t)$ be the difference between the LP optimal and the algorithm's expected objective at any time $t$.
We will establish $c_0 \bar{\alg}(t) + c_1 \frac{d}{dt} \bar{\alg}(t) + c_2 \frac{d^2}{dt^2} \bar{\alg}(t) \le 0$ for some positive coefficients $c_0, c_1, c_2$.
Intuitively, it means that the farther the algorithm's objective is from the LP optimal (i.e., if $\bar{\alg}(t)$ is large), and the faster the marginal match rate decreases (i.e., if $\frac{d^2}{dt^2} \bar{\alg}(t)$ is large), the larger the current match rate would be (i.e., $\frac{d}{dt} \bar{\alg}(t)$ must be sufficiently negative).
This breaks the $1-\frac{1}{e}$ barrier in edge-weighted online stochastic matching with free disposal for the first time in the literature for arbitrary arrival rates. 

\begin{informal}
    \label{inf:edge-weighted}
    Top Half Sampling is a polynomial-time, $0.706$-competitive algorithm for edge-weighted online stochastic matching with free disposal.
\end{informal}

\paragraph{On Multiple Choices.}
Informally, for each online vertex, previous algorithms based on the power of two choices first sample two offline vertices according to the LP solution, \emph{oblivious to the previous matching decisions}.
These algorithms then match the online vertex to the first unmatched sampled offline vertex.
The natural extension is to sample more than two offline vertices in the first step.
In the limit when we sample an infinite number of offline vertices, it becomes sampling without replacement.
Poisson OCS is a sampling without replacement with adjusted marginals.
Top Half Sampling is also a sampling without replacement with adjusted marginals for any online type, unless the sum of LP variables corresponding to matched neighbors exceeds half the type's arrival rate.
In this sense, our algorithms use the power of multiple choices.

\paragraph{Hardness Results.}
For the harder edge-weighted model \emph{without free disposal}, we prove that no algorithms can achieve the competitive ratios in the above positive results.
This holds even if we have unlimited computational power.
Although this hardness does not rule out breaking $1-\frac{1}{e}$, it separates the edge-weighted problem without free disposal from the other three settings.

\begin{informal}
    No algorithm can be $0.703$-competitive for edge-weighted online stochastic matching without free disposal.
\end{informal}

Finally, we prove that no algorithm can be more than $0.706$%
\footnote{It is $1 - \frac{1}{1-\ln 2} \big( \frac{1}{2e} - \frac{\ln 2}{e^2} \big)$. We incorrectly round it down to stress it is the same ratio as in Informal Theorem~\ref{inf:edge-weighted}.}
competitive compared to the LP of \citet{JailletL:MOR:2014}.
Thus, more expressive LPs such as the Natural LP and Poisson Matching LPs are necessary for getting our improvements in unweighted and vertex-weighted matching.

\paragraph{Organization.}
We will first present Top Half Sampling in Section~\ref{sec:edge-weighted-differential-inequality} because of its simplicity, followed by Poisson OCS in Section~\ref{sec:vertex-weighted-ocs}.
These two sections are self-contained and therefore can be read separately.
Section~\ref{sec:hardness} will prove the hardness results.

\subsection{Other Related Work}
\label{sec:related-work}

The literature has obtained better competitive ratios when the online types have integral arrival rates.
The best known ratio for unweighted and vertex-weighted matching is $0.7299$, and the best result for the edge-weighted case is $0.705$, due to \citet*{BrubachSSX:EDA:2016}.
Therefore, our upper bound of $0.703$ for edge-weighted matching with general arrival rates shows that the general problem is strictly harder than the special case of integral arrival rates.
Further, \citet{BrubachSSX:EDA:2016} and \citet{HaeuplerMZ:WINE:2011} partially employed three choices in their algorithms for the special case of integral arrival rates.

The optimal Ranking algorithm of \citet{KarpVV:STOC:1990} 
for the worst-case model of online matching was generalized to vertex-weighted matching by \citet{AggarwalGKM:SODA:2011}.
\citet{FeldmanKMMP:WINE:2009} noted that no non-trivial competitive ratio is possible for edge-weighted online matching \emph{without free disposal}.
They introduced the free disposal model and gave an optimal $1-\frac{1}{e}$ competitive algorithm under a large-market assumption.
\citet{MehtaSVV:JACM:2007} introduced another notable variant of online matching called AdWords.
They gave an optimal $1-\frac{1}{e}$ competitive algorithm also assuming a large market.
These results were later simplified and unified under the online primal dual framework~\cite{DevanurJK:SODA:2013, BuchbinderJN:ESA:2007, DevanurHKMY:TEAC:2016}.
The aforementioned OCS technique led to the first algorithms that improve the $\frac{1}{2}$ competitive ratio of greedy in the general cases of edge-weighted matching~\cite{FahrbachHTZ:FOCS:2020, BlancC:FOCS:2021, ShinA:arXiv:2021, GaoHHNYZ:FOCS:2021} and AdWords~\cite{HuangZZ:FOCS:2020}.

The literature has also studied models between the worst-case and stochastic models, notably the unknown IID model and the random arrival model.
The former assumes that online vertices are IID like online stochastic matching, but the algorithm does not know the distribution.
The latter considers worst-case bipartite graphs like the worst-case model, but assumes a random arrival order of online vertices.
These models allow us to break the $1-\frac{1}{e}$ barrier in unweighted matching~\cite{MahdianY:STOC:2011, KarandeMT:STOC:2011} and vertex-weighted matching~\cite{HuangTWZ:TALG:2019, JinW:arXiv:2020}, to get a $\frac{1}{e}$ competitive edge-weighted algorithm without free disposal~\cite{KesselheimRTV:ESA:2013}, and even to be $1-\epsilon$ competitive in AdWords under a large-market assumption~\cite{DevanurJSW:JACM:2019}.

\section{Preliminaries}
\label{sec:prelim}


\paragraph{Poisson Arrival Model.}
Consider a set $I$ of online vertex types and a set $J$ of offline vertices.
Each online type $i$ arrives independently following a Poisson process with time horizon $0 \le t \le 1$ and arrival rate $\lambda_i$.
Define $\lambda_S = \sum_{i \in S} \lambda_i$ for any subset $S$ of online types.
We refer to $\Lambda = \lambda_I$ as the (expected) number of online vertices.
Online vertices of type $i$ are adjacent to a subset $J_i$ of offline vertices.
We denote the set of edges as $E = \{ (i, j) : i \in I, j \in J_i \}$, and let $I_j = \{ i \in I : (i, j) \in E \}$ be the set of online types adjacent to any offline vertex $j$.
Each edge $(i,j)$ has a positive weight $w_{ij}$.
We want to maximize the expected total weight of the matched edges.
The problem is \emph{vertex-weighted} if $w_{ij} = w_j$ for all edges $(i, j)$ in which case the objective becomes maximizing the expected total weight of the matched offline vertices.
It is \emph{unweighted} if $w_{ij} = 1$ for all edges $(i, j)$ in which case the objective is to maximize the expected cardinality of the matching.

\paragraph{(Original) Online Stochastic Matching Model.}
The model by \citet{FeldmanMMM:FOCS:2009} considers a fixed integral number of online vertices $\Lambda$.
The online vertices arrive one by one, each independently draws a type $i$ with probability $\frac{\lambda_i}{\Lambda}$.
The rest is the same as the Poisson arrival model.
\citet{HuangS:STOC:2021} proved an asymptotic equivalence between the two models.

\paragraph{Online Algorithms.}
An online algorithm immediately and irrevocably decides how to match each online vertex when it arrives.
This paper considers the worst-case competitive analysis which examines the ratio of an online algorithm's expected objective to the expected optimal objective in hindsight, i.e., that from always choosing the optimal matching based on the realized bipartite graph.
The competitive ratio of an online algorithm is the infimum of this ratio.
We also say that an online algorithm is $\Gamma$-competitive if its competitive ratio is at least $\Gamma$.

The problem admits \emph{free disposal} if each offline vertex can be matched multiple times, yet only the heaviest edge matched to it contributes to the objective.
This effectively allows the algorithm to dispose earlier lighter edges for free.
Note that free disposal is irrelevant in the vertex-weighted and unweighted problems since all edges adjacent to an offline vertex have the same weight.

\paragraph{Natural Linear Program.}
Like previous works, our analyses compare the algorithms' objectives to the optimal value of a linear program (LP) relaxation of the online stochastic matching problem.
Our starting point is the Natural LP introduced by \citet{HuangS:STOC:2021}
(see also \citet{TorricoAT:MP:2018} for a version for online stochastic matching with integral arrival rates), which is the tightest in the literature thus far:
\[
    \mbox{max}_{x \ge 0} \quad \sum_{(i,j)\in E} w_{ij} x_{ij} \quad \mbox{s.t.} \quad \forall i \in I: \sum_{j \in J_i} x_{ij}\leq\lambda_i ~,\quad \forall j \in J, \forall S \subseteq I_j : \sum_{i\in S} x_{ij} \leq 1-e^{-\lambda_S} 
    ~.
\]


Let $\opt$ be the optimal LP objective.
We abuse notation and let $x_{ij}$ also denote the optimal solution of the natural LP (and of the tighter LPs that we will introduce in this paper).
Further artificially define $x_{ij} = 0$ for any $(i, j) \notin E$ for notational simplicity.
Finally, let $\rho_{ij} = \frac{x_{ij}}{\lambda_i}$ which may be interpreted as the probability that an online vertex of type $i$ is matched to offline vertex $j$.

\begin{theorem}[c.f., \citet{HuangS:STOC:2021}]
    \label{thm:natural-lp-poly-time}
    The natural LP can be solved in polynomial time.
\end{theorem}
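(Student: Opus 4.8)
The plan is the textbook ellipsoid argument: the natural LP has only $|E|$ variables (one $x_{ij}$ per edge), so it suffices to give a polynomial-time separation oracle for its constraints. Given a candidate point $x \ge 0$, the constraints $\sum_{j \in J_i} x_{ij} \le \lambda_i$ number only $|I|$ and are checked directly. The real work is, for each offline vertex $j$, to decide whether some $S \subseteq I_j$ violates $\sum_{i \in S} x_{ij} \le 1 - e^{-\lambda_S}$; equivalently, whether
\[
    \max_{S \subseteq I_j} \Big( \sum_{i \in S} x_{ij} + e^{-\lambda_S} \Big) > 1 ~.
\]
Observe that $S = \emptyset$ always attains value exactly $1$, so the question is whether any set strictly beats the empty set.

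\textbf{Key structural lemma.} Fix $j$ and sort the types of $I_j$ in non-increasing order of the ratio $\frac{x_{ij}}{1 - e^{-\lambda_i}}$. Then every maximizer $S$ of $f(S) \defeq \sum_{i \in S} x_{ij} + e^{-\lambda_S}$ is a prefix of this order, so the oracle only needs to evaluate $f$ on the $|I_j|+1$ prefixes and, if the best value exceeds $1$, output the corresponding violated constraint. I would prove the lemma by an exchange argument. If a maximizer $S$ is not a prefix, then by counting there are $i \in S$ and $i' \notin S$ with $i'$ preceding $i$ in the sorted order. Optimality of $S$ gives $f(S \cup \{i'\}) \le f(S)$ and $f(S \setminus \{i\}) \le f(S)$; expanding these two inequalities yields $\frac{x_{i'j}}{1 - e^{-\lambda_{i'}}} \le e^{-\lambda_S}$ and $e^{-\lambda_S} \le \frac{x_{ij}}{e^{\lambda_i}-1} = \frac{x_{ij}\, e^{-\lambda_i}}{1 - e^{-\lambda_i}} < \frac{x_{ij}}{1 - e^{-\lambda_i}}$, hence $\frac{x_{i'j}}{1 - e^{-\lambda_{i'}}} < \frac{x_{ij}}{1 - e^{-\lambda_i}}$, contradicting that $i'$ precedes $i$.

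\textbf{Wrap-up and obstacle.} With the oracle in hand the rest is routine: the LP is feasible (take $x=0$) and its optimum is bounded (the singleton constraints give $x_{ij}\le 1$, so the objective is at most $\sum_{(i,j)\in E} w_{ij}$), and the ellipsoid method then returns an optimal basic solution in time polynomial in the input size, with the standard caveat that the transcendental coefficients $e^{-\lambda_S}$ are handled by running the oracle and the method to sufficient precision. One could even bypass the prefix lemma: $-f$ is submodular in $S$, since $S \mapsto \sum_{i \in S} x_{ij}$ is modular and $S \mapsto e^{-\lambda_S}$ is supermodular, so $\max_S f(S)$ is computable by submodular minimization; the lemma merely gives a simpler and faster oracle. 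The only genuine difficulty is the structural lemma, and the crux there is choosing the right sort key: ordering by $\frac{x_{ij}}{1-e^{-\lambda_i}}$ (rather than by $x_{ij}$ or by $\rho_{ij}=x_{ij}/\lambda_i$) is exactly what makes the two optimality inequalities compose into a contradiction.
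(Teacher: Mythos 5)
Your proof is correct and takes a genuinely different (and self-contained) route from the paper's. The paper states Theorem~\ref{thm:natural-lp-poly-time} with a citation to \citet{HuangS:STOC:2021}, but it does prove the generalization to the Poisson Matching LP hierarchy in Appendix~\ref{app:poisson-matching-lp-poly-time} (Lemma~\ref{lem:poisson-matching-lp-poly-time}), of which the Natural LP is the level-one case. Both your oracle and the paper's reduce the exponentially many constraints at a fixed $j$ to checking the prefixes of a sorted order of $I_j$, but the sort key and the proof that a maximizer is a prefix differ. The paper relaxes to fractional variables $\mu_i\in[0,\lambda_i]$ (Lemma~\ref{lem:poisson-matching-lp-poly-time-fractional}), uses that the constraint's left-hand side is linear and its right-hand side concave in $\lambda_S$, and sorts by $\rho_{ij}=x_{ij}/\lambda_i$, so the greedy fractional maximizer at each level $\sum_i\mu_i$ traces a piecewise-linear path whose vertices are the prefixes; convexity of the slack along that path pins the worst case on a prefix. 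You instead run a direct combinatorial exchange on $f(S)=\sum_{i\in S}x_{ij}+e^{-\lambda_S}$, sorting by $x_{ij}/(1-e^{-\lambda_i})$; the contradiction comes from the strict gap between the add-marginal denominator $1-e^{-\lambda_{i'}}$ and the remove-marginal denominator $e^{\lambda_i}-1$ (strict because $e^{\lambda}+e^{-\lambda}>2$ for $\lambda>0$). The two sort keys can genuinely disagree, so each of you is identifying a different valid prefix family. What the paper's route buys is that it extends verbatim to level $\ell\ge 2$, where the right-hand side $\sum_{k\le m}(1-P_{k-1}(\lambda_S))$ is still concave in $\lambda_S$ but an exchange computation would be much messier; what yours buys for the Natural LP is a purely discrete argument with no fractional relaxation, plus the neat fallback that $-f$ is submodular (modular plus the negative of a supermodular $e^{-\lambda_S}$), so a black-box submodular minimizer also works, albeit more slowly than a sort. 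One point worth making explicit in your write-up: the exchange needs $x_{ij}>0$ for every $i$ in the maximizer (if $x_{ij}=0$, removing $i$ strictly increases $f$, so such $i$ cannot appear), and this is exactly what makes the final inequality strict.
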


\begin{theorem}[Converse Jensen Inequality, c.f., \citet{HuangS:STOC:2021}]
    \label{thm:converse-jensen}
    For any convex $f : [0, 1] \to \R$ such that $f(0) = 0$, and any offline vertex $j \in J$ with $x_j = \sum_i x_{ij}$:
    \[
        \sum_{i \in I} \lambda_i f\Big( \frac{x_{ij}}{\lambda_i} \Big) \le \int_0^{-\ln(1-x_j)} f\big(e^{-\lambda}\big) d\lambda
        ~.
    \]
\end{theorem}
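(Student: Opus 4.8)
The plan is to reduce the claimed inequality to a one–dimensional majorization statement about non‑increasing \emph{matching densities}. Fix the offline vertex $j$, recall $\rho_{ij}=x_{ij}/\lambda_i$, and note that since $f(0)=0$ every online type $i\notin I_j$ contributes $0$ to the left‑hand side; discard also the types with $\rho_{ij}=0$. Sort the remaining types $i_1,\dots,i_m$ in non‑increasing order of $\rho_{ij}$, place them consecutively on $[0,\Lambda_j]$ with $\Lambda_j=\sum_{k=1}^m\lambda_{i_k}$ so that type $i_k$ occupies a sub‑interval of length $\lambda_{i_k}$, and let $h$ be the resulting non‑increasing step function on $[0,\Lambda_j]$, equal to $\rho_{i_k j}$ on type $i_k$'s sub‑interval; note $0<\rho_{ij}<1$ by the singleton LP constraint $x_{ij}\le 1-e^{-\lambda_i}$. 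Then the left‑hand side equals $\int_0^{\Lambda_j}f(h(s))\,ds$, and with $H(u)=\int_0^u h$ we have $H(\Lambda_j)=x_j$.

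First I would turn the LP constraints into a pointwise bound on $H$. For every prefix $S=\{i_1,\dots,i_k\}$ the Natural LP gives $H(u_k)=\sum_{i\in S}x_{ij}\le 1-e^{-\lambda_S}=1-e^{-u_k}$ at the breakpoint $u_k=\lambda_S$; between consecutive breakpoints $H$ is affine while $u\mapsto 1-e^{-u}$ is concave, so their difference is concave and hence non‑negative as soon as it is non‑negative at the endpoints, giving $H(u)\le 1-e^{-u}$ for all $u\in[0,\Lambda_j]$. The constraint for $S=\{i_1,\dots,i_m\}$ also gives $x_j\le 1-e^{-\Lambda_j}$, i.e. $T:=-\ln(1-x_j)\le\Lambda_j$. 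Now introduce the \emph{saturating} comparison profile $h^*(s)=e^{-s}$ on $[0,T]$ and $h^*(s)=0$ on $(T,\Lambda_j]$: it is non‑increasing, $H^*(u):=\int_0^u h^*$ equals $1-e^{-u}$ for $u\le T$ and $x_j$ for $u\ge T$, and $\int_0^{\Lambda_j}f(h^*(s))\,ds=\int_0^T f(e^{-s})\,ds$ is exactly the right‑hand side (the part on $(T,\Lambda_j]$ vanishes since $f(0)=0$). Hence $H(u)\le H^*(u)$ for all $u\in[0,\Lambda_j]$ — for $u\le T$ because $H(u)\le 1-e^{-u}$, for $u\ge T$ because $H(u)\le x_j$ — while $H(\Lambda_j)=H^*(\Lambda_j)=x_j$.

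It remains to deduce $\int_0^{\Lambda_j}f(h)\le\int_0^{\Lambda_j}f(h^*)$, i.e. that saturating the exponential constraints is the worst case; this is the heart of the argument. The cleanest route is the Hardy--Littlewood--P\'olya/Karamata majorization inequality: ``$H\le H^*$ pointwise with $H(\Lambda_j)=H^*(\Lambda_j)$'' is precisely the statement that $h^*$ majorizes $h$ among non‑increasing densities on $[0,\Lambda_j]$, whence $\int g(h)\le\int g(h^*)$ for every convex $g$. A self‑contained variant: writing $f(t)=\int_0^t f'(r)\,dr$ with $f'$ the non‑decreasing right derivative (finite at $h(s)$ for every $s$ since $0<h\le 1$), convexity gives $f(h^*(s))-f(h(s))\ge f'(h(s))\,(h^*(s)-h(s))$ pointwise, so integrating and then integrating by parts with $D(s)=H^*(s)-H(s)\ge 0$, $D(0)=D(\Lambda_j)=0$,
\[
\int_0^{\Lambda_j}\!\big(f(h^*)-f(h)\big)\,ds\ \ge\ \int_0^{\Lambda_j}f'(h(s))\,\big(h^*(s)-h(s)\big)\,ds\ =\ -\int_0^{\Lambda_j}D(s)\,d\big(f'(h(s))\big)\ \ge\ 0,
\]
the last inequality because $h$ non‑increasing and $f'$ non‑decreasing make $s\mapsto f'(h(s))$ non‑increasing, so its Stieltjes measure is non‑positive while $D\ge 0$.

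I expect the main obstacle to be exactly this comparison step, and within it two points need care: (i) recognizing that the extremal configuration is the prefix‑saturating profile $h^*$ and that it carries the \emph{same} total mass $x_j$ as $h$ — this is why the types must be sorted by $\rho_{ij}$, and why the concavity interpolation and the bound $T\le\Lambda_j$ are needed, since otherwise $H^*$ need not dominate $H$ throughout $[0,T]$; and (ii) running the majorization/integration‑by‑parts argument for a general convex $f$ that need not be monotone, where the hypothesis $f(0)=0$ is exactly what lets the two profiles — supported on intervals of possibly different lengths — be compared on a common domain.
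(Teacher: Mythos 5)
The paper cites this result from \citet{HuangS:STOC:2021} without reproducing a proof, so there is no in-paper argument to compare against. Your argument --- sorting the $\rho_{ij}$ non-increasingly, extending the prefix LP constraints from breakpoints to all $u$ by the concavity of $u \mapsto 1-e^{-u}$, and then comparing against the exponential-saturating profile $h^*$ via Hardy--Littlewood--P\'olya majorization (or your self-contained Abel-summation/integration-by-parts variant) --- is correct and is essentially the route taken in the cited source.
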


Denote function $\max \{x, 0\}$ as $x^+$.
Considering $f(x) = \big(x - \frac{1}{2}\big)^+$ gives the next corollary.

\begin{corollary}
    \label{cor:inverse-jensen}
    For any offline vertex $j \in J$, $\sum_{i \in I} \big( x_{ij} - \frac{\lambda_i}{2} \big)^+ \le \frac{1 - \ln 2}{2}$.
    %
    %
\end{corollary}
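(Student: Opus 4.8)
The plan is to apply the Converse Jensen Inequality (Theorem~\ref{thm:converse-jensen}) directly with the test function $f(x) = \big(x - \frac{1}{2}\big)^+$. First I would verify the hypotheses: $f$ is the pointwise maximum of the two affine functions $x \mapsto 0$ and $x \mapsto x - \frac{1}{2}$, hence convex on $[0,1]$, and $f(0) = \big(-\frac{1}{2}\big)^+ = 0$. So Theorem~\ref{thm:converse-jensen} is applicable to every offline vertex $j \in J$.

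Next I would evaluate the left-hand side of the inequality. Since $\lambda_i \ge 0$, we have $\lambda_i f\big(\frac{x_{ij}}{\lambda_i}\big) = \lambda_i \big(\frac{x_{ij}}{\lambda_i} - \frac{1}{2}\big)^+ = \big(x_{ij} - \frac{\lambda_i}{2}\big)^+$, so summing over $i \in I$ reproduces exactly the quantity $\sum_{i \in I}\big(x_{ij} - \frac{\lambda_i}{2}\big)^+$ that we wish to bound. (For any $i$ with $\lambda_i = 0$ we also have $x_{ij} = 0$, so that term vanishes and there is no division issue.)

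It then remains to bound the right-hand side $\int_0^{-\ln(1-x_j)} \big(e^{-\lambda} - \frac{1}{2}\big)^+ \, d\lambda$ by $\frac{1-\ln 2}{2}$, uniformly in $x_j$. The key observation is that the integrand $\big(e^{-\lambda} - \frac{1}{2}\big)^+$ is nonnegative and vanishes for $\lambda \ge \ln 2$; hence replacing the upper limit of integration by $\ln 2$ can only increase the value, so the right-hand side is at most $\int_0^{\ln 2} \big(e^{-\lambda} - \frac{1}{2}\big)\, d\lambda$. A one-line antiderivative computation, $\big[-e^{-\lambda} - \tfrac{\lambda}{2}\big]_0^{\ln 2} = \big(-\tfrac{1}{2} - \tfrac{\ln 2}{2}\big) - (-1) = \frac{1-\ln 2}{2}$, finishes the proof. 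There is essentially no real obstacle here — this is a genuine corollary — and the only point meriting a moment's care is the truncation of the interval of integration; one can add in passing that the bound is tight, attained precisely when $x_j \ge \frac{1}{2}$ (equivalently $-\ln(1-x_j) \ge \ln 2$).
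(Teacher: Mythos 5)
Your proof is correct and matches the paper's intended argument exactly: the paper derives the corollary by applying Theorem~\ref{thm:converse-jensen} with $f(x) = (x - \tfrac{1}{2})^+$, and you supply precisely the routine verifications (convexity, $f(0)=0$, algebraic rewriting of the left side, and the integral computation with the $\lambda \ge \ln 2$ truncation) that the paper leaves implicit.
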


\section{Edge-weighted Matching with Free Disposal}
\label{sec:edge-weighted-differential-inequality}

\subsection{Top Half Sampling}
\label{sec:top-half-sample}


Suppose that an online vertex of type $i$ arrives at time $0 \le t \le 1$.
Let $w_j(t)$ denote the maximum edge-weight matched to $j$ right before time $t$.
The \emph{marginal weight} of edge $(i, j)$ is:
\[
    w_{ij}(t) \defeq \big( w_{ij} - w_j(t) \big)^+
    ~.
\]

Further let $\succ_{i,t}$ be the total order over offline neighbors in descending order of the marginal weights, breaking ties arbitrarily.
Hence, if $j \ne j' \in J_i$ and $j \succ_{i,t} j'$ then $w_{ij}(t) \ge w_{ij'}(t)$.
Naturally $j \succeq_{i,t} j'$ means that either $j \succ_{i,t} j'$ or $j = j'$.

Finally, define $\sigma_{i,t} : [0, \lambda_i) \to J_i$ such that for any $\theta \in [0,\lambda_i)$, $\sigma_{i,t}(\theta)$ satisfies:
\[
    \sum_{j\in J_i: j \succ_{i,t} \sigma_{i,t}(\theta)} x_{ij} ~ \leq ~ \theta ~ < ~ \sum_{j\in J_i: j \succeq_{i,t} \sigma_{i,t}(\theta)} x_{ij}
    ~.
\]
If $\theta \geq \sum_{j\in J_i} x_{ij}$, artificially define $\sigma_{i,t}(\theta) =\,\perp$ as a dummy with zero marginal weight.
Intuitively, $\sigma_{i,t}(\theta)$ is the $\theta$-th heaviest neighbor of online vertex type $i$ at time $t$ where each edge occupies $x_{ij}$ ``slots''.
The algorithm matches the online vertex to $j = \sigma_{i,t}(\theta)$ for $\theta$ sampled uniformly from $[0, \frac{\lambda_i}{2})$.
See Figure~\ref{fig:top-half-sample} for an illustration.

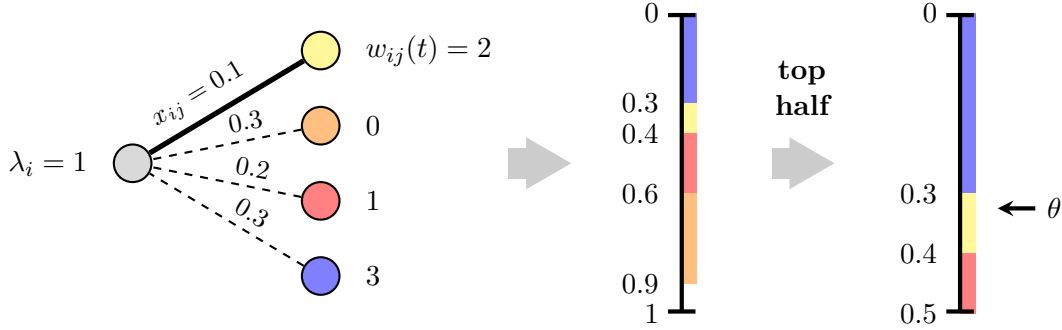
\begin{figure}[t]
    \centering
    \begin{tikzpicture}
        \tikzset{vertex/.style={draw=black,thick,fill=gray!30,circle,minimum width=.5cm}}
        \tikzset{edge/.style={draw=black,thick,dashed}}
        \begin{scope}
            \node[vertex,fill=yellow!50] (1) at (0,0) {};
            \node[right=.2cm of 1] {$w_{ij}(t)=2$};
            \node[vertex,fill=orange!50] (2) at (0,-1) {};
            \node[right=.2cm of 2] {$0$};
            \node[vertex,fill=red!50]    (3) at (0,-2) {};
            \node[right=.2cm of 3] {$1$};
            \node[vertex,fill=blue!50]   (4) at (0,-3) {};
            \node[right=.2cm of 4] {$3$};
            \node[vertex] (o) at (-2.5,-1.5) {};
            \node[left=.2cm of o] {$\lambda_i=1$};
            \path[edge,solid,line width=2pt] (1) edge node[pos=0.6,sloped,above] {\small $x_{ij} = 0.1$} (o);
            \path[edge] (2) edge node[pos=0.35,sloped,above] {\small $0.3$} (o);
            \path[edge] (3) edge node[pos=0.35,sloped,above] {\small $0.2$} (o);
            \path[edge] (4) edge node[pos=0.4,sloped,above] {\small $0.3$} (o);
        \end{scope}
        \begin{scope}[shift={(2.5,-1.5)}]
            \draw[draw=none,fill=gray!40] (0,-.2) rectangle +(.4,.4);
            \node[isosceles triangle,isosceles triangle apex angle=60,draw=none,fill=gray!40,minimum size=.6cm] at (.4,0) {};
        \end{scope}
        \begin{scope}[shift={(4.8,0)}]
            \draw[draw=none,fill=blue!50] (0,.5) rectangle +(.2,-1.2);
            \draw[draw=none,fill=yellow!50] (0,-0.7) rectangle +(.2,-0.4);
            \draw[draw=none,fill=red!50] (0,-1.1) rectangle +(.2,-0.8);
            \draw[draw=none,fill=orange!50] (0,.-1.9) rectangle +(.2,-1.2);
            \draw[|-|,black,ultra thick] (0,.5)--(0,-3.5);
            \node at (-0.4,0.5) {$0$};
            \node at (-0.55,-0.7) {$0.3$};
            \node at (-0.55,-1.1) {$0.4$};
            \node at (-0.55,-1.9) {$0.6$};
            \node at (-0.55,-3.1) {$0.9$};
            \node at (-0.4,-3.5) {$1$};
        \end{scope}
        \begin{scope}[shift={(6,-1.5)}]
            \node[align=center] at (.4,1) {\bf top\\\bf half};
            \draw[draw=none,fill=gray!40] (0,-.2) rectangle +(.4,.4);
            \node[isosceles triangle,isosceles triangle apex angle=60,draw=none,fill=gray!40,minimum size=.6cm] at (.4,0) {};
        \end{scope}
        \begin{scope}[shift={(8.5,0)}]
            \draw[draw=none,fill=blue!50] (0,.5) rectangle +(.2,-2.4);
            \draw[draw=none,fill=yellow!50] (0,-1.9) rectangle +(.2,-0.8);
            \draw[draw=none,fill=red!50] (0,-2.7) rectangle +(.2,-0.8);
            \draw[|-|,black,ultra thick] (0,.5)--(0,-3.5);
            \node at (-0.4,0.5) {$0$};
            \node at (-0.55,-1.9) {$0.3$};
            \node at (-0.55,-2.7) {$0.4$};
            \node at (-0.55,-3.5) {$0.5$};
            \draw[-stealth,ultra thick] (1,-2.1)--(.5,-2.1) node[right=.5cm] {$\theta$};
        \end{scope}
    \end{tikzpicture}
    \caption{Illustration of how Top Half Sampling matches an online vertex of type $i$ arriving at $t$.}
    \label{fig:top-half-sample}
\end{figure}

We may interpret the $1-\frac{1}{e}$ competitive Suggested Matching algorithm of \citet{FeldmanMMM:FOCS:2009} as sampling $\theta$ uniformly from $[0, \lambda_i)$, and defining $\sigma_{i,t}$ by an arbitrary order of offline neighbors, not necessarily the descending order of marginal weights.
Our algorithm, by contrast, samples from the top half of offline neighbors by the marginal weights.
We therefore call it Top Half Sampling.

\begin{tcolorbox}[beforeafter skip=10pt]
    \textbf{Top Half Sampling}\\[1ex]
    \emph{Input at the beginning:}
    \begin{itemize}[itemsep=0pt, topsep=4pt]
        \item Online types $I$, offline vertices $J$, edges $E$;
        \item Arrival rates $(\lambda_i)_{i \in I}$, edge weights $(w_{ij})_{(i,j) \in E}$;
        \item Fractional matching $(x_{ij})_{(i,j) \in E}$ that satisfies Corollary~\ref{cor:inverse-jensen}, e.g., the Natural LP solution.
    \end{itemize}
    \smallskip
    \emph{When an online vertex of type $i \in I$ arrives at time $0 \le t \le 1$:}
    \begin{itemize}[itemsep=0pt, topsep=4pt]
        \item Sample $\theta$ uniformly from $[0,\frac{\lambda_i}{2})$;
        \item Match the online vertex to $j=\sigma_{i,t}(\theta)$.
    \end{itemize}
\end{tcolorbox}

\subsection{Analysis}


\begin{theorem}
    \label{thm:top-half-sample}
    Top Half Sampling is $\Gamma$-competitive for edge-weighted online stochastic matching with free disposal where:
    \[
        \Gamma = 1 - \frac{1}{1-\ln 2} \Big( \frac{1}{2e} - \frac{\ln 2}{e^2} \Big) > 0.7062
        ~.
    \]
\end{theorem}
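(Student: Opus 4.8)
The plan is to track the algorithm's expected objective against $\opt$ over time, and to show that the gap $\bar\alg(t)$ satisfies a second-order linear differential inequality which forces it to be small. First I would set up notation: let $\alg(t)$ be the expected total weight of the matching produced by Top Half Sampling restricted to arrivals in $[0,t]$, and let $\bar\alg(t) = t \cdot \opt - \alg(t)$ be the deficit relative to the (linearly growing) LP benchmark; note $\bar\alg(0)=0$ and the competitive ratio is $1 - \bar\alg(1)/\opt$. The key is to lower-bound the instantaneous rate $\frac{d}{dt}\alg(t)$, i.e.\ the expected marginal weight gained from an arrival at time $t$. When a type-$i$ vertex arrives at time $t$, it is matched to $\sigma_{i,t}(\theta)$ for $\theta$ uniform in $[0,\lambda_i/2)$, so the expected marginal gain is $\frac{2}{\lambda_i}\int_0^{\lambda_i/2} \E\big[w_{i,\sigma_{i,t}(\theta)}(t)\big]\,d\theta$; summing over $i$ with weight $\lambda_i$ (the Poisson rate) gives $\frac{d}{dt}\alg(t)$.

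The heart of the argument is to relate this rate to the structure of the current matching. Because Top Half Sampling doubles the $x_{ij}$ mass on the heaviest neighbors and truncates, for each type $i$ the ``slots'' used are exactly the top $\lambda_i/2$ worth of $x$-mass in $\succ_{i,t}$-order. I would compare the marginal weight actually collected to a hypothetical run of the Suggested Matching algorithm (which samples $\theta$ uniformly from all of $[0,\lambda_i)$): the top-half restriction means we only ever collect the larger marginals, and the ``excess'' LP mass that we skip is controlled by Corollary~\ref{cor:inverse-jensen}, which bounds $\sum_i (x_{ij} - \lambda_i/2)^+ \le \tfrac{1-\ln 2}{2}$ for every offline $j$. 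The technical core is to convert the current expected matched weight $w_j(t)$ into a differential relation: the rate at which $j$'s matched weight can still grow is governed by how much LP mass is still being ``aimed'' at $j$ through its top-half slots, and the Corollary caps the total rate of this kind of beneficial reallocation. Carrying this through should yield, after summing over $j$ and integrating over the edge-weight levels, an inequality of the form $c_0\bar\alg(t) + c_1 \bar\alg'(t) + c_2 \bar\alg''(t) \le 0$ with explicit positive constants coming from the constant $\tfrac{1-\ln 2}{2}$ and the $\tfrac{1}{2}$-doubling.

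Once the differential inequality is in hand, the rest is routine ODE comparison: solve $c_0 y + c_1 y' + c_2 y'' = 0$ with the boundary data $y(0)=0$ and the natural condition at $t=0$ coming from $\bar\alg'(0)$ (the initial rate), obtaining $\bar\alg(t) \le y(t)$ by a standard maximum-principle / Gr\"onwall-type comparison; then evaluate at $t=1$ to get $\bar\alg(1) \le \big(\tfrac{1}{2e} - \tfrac{\ln 2}{e^2}\big)\cdot \tfrac{1}{1-\ln 2}\cdot\opt$, which is exactly the claimed bound $\Gamma = 1 - \tfrac{1}{1-\ln 2}\big(\tfrac{1}{2e} - \tfrac{\ln 2}{e^2}\big)$. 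I expect the main obstacle to be the second step: correctly accounting, in a \emph{whole-match} rather than per-vertex way, for the fact that doubling-and-truncating makes some offline vertices run ahead of and others behind the LP schedule, and showing that the aggregate loss from this skew is exactly what Corollary~\ref{cor:inverse-jensen} controls — in particular, setting up the right potential (summing over edge-weight thresholds $w$) so that the free-disposal upgrades telescope cleanly into the second derivative term $\bar\alg''$. Bounding $\bar\alg''$ from the dynamics of the marginal weights $w_j(t)$, and verifying the sign of $c_2$, is where the real work lies.
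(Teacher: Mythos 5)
Your high-level plan matches the paper's approach: establish a second-order linear differential inequality for the deficit $\bar{\alg}(t)$, aggregate over edge-weight thresholds for the free-disposal accounting, and close via ODE comparison. But there is a concrete error in the setup that would derail the argument, and the technical core is left open.

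The normalization of $\bar{\alg}$ is wrong. You set $\bar{\alg}(t) = t\cdot\opt - \alg(t)$, so $\bar{\alg}(0) = 0$, and then you claim a homogeneous inequality $c_0\bar{\alg} + c_1 \bar{\alg}' + c_2 \bar{\alg}'' \le 0$ with positive $c_i$ and propose to compare against a solution of the homogeneous ODE with $y(0)=0$ and $y'(0) = \bar{\alg}'(0) \le 0$. But this comparison is degenerate: with $y(0)=0$, $y'(0)\le 0$, and all $c_i>0$, the standard two-stage Gr\"onwall argument (first on $D = r_1 y + y'$, then on $y$, where $-r_1$ is the smaller root magnitude) forces $\bar{\alg}(t)\le 0$ for all $t$. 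That would say $\alg(1)\ge\opt$, i.e.\ the algorithm is $1$-competitive, which is absurd and far stronger than the claimed $\Gamma\approx 0.706$. The paper instead defines $\bar{\alg}(t) = \opt - \alg(t)$, so $\bar{\alg}(0)=\opt$ and $\bar{\alg}'(0)\le-\opt$; this nonzero initial condition is exactly what makes the homogeneous differential inequality yield a nontrivial decay $\bar{\alg}(1) \le \frac{\opt}{1-\ln 2}\big(\frac{1}{2e} - \frac{\ln 2}{e^2}\big)$. Equivalently, with your $\bar{\alg}$ the inequality would have to carry a time-dependent positive right-hand side $\big[(3+\ln 2) - (2+2\ln 2)(1-t)\big]\opt$, not $\le 0$; you cannot have both $\bar{\alg}(0)=0$ and the homogeneous form.

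Beyond that, the part you flag as ``where the real work lies'' is indeed the bulk of the paper's proof, and your sketch does not yet pin it down. The paper's derivation of the differential inequality hinges on the pointwise inequality for $p(\rho)=\min\{2\rho,1\}$ (Lemma~\ref{lem:amplify-sample-p}), which bounds $p(\rho_i)-\rho_i$ from below by a sum over neighbors of $p(\rho_i)-p(\rho_i-\rho_{ij})-(2\rho_{ij}-1)^+$, and then on an exchange/aggregation step (Lemma~\ref{lem:diff-eq-2}) that introduces auxiliary variables $z_{ij}$ sandwiched between a per-edge lower bound and the realized match rate $y_{ij}$. That sandwiching is what lets the $\Delta_j$ terms bounded by Corollary~\ref{cor:inverse-jensen} combine with the match-rate cap $y_j\le 2-\Delta_j$ to produce exactly the coefficients $(2+2\ln 2, 3+\ln 2, 1)$. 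Finally, for the edge-weighted case the second-order term requires a two-part split (a nonnegative ``core'' integrated over all weight levels, plus an error term controlled by $\Delta_{i'j}(w)\le\Delta_{i'j}$); your remark about ``telescoping into $\bar{\alg}''$'' gestures at this, but the specific decomposition is nontrivial and must be carried out to get the stated constants.
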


Recall that $\opt$ is the optimal LP objective.
For any time $t$, let $\alg(t)$ denote the objective of the algorithm right before time $t$, and further define $\bar{\alg}(t) = \opt - \alg(t)$.
Our main lemma is the next differential inequality.
We next prove Theorem~\ref{thm:top-half-sample} assuming the lemma's correctness, deferring the lemma's proof to the end of section.

\begin{lemma}
    \label{lem:diff-eq}
    For any $0 \le t \le 1$:
    \begin{equation}
        \label{eqn:amplify-sample-lde}
        (2 + 2 \ln 2)\,\E\,\bar{\alg}(t) + (3 + \ln 2)\,\frac{d}{dt}\,\E\,\bar{\alg}(t) + \frac{d^2}{dt^2}\,\E\,\bar{\alg}(t) \le 0
        ~.
    \end{equation}
\end{lemma}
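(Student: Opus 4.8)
The plan is to read off both derivatives of $\E\,\bar{\alg}(t)$ from the algorithm's dynamics and play a first‑order bound against a second‑order bound. Write the \emph{gain rate} $P(t)\defeq-\frac{d}{dt}\,\E\,\bar{\alg}(t)=\frac{d}{dt}\,\E\,\alg(t)$. Since type $i$ arrives at rate $\lambda_i$ and, upon arrival, Top Half Sampling matches to $\sigma_{i,t}(\theta)$ with $\theta$ uniform in $[0,\tfrac{\lambda_i}{2})$, gaining the marginal weight $w_{i,\sigma_{i,t}(\theta)}(t)$, one gets $P(t)=2\sum_{i}\E\,T_i(t)$ where $T_i(t)\defeq\int_0^{\lambda_i/2} w_{i,\sigma_{i,t}(\theta)}(t)\,d\theta$ and the expectation is over the random state $(w_j(t))_j$ just before $t$. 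Because every $w_j(t)$ is nondecreasing along a sample path, each marginal $w_{ij}(t)=(w_{ij}-w_j(t))^+$ is nonincreasing, so the sorted profile $\theta\mapsto w_{i,\sigma_{i,t}(\theta)}(t)$ is pointwise nonincreasing; hence $T_i(t)$ and $P(t)$ are nonincreasing, $\frac{d}{dt}\E\,\bar{\alg}(t)=-P(t)\le 0$, and $\frac{d^2}{dt^2}\E\,\bar{\alg}(t)=-P'(t)\ge 0$. So \eqref{eqn:amplify-sample-lde} is exactly the statement $P'(t)\ge(2+2\ln 2)\,\E\,\bar{\alg}(t)-(3+\ln 2)\,P(t)$: the gain rate cannot collapse unless it is already large or $\bar{\alg}$ is already small.

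For the first‑order bound I would split the top half off the full sorted profile: $T_i(t)=\tfrac12 m_i(t)+\tfrac12\,\mathrm{Bonus}_i(t)$, where $m_i(t)\defeq\int_0^{\lambda_i} w_{i,\sigma_{i,t}(\theta)}(t)\,d\theta=\sum_j x_{ij}(w_{ij}-w_j(t))^+$ and $\mathrm{Bonus}_i(t)\defeq\int_0^{\lambda_i/2}w_{i,\sigma_{i,t}(\theta)}(t)\,d\theta-\int_{\lambda_i/2}^{\lambda_i}w_{i,\sigma_{i,t}(\theta)}(t)\,d\theta\ge 0$ is the ``top‑minus‑bottom'' variation of the profile, i.e.\ the quantitative content of restricting to the heavy half. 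Dropping positive parts, $\sum_i m_i(t)\ge\sum_{(i,j)}x_{ij}w_{ij}-\sum_j x_j w_j(t)\ge\opt-\sum_j w_j(t)$, so with $B(t)\defeq\sum_i\E\,\mathrm{Bonus}_i(t)$ one obtains $P(t)=\sum_i\E\,m_i(t)+B(t)\ge\E\,\bar{\alg}(t)+B(t)$. This already gives $\frac{d}{dt}\E\,\bar{\alg}(t)+\E\,\bar{\alg}(t)\le-B(t)\le 0$, which is only the plain $1-\tfrac1e$ guarantee; to beat it, the extra $B(t)$ and the slack in $x_j\le 1$ must be converted into the second‑order inequality.

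For the second‑order bound I would track each offline vertex's remaining reservoir $r_j(t)\defeq\sum_i x_{ij}(w_{ij}-w_j(t))^+$, so that $\sum_j r_j(t)=\sum_i m_i(t)$. A match of an arriving type $i'$ to $j$ occurs at rate $\lambda_{i'}\pi_{i'j}(t)$, where $\pi_{i'j}(t)$ is the length of $j$'s slot inside $[0,\tfrac{\lambda_{i'}}{2})$ divided by $\tfrac{\lambda_{i'}}{2}$; it raises $w_j$ and drops $r_j$ by $\sum_i x_{ij}\min\!\bigl((w_{ij}-w_j(t))^+,(w_{i'j}-w_j(t))^+\bigr)\le x_j\,(w_{i'j}-w_j(t))^+$, i.e.\ by at most $x_j\le 1$ times the weight gained by that very match. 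Summing over events gives $-\frac{d}{dt}\sum_j\E\,r_j(t)\le P(t)$, hence $P'(t)\ge-P(t)+B'(t)$. Combining this with the first‑order bound and with Corollary~\ref{cor:inverse-jensen} — which bounds, for each $j$, the total over arriving types $i'$ of $(x_{i'j}-\tfrac{\lambda_{i'}}{2})^+$ by $\tfrac{1-\ln 2}{2}$, and thereby controls both how much the top‑half truncation can cost $j$ and how fast $B$ can drain — is what produces the precise coefficients; it is cleanest to package this through the factorization $\bigl(\tfrac{d}{dt}+2\bigr)\bigl(\tfrac{d}{dt}+1+\ln 2\bigr)\E\,\bar{\alg}(t)\le 0$, proving $\Psi'(t)+2\Psi(t)\le 0$ for $\Psi(t)\defeq\frac{d}{dt}\E\,\bar{\alg}(t)+(1+\ln 2)\,\E\,\bar{\alg}(t)$.

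The main obstacle is precisely the coupling the paper highlights: since $\succ_{i,t}$ orders neighbors by their \emph{current} marginals, the per‑event probabilities $\pi_{i'j}(t)$ and the drain $B'(t)$ of the top‑half advantage depend on the joint random state of all offline vertices, so each $w_j(t)$ is a jump process with state‑dependent rates and no offline vertex can be analyzed in isolation. The delicate point is to show that, after summing over all vertices and all arriving types, these interactions collapse to a clean inequality linking only $P'(t)$, $P(t)$ and $\E\,\bar{\alg}(t)$, with every loss charged to the top‑half truncation and hence bounded vertex‑by‑vertex by the $\tfrac{1-\ln 2}{2}$ of Corollary~\ref{cor:inverse-jensen}; squeezing the constants down to exactly $3+\ln 2$ and $2+2\ln 2$ — rather than the weaker $3$ and $2$ a crude charging would yield — is where the real work lies.
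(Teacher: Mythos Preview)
Your two bounds are sound but do not close. From $P\ge\E\,\bar{\alg}+B$ and $P'\ge -P+B'$ one reduces \eqref{eqn:amplify-sample-lde} to
\[
B'(t)+(2+2\ln 2)\,B(t)\ \ge\ \ln 2\cdot P(t)\,,
\]
and nothing in your sketch controls $B'$: the top--minus--bottom variation can drain at a rate comparable to $P$ itself (whenever a heavy neighbor is matched, the sorted profile collapses), so Corollary~\ref{cor:inverse-jensen} alone does not give this. Even the ``crude'' constants $2,3$ you mention would require $B'+2B\ge 0$, which is not established.

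The paper does not try to bound $B'$ at all. Instead it fixes $t$, conditions on the full state, and writes the three terms \emph{exactly} through the function $p(\rho)=\min\{2\rho,1\}$: the first-order term is $-\sum_i\lambda_i\,p(x_i/\lambda_i)$ and the second-order term is $\sum_j y_j\sum_i\lambda_i\big(p(x_i/\lambda_i)-p((x_i-x_{ij})/\lambda_i)\big)$. The whole proof then rests on a pointwise algebraic inequality for $p$ (Lemma~\ref{lem:amplify-sample-p}),
\[
p(\rho_i)-\rho_i \ \ge\ \tfrac12\sum_j\Big(p(\rho_i)-p(\rho_i-\rho_{ij})-(2\rho_{ij}-1)^+\Big)\,,
\]
which is exactly what your $B$ is the aggregate of, together with the bound $\Delta_j=\sum_i(2x_{ij}-\lambda_i)^+\le 1-\ln 2$ from Corollary~\ref{cor:inverse-jensen}. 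The combination is done via an auxiliary variable $z_{ij}$ sandwiched between the second-order increments and $y_{ij}$ (Lemma~\ref{lem:diff-eq-2}); this is where $(1+\ln 2)$ enters. For edge weights, the same argument is applied weight-level by weight-level, with $\Delta_j$ shared across levels. None of this goes through your $m/B$ split or through bounding the drain of $B$; the algebraic Lemma~\ref{lem:amplify-sample-p} and the $z_{ij}$ coupling are the missing ideas.
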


The following boundary conditions follow from the definitions of $\bar{\alg}(t)$ and Top Half Sampling. For completeness, Appendix~\ref{app:alg-init-rate} includes the proof of Lemma~\ref{lem:alg-init-rate}.

\begin{lemma}
    \label{lem:alg-init-val}
    $\bar{\alg}(0) = \opt$.
\end{lemma}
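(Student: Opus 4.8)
The plan is simply to unwind the definitions: this is a boundary condition with no real content. Recall $\bar{\alg}(t) = \opt - \alg(t)$, where $\alg(t)$ is the (random) total weight of the matching maintained by Top Half Sampling \emph{right before} time $t$, and $\opt$ is the fixed optimal value of the Natural LP. The first step is to observe that the Poisson arrival processes live on the horizon $[0,1]$, so right before time $0$ no online vertex has yet arrived and the algorithm's matching is empty; hence $\alg(0) = 0$ deterministically. Substituting into $\bar{\alg}(t) = \opt - \alg(t)$ at $t = 0$ gives $\bar{\alg}(0) = \opt$ (and hence $\E\,\bar{\alg}(0) = \opt$, as needed when integrating Lemma~\ref{lem:diff-eq}).

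The only things I would check are conventions, not substance. I would confirm that ``right before time $t$'' is read as the left limit, so that at $t=0$ it denotes the state prior to any arrival; that $\opt$ is treated as a scalar, so no expectation is taken on that side; and --- to be pedantic about the instant $t = 0$ --- that a Poisson process almost surely has no arrival at a fixed deterministic time, so ``at time $0$'' and ``right before time $0$'' agree here anyway. I do not expect any obstacle; the statement is immediate.

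For context, this initial value, together with the companion estimate on the initial match rate in Lemma~\ref{lem:alg-init-rate} and the second-order differential inequality of Lemma~\ref{lem:diff-eq}, is exactly the data needed to integrate forward and bound $\E\,\bar{\alg}(1)$, which in turn yields the competitive ratio $\Gamma$ of Theorem~\ref{thm:top-half-sample}. That integration, however, belongs to the proof of the theorem rather than to this lemma.
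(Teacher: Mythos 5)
Your proof is correct and matches the paper's treatment: the paper states this boundary condition follows directly from the definitions and gives no separate argument, exactly the unwinding you perform ($\alg(0)=0$ since no arrivals have occurred before time $0$, hence $\bar{\alg}(0)=\opt-0=\opt$).
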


\begin{lemma}
    \label{lem:alg-init-rate}
    $\frac{d}{dt} \E \bar{\alg}(t=0) \le - \opt$.
\end{lemma}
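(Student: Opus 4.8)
The plan is to compute the instantaneous rate at which the algorithm accrues weight at time $t=0$ and show it is at least $\opt$. Since $\bar{\alg}(t) = \opt - \alg(t)$ and $\alg(t)$ is nondecreasing, this is equivalent to showing $\frac{d}{dt}\E\,\alg(t=0) \ge \opt$. At time $0$, no offline vertex has been matched yet, so for every edge $(i,j)$ the marginal weight is $w_{ij}(0) = w_{ij}$. Thus the order $\succ_{i,0}$ sorts the neighbors of type $i$ by descending true edge weight $w_{ij}$, and $\sigma_{i,0}(\theta)$ is the $\theta$-th heaviest neighbor under the ``slot'' measure where edge $(i,j)$ occupies an interval of length $x_{ij}$.

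The key step is to evaluate the expected marginal gain when a type $i$ vertex arrives at time $0$. A type $i$ vertex arrives in $[0, dt)$ with probability $\lambda_i\,dt + o(dt)$, and conditioned on arrival the algorithm samples $\theta$ uniformly from $[0, \tfrac{\lambda_i}{2})$ and matches to $\sigma_{i,0}(\theta)$, collecting weight $w_{i\sigma_{i,0}(\theta)}$. Hence the rate of weight gain from type $i$ is
\[
    \lambda_i \cdot \E_{\theta \sim U[0, \lambda_i/2)}\big[w_{i\sigma_{i,0}(\theta)}\big]
    = \lambda_i \cdot \frac{2}{\lambda_i}\int_0^{\lambda_i/2} w_{i\sigma_{i,0}(\theta)}\,d\theta
    = 2\int_0^{\lambda_i/2} w_{i\sigma_{i,0}(\theta)}\,d\theta
    ~.
\]
Since $\sigma_{i,0}$ lists neighbors in \emph{descending} weight order and the full list $\int_0^{\lambda_i} w_{i\sigma_{i,0}(\theta)}\,d\theta = \sum_{j} w_{ij} x_{ij}$ (using $\sum_j x_{ij} \le \lambda_i$ and the dummy having zero weight), the top half integral dominates the average of the whole: $2\int_0^{\lambda_i/2} w_{i\sigma_{i,0}(\theta)}\,d\theta \ge \int_0^{\lambda_i} w_{i\sigma_{i,0}(\theta)}\,d\theta = \sum_{j} w_{ij} x_{ij}$. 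Summing over all online types $i$ gives $\frac{d}{dt}\E\,\alg(t=0) \ge \sum_{(i,j)\in E} w_{ij} x_{ij} = \opt$, which rearranges to the claimed bound.

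I expect the main subtlety to be a clean justification of the monotonicity inequality $2\int_0^{\lambda_i/2} g \ge \int_0^{\lambda_i} g$ for the nonincreasing step function $g(\theta) = w_{i\sigma_{i,0}(\theta)}$: this is immediate because on $[0,\lambda_i/2)$ the integrand is pointwise at least its value on the mirrored point in $[\lambda_i/2, \lambda_i)$. A second minor point is handling the $o(dt)$ terms rigorously — in particular ruling out the contribution of two or more type $i$ arrivals in $[0,dt)$, which is $O(dt^2)$ under the Poisson process — and confirming that at exactly $t=0$ the marginal weights equal the full edge weights so that no ``free disposal'' discounting occurs. Both are routine once the above structure is in place.
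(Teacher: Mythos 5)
Your proposal is correct and follows essentially the same approach as the paper's proof in Appendix~\ref{app:alg-init-rate}: compute the rate of weight gain from each type $i$ as $2\int_0^{\lambda_i/2} w_{i,\sigma_{i,0}(\theta)}\,d\theta$, invoke the monotonicity of $\theta \mapsto w_{i,\sigma_{i,0}(\theta)}$ to lower bound this by $\int_0^{\lambda_i} w_{i,\sigma_{i,0}(\theta)}\,d\theta = \sum_{j} w_{ij}x_{ij}$, and sum over $i$ to recover $\opt$. The minor subtleties you flag (higher-order Poisson terms, marginal weights equalling true weights at $t=0$) are indeed routine and implicitly absorbed in the paper's derivation.
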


\begin{proof}[Proof of Theorem~\ref{thm:top-half-sample}]
    By the differential inequality and boundary conditions in Lemmas \ref{lem:diff-eq}, \ref{lem:alg-init-val} and \ref{lem:alg-init-rate}:
    \[
        \frac{\E \bar{\alg}(1)}{\opt} \le \frac{1}{1-\ln 2} \Big( \frac{1}{2e} - \frac{\ln 2}{e^2} \Big)
        ~,
    \]
    with equality achieved when for any $0 \le t \le 1$:
    \[
        \frac{\E \bar{\alg}(t)}{\opt} = \frac{1}{1-\ln 2} \Big( \frac{1}{(2e)^t} - \frac{\ln 2}{e^{2t}} \Big)
        ~.
    \]

    For completeness, Appendix~\ref{app:edge-weighted} includes this standard analysis of differential inequality.
    Now the theorem follows since the expected objective of Top Half Sampling is $\E\,\alg(1) = \opt - \E\,\bar{\alg}(1)$.
\end{proof}

\subsubsection{Proof of Lemma~\ref{lem:diff-eq}: Unweighted Case}
\label{sec:ths-analysis-unweighted}

It is instructive to first consider unweighted matching.
The unweighted assumption is only used to relate the differential inequality with our variables;
\emph{all inequalities and lemmas about the variables themselves continue to hold in the edge-weighted case.}
Generalizing to the edge-weighted case will be relatively straightforward given these ingredients.

For any time $t$ and any edge $(i, j)$, let $x_{ij}(t) = x_{ij}$ if $j$ is unmatched right before time $t$, and $0$ otherwise.
Let $x_i(t) = \sum_{j \in J_i} x_{ij}(t)$ for any online vertex type $i$.
%
%
%
Suppose that an online vertex of type $i$ arrives at time $t$.
Consider function $p(\rho) = \min \{ 2 \rho, 1\}$.
By definition, Top Half Sampling at most doubles the LP match probability, and therefore matches this online vertex with probability $p(\frac{x_i(t)}{\lambda_i})$.
Since type $i$ vertices arrive at rate $\lambda_i$, they contribute $\lambda_i \, p(\frac{x_i(t)}{\lambda_i})$ to the algorithm's match rate (i.e., $-\frac{d}{dt}\bar{\alg}(t)$).
Further let $y_{ij}(t)$ be edge $(i, j)$'s contribution.
We have:
%
\begin{equation}
    \label{eqn:amplify-sample-y-1}
    \sum_{j \in J_i} y_{ij}(t) = \lambda_i\,p \Big( \frac{x_i(t)}{\lambda_i} \Big)
    ~.
\end{equation}

The definition of Top Half Sampling further implies:
\begin{equation}
    \label{eqn:amplify-sample-y-2}
    \underbrace{\vphantom{\bigg[}\lambda_i\,p\Big(\frac{x_i(t)}{\lambda_i}\Big)}_\text{match rate of $i$} - \underbrace{\vphantom{\bigg[}\lambda_i\,p\Big(\frac{x_i(t)}{\lambda_i}-\frac{x_{ij}(t)}{\lambda_i}\Big)}_\text{max match rate of $(i,j')$ for $j' \ne j$} \le \underbrace{\vphantom{\bigg[}~y_{ij}(t)~}_\text{match rate of $(i,j)$} \le \underbrace{\vphantom{\bigg[}\min \big\{ 2 x_{ij}(t), \lambda_i \big\}}_\text{doubled LP match rate, capped by $\lambda_i$}
    %
\end{equation}



We next prove the main lemma about function $p$.

\begin{lemma}
    \label{lem:amplify-sample-p}
    For any $0 \le \rho_i \le 1$ and any $\rho_{ij} \ge 0$ such that $\sum_{j \in J_i} \rho_{ij} = \rho_i$:
    \[
        p(\rho_i) - \rho_i \ge \frac{1}{2} \sum_{j \in J_i} \Big( p(\rho_i) - p(\rho_i - \rho_{ij}) - (2\rho_{ij}-1)^+ \Big)
        ~.
    \]
\end{lemma}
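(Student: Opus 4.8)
The plan is to treat the claimed inequality as a statement purely about the one-dimensional function $p(\rho)=\min\{2\rho,1\}$ together with the way the mass $\rho_i$ is partitioned into the pieces $\rho_{ij}$. The key observation is that $p$ is concave (it is the minimum of two affine functions), so that for each $j$ the quantity $p(\rho_i)-p(\rho_i-\rho_{ij})$ behaves subadditively as we sum over $j$: the ``marginal gains'' $p(\rho_i)-p(\rho_i-\rho_{ij})$ overcount the actual total rise $p(\rho_i)-p(0)=p(\rho_i)$ by a controlled amount. Concretely, I would first record the elementary fact that, writing $\rho_i=\sum_j \rho_{ij}$, one has
\[
\sum_{j \in J_i}\big(p(\rho_i)-p(\rho_i-\rho_{ij})\big)\le p(\rho_i),
\]
which holds for any concave $p$ with $p(0)=0$ by telescoping along a prefix decomposition and using concavity to compare each marginal increment with the corresponding chord. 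Granting this, the right-hand side of the lemma is at most
\[
\tfrac12\Big(p(\rho_i)-\sum_{j\in J_i}(2\rho_{ij}-1)^+\Big),
\]
so it would suffice to show $p(\rho_i)-\rho_i\ge \tfrac12\,p(\rho_i)-\tfrac12\sum_j(2\rho_{ij}-1)^+$, i.e.
\[
\tfrac12\,p(\rho_i)+\tfrac12\sum_{j\in J_i}(2\rho_{ij}-1)^+\ \ge\ \rho_i.
\]

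The second step is to verify this last inequality by splitting on the value of $\rho_i$. If $\rho_i\le\tfrac12$ then $p(\rho_i)=2\rho_i$ and the left side is already $\ge\rho_i$ from the first term alone, so the inequality holds with the $(\cdot)^+$ terms only helping. If $\rho_i>\tfrac12$ then $p(\rho_i)=1$ and we need $\tfrac12+\tfrac12\sum_j(2\rho_{ij}-1)^+\ge\rho_i$, equivalently $\sum_j(2\rho_{ij}-1)^+\ge 2\rho_i-1$. This is exactly the kind of bound one gets by noting $(2\rho_{ij}-1)^+\ge 2\rho_{ij}-1$ and summing — but summing $2\rho_{ij}-1$ over all $j$ gives $2\rho_i-|J_i|$, which is too weak, so instead I would sum $(2\rho_{ij}-1)^+\ge 2\rho_{ij}-1$ only over the indices with $\rho_{ij}\ge\tfrac12$ while using the trivial bound $(2\rho_{ij}-1)^+\ge0$ on the rest; one then has to check that the ``lost'' mass $\sum_{j:\rho_{ij}<1/2}\rho_{ij}$ is small enough. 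Here the crucial structural fact is that at most one index can have $\rho_{ij}>\tfrac12$ (since the $\rho_{ij}$ sum to $\rho_i\le 1$ only if... actually they can sum to more than $1$ — wait, $\sum_j\rho_{ij}=\rho_i\le 1$), hence in fact $\sum_j(2\rho_{ij}-1)^+$ is either $0$ or equals $(2\rho_{ij^*}-1)$ for the unique large index $j^*$, and a short case analysis of the configuration of $\rho_{ij}$'s closes the bound.

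The main obstacle I anticipate is the tightest regime $\rho_i>\tfrac12$ with the mass spread so that no single $\rho_{ij}$ exceeds $\tfrac12$: then $\sum_j(2\rho_{ij}-1)^+=0$ and the reduced inequality $\tfrac12\ge\rho_i$ fails, which means the crude bound via $\sum_j(p(\rho_i)-p(\rho_i-\rho_{ij}))\le p(\rho_i)$ is \emph{too lossy} and cannot by itself prove the lemma. So the real argument must keep the marginal terms $p(\rho_i)-p(\rho_i-\rho_{ij})$ intact rather than bounding their sum by $p(\rho_i)$: when all $\rho_{ij}<\tfrac12$ and $\rho_i>\tfrac12$, each $p(\rho_i-\rho_{ij})$ can be strictly less than $1=p(\rho_i)$ precisely when $\rho_i-\rho_{ij}<\tfrac12$, and one must track exactly how many indices achieve this and by how much. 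I would therefore organize the proof as a direct case analysis on $\rho_i$ and on the partition, handling (a) $\rho_i\le\tfrac12$, (b) $\rho_i>\tfrac12$ with some $\rho_{ij}\ge\tfrac12$, and (c) $\rho_i>\tfrac12$ with all $\rho_{ij}<\tfrac12$, and in case (c) computing $\sum_j\big(p(\rho_i)-p(\rho_i-\rho_{ij})\big)$ explicitly as $2\sum_{j:\rho_{ij}>\rho_i-1/2}(\rho_{ij}-(\rho_i-\tfrac12))$ plus $0$ from the remaining indices, then optimizing over feasible partitions to show the inequality is still satisfied (with equality at the symmetric two-piece split $\rho_{i1}=\rho_{i2}=\rho_i/2$ near $\rho_i=1$). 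Verifying case (c) is where the genuine work lies; cases (a) and (b) are immediate.
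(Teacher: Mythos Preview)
Your final plan---the three-case split into (a) $\rho_i\le\tfrac12$, (b) $\rho_i>\tfrac12$ with some $\rho_{ij}\ge\tfrac12$, and (c) $\rho_i>\tfrac12$ with all $\rho_{ij}<\tfrac12$, handling (c) by optimizing the convex sum $\sum_j(1-p(\rho_i-\rho_{ij}))$ over feasible partitions---is exactly the paper's proof; you correctly recognized that your initial concavity bound $\sum_j(p(\rho_i)-p(\rho_i-\rho_{ij}))\le p(\rho_i)$ is too lossy and must be abandoned. The paper identifies the extremal configuration in (c) as $\rho_{ij_1}=\tfrac12$, $\rho_{ij_2}=\rho_i-\tfrac12$ (a vertex of the feasible region, reached via convexity of each summand in $\rho_{ij}$) rather than your symmetric split, but both yield the same value $1-\rho_i$.
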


\begin{proof}
    If $\rho_i \le \frac{1}{2}$, we have $p(\rho) = 2\rho$ on both sides, and $(2\rho_{ij} - 1)^+ = 0$ for all $j$.
    Both sides are $\rho_i$.

    Next suppose that $\rho_i > \frac{1}{2}$.
    The left-hand-side then equals $1 - \rho_i$.

    If $\rho_{ij} > \frac{1}{2}$ for some $j$, for all other $j' \ne j$ we have $\rho_i - \rho_{ij'} > \frac{1}{2}$ and $p(\rho_i) = p(\rho_i - \rho_{ij'}) = 1$.
    Hence, the right-hand-side equals $\frac{1}{2} \big( 1 - 2(\rho_i - \rho_{ij}) - (2\rho_{ij}-1) \big) = 1 - \rho_i$.
    
    Otherwise, the right-hand-side is $\sum_j \big( 1 - p(\rho_i - \rho_{ij}) \big)$.
    Since $1 - p(\rho_i - \rho_{ij})$ is convex in $\rho_{ij}$, it is maximized when there are $j_1, j_2$ with $\rho_{ij_1} = \frac{1}{2}$ and $\rho_{ij_2} = \rho_i - \frac{1}{2}$, with maximum value $1 - \rho_i$.
\end{proof}

\paragraph{Proof of Eqn.~\eqref{eqn:amplify-sample-lde}.}
%
We shall prove it for any time $t$ even conditioned on the randomness before time $t$ both in the Poisson arrivals and in the algorithm.
The remaining argument omits $t$ in the variables for notational simplicity.
Since $y_{ij}$ is the match rate of edge $(i, j)$, let $y_j = \sum_i y_{ij}$ be the total match rate of offline vertex $j$.
Next we write the three terms of the differential inequality by our notations.
The zeroth-order term satisfies:
\[
    \bar{\alg} \le \sum_{i \in I} x_i = \sum_{(i,j) \in E} x_{ij}
    ~.
\]
It holds with equality at time $0$.
Further, the right-hand-side decreases by at most $1$ when an offline vertex is matched, and the left decreases by exactly $1$.
The first-order and second-order terms are:
\begin{align*}
    \frac{d}{dt} \bar{\alg} & ~=~ - \, \sum_{j \in J} y_j
    \\
    \frac{d^2}{dt^2} \bar{\alg} & ~=~ \sum_{j \in J} 
    \underbrace{\vphantom{\bigg[}y_j}_\text{match rate of $j$}
        \sum_{i \in I_j} ~ \bigg( \underbrace{\vphantom{\bigg[}\lambda_i \, p \Big(\frac{x_i}{\lambda_i}\Big)}_\text{current match rate of $i$} - \underbrace{\vphantom{\bigg[}\lambda_i \, p \Big(\frac{x_i}{\lambda_i} - \frac{x_{ij}}{\lambda_i}\Big) }_\text{match rate of $i$ after matching $j$} \bigg)
    ~.
\end{align*}

Writing $\Delta_{ij} = (2x_{ij}-\lambda_i)^+$ and $\Delta_j = \sum_{i \in I_j} \Delta_{ij}$,
Equation~\eqref{eqn:amplify-sample-y-2} implies that:
\begin{equation}
    \label{eqn:edge-weighted-y-upper-bound}
    y_j
    \le \sum_{i \in I_j} \min \big\{ 2 x_{ij}, \lambda_i \big\}
    = 2 \sum_{i \in I_j} x_{ij} - \sum_{i \in I_j} \Delta_{ij}
    \le 2 - \Delta_j
    ~.
\end{equation}

We can therefore relax the second-order term as follows.
This relaxation is superfluous in unweighted matching but will be necessary in the edge-weighted case.
\begin{align*}
    \frac{d^2}{dt^2} \bar{\alg}
    &
    = \sum_{j \in J} y_j \sum_{i \in I_j} \bigg( \lambda_i \, p \Big(\frac{x_i}{\lambda_i}\Big) - \lambda_i \, p \Big(\frac{x_i}{\lambda_i}-\frac{x_{ij}}{\lambda_i}\Big) - \Delta_{ij} \bigg) + \sum_{j \in J} y_j \Delta_j
    \\
    &
    \le \sum_{j \in J} \big(2 - \Delta_j \big) \sum_{i \in I_j} \bigg( \lambda_i \, p \Big(\frac{x_i}{\lambda_i}\Big) - \lambda_i \, p \Big(\frac{x_i}{\lambda_i}-\frac{x_{ij}}{\lambda_i}\Big) - \Delta_{ij} \bigg) + \sum_{j \in J} y_j \Delta_j
    ~.
\end{align*}

Finally, Corollary~\ref{cor:inverse-jensen} can be restated as for any $j \in J$:
\begin{equation}
    \label{eqn:amplify-sample-delta}
    \Delta_j \le 1 - \ln 2
    ~.
\end{equation}

Replacing the three terms of differential inequality~\eqref{eqn:amplify-sample-lde} by the above relaxations and rearranging terms, it reduces to the following lemma.




    %
    %
    %

%

    %


\begin{lemma}
    \label{lem:diff-eq-2}
    Suppose that nonnegative $(x_{ij})_{(i,j)\in E}$, $(y_{ij})_{(i,j)\in E}$ and $(\Delta_j)_{j \in J}$ satisfy Equations \eqref{eqn:amplify-sample-y-1}, \eqref{eqn:amplify-sample-y-2}, and \eqref{eqn:amplify-sample-delta}.
    Further suppose that $x_i = \sum_j x_{ij} \le \lambda_i$, $y_j = \sum_i y_{ij}$, and $\Delta_{ij} = (2 x_{ij} - \lambda_i)^+$.%
    \footnote{This lemma does not assume $\sum_i \Delta_{ij} = \Delta_j$, an important flexibility that will be useful in the edge-weighted case.}
    Then:
    \[
        \sum_{j \in J} \big( 2 - \Delta_j \big) \sum_{i \in I_j} \bigg( \lambda_i\,p\Big(\frac{x_i}{\lambda_i}\Big) - \lambda_i\,p\Big(\frac{x_i}{\lambda_i}-\frac{x_{ij}}{\lambda_i}\Big) - \Delta_{ij} \bigg) + \sum_{j \in J} y_j \big( \Delta_j - 1 + \ln 2 \big) \le (2+2\ln 2) \sum_{(i,j) \in E} \big( y_{ij} - x_{ij} \big)
        ~.
    \]
\end{lemma}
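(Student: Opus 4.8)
The plan is to introduce the shorthand $q_{ij} \defeq \lambda_i p\big(\tfrac{x_i}{\lambda_i}\big) - \lambda_i p\big(\tfrac{x_i-x_{ij}}{\lambda_i}\big) - \Delta_{ij}$ for the inner summand of the first term, together with $Q_j \defeq \sum_{i\in I_j} q_{ij}$, and then to deduce everything from just two facts: a single scalar inequality $\sum_{j\in J} Q_j \le 2\sum_{(i,j)\in E}(y_{ij}-x_{ij})$, and a pointwise sign condition $y_j\ge Q_j$ for every $j\in J$. For the scalar inequality I would apply Lemma~\ref{lem:amplify-sample-p} with $\rho_i=\tfrac{x_i}{\lambda_i}\in[0,1]$ and $\rho_{ij}=\tfrac{x_{ij}}{\lambda_i}$ (so $\sum_j\rho_{ij}=\rho_i$), multiply through by $2\lambda_i$, and use $\lambda_i(2\rho_{ij}-1)^+=(2x_{ij}-\lambda_i)^+=\Delta_{ij}$ to obtain $\sum_{j\in J_i} q_{ij}\le 2\big(\lambda_i p(\tfrac{x_i}{\lambda_i})-x_i\big)$; by Equation~\eqref{eqn:amplify-sample-y-1} the right-hand side equals $2\sum_{j\in J_i}(y_{ij}-x_{ij})$, and summing over $i\in I$ gives the claimed scalar bound. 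The sign condition is immediate from the left inequality in Equation~\eqref{eqn:amplify-sample-y-2}: $y_{ij}\ge \lambda_i p(\tfrac{x_i}{\lambda_i})-\lambda_i p(\tfrac{x_i-x_{ij}}{\lambda_i})=q_{ij}+\Delta_{ij}\ge q_{ij}$, hence $y_j=\sum_{i\in I_j}y_{ij}\ge Q_j$.

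The heart of the argument is then a regrouping of the left-hand side of the lemma. Collecting the $\Delta_j$-terms,
\[
\sum_{j\in J}(2-\Delta_j)Q_j + \sum_{j\in J}y_j(\Delta_j-1+\ln 2) = \sum_{j\in J}\Big(2Q_j + \Delta_j(y_j-Q_j) - (1-\ln 2)\,y_j\Big).
\]
Now use $0\le\Delta_j\le 1-\ln 2$ (Equation~\eqref{eqn:amplify-sample-delta}) together with $y_j-Q_j\ge 0$ to bound $\Delta_j(y_j-Q_j)\le(1-\ln 2)(y_j-Q_j)$; the two $(1-\ln 2)\,y_j$ contributions then cancel and each summand is at most $(1+\ln 2)Q_j$. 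Summing over $j\in J$ and invoking the scalar inequality (with $1+\ln 2>0$) yields $(1+\ln 2)\sum_{j\in J}Q_j\le(2+2\ln 2)\sum_{(i,j)\in E}(y_{ij}-x_{ij})$, which is exactly the desired conclusion.

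I expect the only genuine obstacle to be spotting the right way to combine terms. The naive route — bounding $\sum_j\Delta_j Q_j$ and $\sum_j y_j(\Delta_j-1+\ln 2)$ separately and then plugging in $\sum_j Q_j\le 2\sum_{(i,j)\in E}(y_{ij}-x_{ij})$ — overshoots, because $2\sum_{(i,j)\in E}(y_{ij}-x_{ij})$ is already larger than the target $(2+2\ln 2)\sum_{(i,j)\in E}(y_{ij}-x_{ij})$ and the slack $\Delta_j\le 1-\ln 2$ cannot be recovered term by term in a useful direction. What makes it work is pairing $+\Delta_j y_j$ with $-\Delta_j Q_j$ so that the common factor $y_j-Q_j$ is nonnegative (this nonnegativity being precisely the content of the left inequality in Equation~\eqref{eqn:amplify-sample-y-2}); after that the $(1-\ln 2)$-weighted pieces telescope away. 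As the footnote to the lemma anticipates, the argument never uses any relation between $\Delta_j$ and $\sum_{i\in I_j}\Delta_{ij}$ — only $\Delta_j\le 1-\ln 2$ and nonnegativity — which is the flexibility needed later for the edge-weighted reduction.
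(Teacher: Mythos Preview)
Your argument is correct and uses the same three ingredients as the paper---Lemma~\ref{lem:amplify-sample-p}, the two halves of Equation~\eqref{eqn:amplify-sample-y-2}, and $\Delta_j\le 1-\ln 2$---but the execution is more direct. The paper constructs auxiliary variables $z_{ij}$ sandwiched between $q_{ij}$ and $y_{ij}$ with the exact sum $\sum_j z_{ij}=2\sum_j(y_{ij}-x_{ij})$, then reduces the claim to $\sum_j(1-\ln 2-\Delta_j)\sum_i(y_{ij}-z_{ij})\ge 0$. You bypass the $z_{ij}$ construction entirely: after the regrouping $\sum_j\big(2Q_j+\Delta_j(y_j-Q_j)-(1-\ln 2)y_j\big)$ and the bound $\Delta_j(y_j-Q_j)\le(1-\ln 2)(y_j-Q_j)$, the $y_j$-terms cancel and only the aggregate inequality $\sum_j Q_j\le 2\sum_{(i,j)}(y_{ij}-x_{ij})$ is needed at the end. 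This is the same rearrangement in spirit---both hinge on the nonnegativity of $(1-\ln 2-\Delta_j)$ times $(y_j-\text{something}\le y_j)$---but your version shows that the per-edge interpolation $z_{ij}$ is unnecessary; the summed bound from Lemma~\ref{lem:amplify-sample-p} suffices.
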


\begin{proof}
    Consider any $\big( z_{ij} \big)_{(i,j) \in E}$ that satisfies $\sum_j z_{ij} = 2 \sum_j ( y_{ij} - x_{ij} )$ for all $i \in I$ and:
    \begin{equation}
        \label{eqn:amplify-sample-z}
        \lambda_i\,p\Big(\frac{x_i}{\lambda_i}\Big) - \lambda_i\,p\Big(\frac{x_i}{\lambda_i}-\frac{x_{ij}}{\lambda_i}\Big) - \Delta_{ij} \le z_{ij} \le y_{ij}
        ~.
    \end{equation}
    It exists because (i) by the first part of Eqn.~\eqref{eqn:amplify-sample-y-2}, the lower bound of $z_{ij}$ is smaller than or equal to the upper bound; (ii) by Lemma~\ref{lem:amplify-sample-p} and Eqn.~\eqref{eqn:amplify-sample-y-1} the lower bound sums to at most $2 \sum_j ( y_{ij} - x_{ij} )$; and (iii) by the second part of Eqn.~\eqref{eqn:amplify-sample-y-2} the upper bound sums to at least $2 \sum_j ( y_{ij} - x_{ij} )$.
    
    It remains to prove that:
    \[
        \sum_j \big( 2 - \Delta_j \big) \sum_i z_{ij} + \sum_j y_j \big( \Delta_j - 1 + \ln 2 \big) \le \big(1+\ln 2\big) \sum_j \sum_i z_{ij}
        ~.
    \]

    Rearrange terms and this becomes:
    \[
        \sum_j \big(1 - \ln 2 - \Delta_j \big) \sum_i \big(y_{ij} - z_{ij}\big) \ge 0
        ~,
    \]
    which follows by Equations~\eqref{eqn:amplify-sample-delta} and \eqref{eqn:amplify-sample-z}.
\end{proof}

\subsubsection{Proof of Lemma~\ref{lem:diff-eq}: Edge-weighted Case}\label{sec:ths-analysis-edge-weighted}

Like the unweighted case, we shall prove differential inequality~\eqref{eqn:amplify-sample-lde} at any time $t$ conditioned on any realization of randomness before time $t$ both in the Poisson arrivals and in the algorithm.
We will drop $t$ from all variables except for the marginal weights for notational simplicity.

For any edge $(i,j)$ and any weight-level $w \ge 0$, define $x_{ij}(w) = x_{ij}$ if the marginal weight meets the weight-level, i.e., $w_{ij}(t) \ge w$, and $x_{ij}(w) = 0$ otherwise.
Let $x_i(w) = \sum_j x_{ij}(w)$.
Recall that $y_{ij}$ is the match rate of edge $(i, j)$, i.e., the product of type $i$'s arrival rate $\lambda_i$ and the probability that Top Half Sampling would match a type $i$ online vertex to $j$ should it arrive at time $t$; and $y_j = \sum_i y_{ij}$.
For any weight-level $w \ge 0$ let $y_{ij}(w) = y_{ij}$ if $w_{ij}(t) \ge w$ and $y_{ij}(w) = 0$ otherwise.
Finally, let $y_j(w)=\sum_i y_{ij}(w)$.

The zeroth-order term can be bounded by:
\[
    \bar{\alg} \le \sum_{(i,j) \in E} x_{ij} w_{ij}(t) = \int_{0}^\infty \sum_{(i,j) \in E} x_{ij}(w)\,dw
    ~.
\]
The inequality holds with equality at time $0$.
Further, when an edge is matched the left decreases by exactly the marginal weight, while the right decreases by at most the marginal weight. 

The first-order term is:
\[
    \frac{d}{dt} \bar{\alg} = - \sum_{(i,j) \in E} y_{ij} w_{ij}(t) = - \int_0^\infty \sum_{(i,j) \in E} y_{ij}(w) dw
    ~.
\]

Finally consider the second-order term.
The match rate of edge $(i,j)$ is $y_{ij}$.
If such an edge is matched, the marginal weight of any edge $(i',j)$ decreases from $w_{i'j}(t)$ to $( w_{i'j}(t) - w_{ij}(t) )^+$.
Hence:
\[
    \frac{d^2}{dt^2} \bar{\alg} = \sum_{(i,j)} y_{ij} \sum_{i'} \int_{(w_{i'j}(t)-w_{ij}(t))^+}^{w_{i'j}(t)} \bigg( \lambda_{i'} \, p \Big( \frac{x_{i'}(w)}{\lambda_{i'}}\Big) - \lambda_{i'} \, p \Big( \frac{x_{i'}(w)}{\lambda_{i'}} - \frac{x_{i'j}(w)}{\lambda_{i'}} \Big) \bigg) dw
    ~.
\]

Recall that $\Delta_{ij} = \big( 2 x_{ij} - \lambda_i \big)^+$, $\Delta_j = \sum_i \Delta_{ij}$, and $y_j \le 2 - \Delta_j$ (i.e., Eqn.~\eqref{eqn:edge-weighted-y-upper-bound}).
Further for any weight-level $w \ge 0$ let $\Delta_{ij}(w) = \big( 2 x_{ij}(w) - \lambda_i \big)^+ \le \Delta_{ij}$.
The definition of function $p$ implies:
\begin{equation}
    \label{eqn:edge-weighted-2nd-order-part1}
    \lambda_i \, p \Big( \frac{x_{i}(w)}{\lambda_i}\Big) - \lambda_i \, p \Big( \frac{x_{i}(w)}{\lambda_i} - \frac{x_{ij}(w)}{\lambda_i} \Big) - \Delta_{ij}(w) \ge 0
    ~.
\end{equation}

We bound the second-order term in two parts.
This corresponds to the superfluous relaxation in the unweighted case, although it is necessary and more involved this time.
The two parts are:
\begin{align*}
    &
    \sum_{(i,j)} y_{ij} \sum_{i'} \int_{(w_{i'j}(t)-w_{ij}(t))^+}^{w_{i'j}(t)} \bigg( \lambda_{i'} \, p \Big( \frac{x_{i'}(w)}{\lambda_{i'}}\Big) - \lambda_{i'} \, p \Big( \frac{x_{i'}(w)}{\lambda_{i'}} - \frac{x_{i'j}(w)}{\lambda_{i'}} \Big) - \Delta_{i'j}(w) \bigg) dw
    \\
    & \quad \le
    \int_0^\infty \sum_{(i,j)} y_{ij} \sum_{i'} \bigg( \lambda_{i'} \, p \Big( \frac{x_{i'}(w)}{\lambda_{i'}}\Big) - \lambda_{i'} \, p \Big( \frac{x_{i'}(w)}{\lambda_{i'}} - \frac{x_{i'j}(w)}{\lambda_{i'}} \Big) - \Delta_{i'j}(w) \bigg) dw
    \tag{Eqn.~\eqref{eqn:edge-weighted-2nd-order-part1}}
    \\
    & \quad \le
    \int_0^\infty \sum_j \big( 2 - \Delta_j \big) \sum_{i'} \bigg( \lambda_{i'} \, p \Big( \frac{x_{i'}(w)}{\lambda_{i'}}\Big) - \lambda_{i'} \, p \Big( \frac{x_{i'}(w)}{\lambda_{i'}} - \frac{x_{i'j}(w)}{\lambda_{i'}} \Big) - \Delta_{i'j}(w) \bigg) dw
    ~,
    \tag{Eqn.~\eqref{eqn:edge-weighted-y-upper-bound}}
\end{align*}
and
\begin{align*}
    \sum_{(i,j)} y_{ij} \sum_{i'} \int_{(w_{i'j}(t)-w_{ij}(t))^+}^{w_{i'j}(t)} \Delta_{i'j}(w) \, dw
    & ~ \le ~
    \sum_{(i,j)} y_{ij} \sum_{i'} w_{ij}(t) \Delta_{i'j}
    \tag{$\Delta_{i'j}(w) \le \Delta_{i'j}$}
    \\
    & ~ = ~
    \sum_{(i,j)} y_{ij} w_{ij}(t) \Delta_j
    ~ = ~
    \int_0^\infty \sum_{(i,j)} y_{ij}(w) \Delta_j \, dw
    ~.
\end{align*}

    Together we have that $\frac{d^2}{dt^2} \bar{\alg}$ is at most:
    \[
        \int_0^\infty \sum_j \bigg( \big( 2 - \Delta_j \big) \ \sum_i \Big( \lambda_i \, p \Big( \frac{x_i(w)}{\lambda_i}\Big) - \lambda_i \, p \Big( \frac{x_i(w)}{\lambda_i} - \frac{x_{ij}(w)}{\lambda_i} \Big) - \Delta_{ij}(w) \Big) + \sum_{(i,j)} y_{ij}(w) \Delta_j \bigg) dw
        ~.
    \]

    Now that we have written all three terms as integrals over weight-levels from $0$ to $\infty$, it suffices to prove that for any weight-level $w \ge 0$ (rearranging terms):
    \begin{align*}
        &
        \sum_j \big( 2 - \Delta_j \big) \sum_i \bigg( \lambda_i\,p\Big(\frac{x_i(w)}{\lambda_i}\Big) - \lambda_i\,p\Big(\frac{x_i(w)}{\lambda_i}-\frac{x_{ij}(w)}{\lambda_i}\Big) - \Delta_{ij}(w) \bigg) + \sum_j y_j(w) \big( \Delta_j - 1 + \ln 2 \big) \\
        & \hspace{.6\textwidth}
        \le (2+2\ln 2) \sum_{(i,j)} \big( y_{ij}(w) - x_{ij}(w) \big)
        ~.
    \end{align*}

    This follows by Lemma~\ref{lem:diff-eq-2}, as it is easy to verify that the variables satisfy the lemma's conditions.

\section{Vertex-weighted Matching}
\label{sec:vertex-weighted-ocs}

\subsection{Poisson Online Correlated Selection}
\label{sec:poisson-ocs}

In vertex-weighted and unweighted matching, we only need to consider the probability of matching each offline vertex.
Our algorithm is an online rounding in the Poisson arrival model.
It receives at the beginning a fractional matching $(x_{ij})_{(i,j) \in E}$ from some polytope, e.g., that of the Natural LP.
Then, when an online vertex of type $i$ arrives, the algorithm matches it to an unmatched neighbor $j$ with positive $x_{ij}$.
For any offline vertex $j$ that is matched to $x_j = \sum_i x_{ij}$ in the fractional matching, our algorithm shall match $j$ with probability at least $\Gamma x_j$ by the end, where $\Gamma$ is the competitive ratio for the stochastic matching problem and its value depends on the tightness of the polytope.

Our algorithm and its analysis are inspired by the online correlated selection (OCS) literature, in particular, by the multi-way semi-OCS of \citet{GaoHHNYZ:FOCS:2021}.
Hence, we call it Poisson OCS.

\begin{tcolorbox}[beforeafter skip=10pt]
    \textbf{Poisson Online Correlated Selection}\\[1ex]
    \emph{Input at the beginning:}
    \begin{itemize}[itemsep=0pt, topsep=4pt]
        \item Online types $I$, offline vertices $J$, edges $E$;
        \item Arrival rates $(\lambda_i)_{i \in I}$;
        \item Fractional matching $(x_{ij})_{(i,j) \in E}$ such that $\forall i \in I, \sum_j x_{ij} \le \lambda_i$, and $\forall j \in J, \sum_i x_{ij} \le 1$.
    \end{itemize}
    \smallskip
    \emph{Preprocessing:~}
    \begin{itemize}[itemsep=0pt, topsep=4pt]
        \item Compute $\rho_{ij} = \frac{x_{ij}}{\lambda_i}$ for any $(i, j) \in E$, and $x_j = \sum_i x_{ij}$ for any $j \in J$.
    \end{itemize}
    \smallskip
    \emph{When an online vertex of type $i \in I$ arrives at time $0 \le t \le 1$:}
    \begin{itemize}[itemsep=0pt, topsep=4pt]
        \item Match it to an unmatched offline neighbor $j$ with probability proportional to $e^{tx_j} \rho_{ij}$.
    \end{itemize}
\end{tcolorbox}


\subsection{Meta Analysis}

For any offline vertex $j$ and any $0 \le t \le 1$, let $Y_j(t)$ be the indicator of whether $j$ stays unmatched from time $0$ to $t$.
Further let $\bar{Y}_j(t) = \E\,Y_j (t)$ be the probability that $j$ stays unmatched from time $0$ to $t$.
We seek to upper bound $\bar{Y}_j(1)$ for any $j \in J$.
More generally, for any subset $T \subseteq J$ of offline vertices and any $0 \le t \le 1$, we will consider $Y_T(t) = \prod_{j \in T} Y_j(t)$, and $\bar{Y}_T(t) = \E\,Y_T(t)$.

The proof of the next lemma is essentially the same as the counterpart in \citet{GaoHHNYZ:FOCS:2021}.
We include it nonetheless to be self-contained.

\begin{lemma}
    \label{lem:ocs-trick}
    For any subset $T \subseteq J$ of offline vertices and any time $0 \le t < 1$:
    \begin{equation}
        \label{eqn:ocs-trick}
        \frac{d}{dt} \bar{Y}_T(t) \le - \bar{Y}_T(t) \sum_i \lambda_i \frac{\sum_{j \in T} e^{tx_j} \rho_{ij} \bar{Y}_T(t)}{\sum_{j \in T} e^{tx_j} \rho_{ij} \bar{Y}_T(t) + \sum_{j \notin T} e^{tx_j} \rho_{ij} \bar{Y}_{T+j}(t)}
    \end{equation}
\end{lemma}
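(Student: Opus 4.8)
The plan is to derive the inequality by differentiating $\bar{Y}_T(t) = \E\,Y_T(t)$ directly, tracking how $Y_T$ can only decrease (it is a product of indicators). Fix $T$ and a time $t$. Over an infinitesimal interval $[t, t+dt]$, the event that changes $Y_T$ is the arrival of an online vertex of some type $i$ that gets matched to a vertex $j \in T$ which was still unmatched; when that happens $Y_T$ drops from $1$ to $0$. Since each type $i$ arrives as an independent Poisson process of rate $\lambda_i$, the probability that a type $i$ vertex arrives in $[t,t+dt]$ is $\lambda_i\,dt + o(dt)$, and at most one arrival of any type happens up to $o(dt)$. So I would write
\[
    \bar{Y}_T(t+dt) = \bar{Y}_T(t) - dt \sum_i \lambda_i \,\Pr\big[\,Y_T(t) = 1 \text{ and the type-$i$ vertex at time $t$ is matched into } T\,\big] + o(dt),
\]
which gives $\frac{d}{dt}\bar{Y}_T(t) = -\sum_i \lambda_i \Pr[\,Y_T(t)=1,\ \text{type-$i$ arrival matched into } T\,]$.

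The core of the argument is then to lower-bound, for each $i$, the probability that a type-$i$ arrival at time $t$ is matched into $T$, conditioned on $Y_T(t) = 1$. Condition further on the full configuration of matched/unmatched offline vertices at time $t$. By the Poisson OCS rule, the arriving type-$i$ vertex is matched to each currently unmatched neighbor $j$ with probability proportional to $e^{t x_j}\rho_{ij}$; every $j \in T$ is unmatched by the conditioning, so the conditional probability of landing in $T$ is
\[
    \frac{\sum_{j \in T} e^{t x_j}\rho_{ij}}{\sum_{j \in T} e^{t x_j}\rho_{ij} + \sum_{j \notin T, \, j \text{ unmatched}} e^{t x_j}\rho_{ij}}.
\]
This fraction is decreasing in the denominator, hence minimized when every $j \notin T$ is unmatched; so it is at least $\frac{\sum_{j\in T} e^{t x_j}\rho_{ij}}{\sum_{j\in T} e^{t x_j}\rho_{ij} + \sum_{j\notin T} e^{t x_j}\rho_{ij}}$, a deterministic quantity. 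Taking expectation over the configuration, conditioned only on $Y_T(t)=1$, and multiplying by $\bar{Y}_T(t) = \Pr[Y_T(t)=1]$, one would get $\Pr[\,Y_T(t)=1,\ \text{matched into }T\,] \ge \bar{Y}_T(t)\cdot\frac{\sum_{j\in T} e^{t x_j}\rho_{ij}}{\sum_{j\in T} e^{t x_j}\rho_{ij}+\sum_{j\notin T} e^{t x_j}\rho_{ij}}$.

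Summing over $i$ and negating yields almost exactly \eqref{eqn:ocs-trick}, except the claimed bound carries the extra factors $\bar{Y}_T(t)$ in the numerator and $\bar{Y}_{T+j}(t)$ in the denominator. The bridge is the observation that for $j \notin T$ one has $\bar{Y}_{T+j}(t) \le \bar{Y}_T(t)$ (adding an indicator to a product can only shrink its expectation), so replacing $\sum_{j\notin T} e^{t x_j}\rho_{ij}$ in my denominator by $\sum_{j\notin T} e^{t x_j}\rho_{ij}\frac{\bar Y_{T+j}(t)}{\bar Y_T(t)}$ only makes the fraction larger — wait, it makes it larger, not smaller. So the more careful route, and the one I expect the paper takes, is to keep the probabilistic quantities coupled rather than passing to the worst-case configuration too early: write $\Pr[Y_T(t)=1,\ \text{type-$i$ matched into }T]$ as $\E\big[Y_T(t)\cdot\frac{\sum_{j\in T} e^{tx_j}\rho_{ij}}{\sum_{j\in T}e^{tx_j}\rho_{ij}+\sum_{j\notin T, j\text{ unmatched}}e^{tx_j}\rho_{ij}}\big]$, use that $Y_T(t)\cdot[\,j\text{ unmatched}\,] \ge Y_T(t) Y_j(t) - $ correction, and recognize $\E[Y_T(t) Y_j(t)] = \bar Y_{T+j}(t)$; the denominator's dependence on the random unmatched set, restricted to $j \notin T$, is what gets linearized into the $\bar Y_{T+j}(t)$ terms. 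The main obstacle is precisely this step: handling the random denominator. The quantity $\sum_{j\notin T, j\text{ unmatched}} e^{tx_j}\rho_{ij}$ is a random sum inside a ratio, so one cannot naively pull the expectation through; the resolution (as in \citet{GaoHHNYZ:FOCS:2021}) is a convexity/Jensen-type argument — the map $s \mapsto \frac{a}{a+s}$ is convex in $s$, which lets one bound $\E[\frac{a}{a+S}]$ from below by a function of $\E[S]$ only in the wrong direction, so instead one bounds $Y_T(t)\cdot\frac{a}{a+S} \ge Y_T(t)\cdot a \cdot (\text{something}) $ by using that each $Y_j(t)$ for $j\notin T$ contributes to $S$ only when $j$ is unmatched and that $\E[Y_T Y_j] = \bar Y_{T+j}$. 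I would reproduce that lemma's short argument verbatim up to notational change, since the excerpt explicitly says the proof is essentially the same.
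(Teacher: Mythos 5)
Your overall plan matches the paper's: write the one-step recursion, condition on the current match configuration, recognize that $\E[Y_T(t)Y_j(t)] = \bar Y_{T+j}(t)$ for $j\notin T$, and close with a Jensen-type step. Where you go astray is the sign of the Jensen step, and this is a genuine gap — you write ``only in the wrong direction'' and then trail off, saying you would reproduce the cited argument, so as written the proof is incomplete. In fact the direction works out: conditional on $Y_T(t)=1$, the match-into-$T$ probability is $\E\bigl[\tfrac{c_i}{c_i+S_i}\,\big\vert\, Y_T(t)=1\bigr]$ with $c_i=\sum_{j\in T}e^{tx_j}\rho_{ij}$ deterministic and $S_i=\sum_{j\notin T}e^{tx_j}\rho_{ij}Y_j(t)$ random; since $s\mapsto\tfrac{c_i}{c_i+s}$ is convex, Jensen gives $\E\bigl[\tfrac{c_i}{c_i+S_i}\,\big\vert\, Y_T(t)=1\bigr]\ge \tfrac{c_i}{c_i+\E[S_i\mid Y_T(t)=1]}$, and $\E[S_i\mid Y_T(t)=1]=\sum_{j\notin T}e^{tx_j}\rho_{ij}\,\bar Y_{T+j}(t)/\bar Y_T(t)$; multiplying numerator and denominator by $\bar Y_T(t)$ gives exactly the ratio in the lemma. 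That \emph{is} the right direction, because you want a \emph{lower} bound on the match-into-$T$ rate to get an \emph{upper} bound on $\tfrac{d}{dt}\bar Y_T$.

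The paper itself takes a cosmetically different but equivalent route: rather than conditioning on $Y_T(t)=1$, it first multiplies the ratio's numerator and denominator by $Y_T(t)$ inside the unconditional expectation, using idempotence of indicators ($Y_T Y_j = Y_T$ for $j\in T$, $Y_T Y_j = Y_{T+j}$ for $j\notin T$). This makes the conditional one-step drop equal to an expression of the form $a - \epsilon\sum_i\lambda_i\,\tfrac{(c_i a)^2/c_i}{c_i a + b_i}$ in $a=Y_T(t)$ and $b_i=\sum_{j\notin T}e^{tx_j}\rho_{ij}Y_{T+j}(t)$, which is jointly concave (quadratic-over-linear). A single application of Jensen over the joint distribution of $(Y_T, (Y_{T+j})_j)$ then finishes. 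Your conditional-on-$Y_T=1$ variant buys the same thing with univariate convexity, once the sign confusion is resolved; the paper's version avoids conditioning at the cost of having to verify joint concavity. Neither approach is meaningfully harder, but you should pick one and carry the Jensen step to completion rather than waving at the reference.
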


\begin{proof}
    For any time $0 \le t < 1$, any sufficiently small $\epsilon > 0$, any subset $T \subseteq J$ of offline vertices, and conditioned on any realization of $Y(t) = \big( Y_T(t) \big)_{T \subseteq J}$, we have:
    \[
        \E \big[ Y_T(t+\epsilon) \mid Y(t) \big] = Y_T(t) \bigg( 1 - \epsilon \sum_{i \in I} \lambda_i \frac{\sum_{j \in T} e^{t x_j} \rho_{ij} Y_j(t)}{\sum_{j \in J} e^{t x_j} \rho_{ij} Y_j(t)} \,\bigg) + O(\epsilon^2)
        ~.
    \]

    Multiplying the numerator and denominator by $Y_T(t)$, and using that $X^2 = X$ for $X \in \{0, 1\}$:
    \[
        \E \big[ Y_T(t+\epsilon) \mid Y(t) \big] = Y_T(t) \bigg( 1 - \epsilon \sum_{i \in I} \lambda_i \frac{\sum_{j \in T} e^{t x_j} \rho_{ij} Y_T(t)}{\sum_{j \in T} e^{t x_j} \rho_{ij} Y_T(t) + \sum_{j \notin T} e^{t x_j} \rho_{ij} Y_{T+j}(t)} \,\bigg) + O(\epsilon^2)
        ~.
    \]

    This is concave (jointly) in $Y_T(t)$ and $\sum_{j \notin T} e^{t x_j} \rho_{ij} Y_{T+j}(t)$ (with its value defined as $0$ when both terms are $0$).
    Next consider the expectation over the realization of $Y(t)$.
    By the Jensen Inequality:
    \[
        \bar{Y}_T(t+\epsilon) \le \bar{Y}_T(t) \bigg( 1 - \epsilon \sum_{i \in I} \lambda_i \frac{\sum_{j \in T} e^{t x_j} \rho_{ij} \bar{Y}_T(t)}{\sum_{j \in T} e^{t x_j} \rho_{ij} \bar{Y}_T(t) + \sum_{j \notin T} e^{t x_j} \rho_{ij} \bar{Y}_{T+j}(t)} \,\bigg) + O(\epsilon^2)
        ~.
    \]

    Taking the limit with $\epsilon$ going to $0$ proves the lemma.
\end{proof}

We can therefore upper bound $\bar{Y}_T(t)$ for any subset $T \subseteq J$ and any time $0 \le t \le 1$ by a family of functions satisfying differential inequality \eqref{eqn:ocs-trick} in the opposite direction.
The next lemma formalizes this.
We defer its proof to Appendix~\ref{app:meta-analysis}.

\begin{lemma}
    \label{lem:meta-analysis}
    If a family of absolutely continuous functions $p_T : [0, 1] \to \R^+$ for $T \subseteq J$ satisfy:
    \begin{align}
        p_T(0) & = 1 \mbox{ , and} \nonumber\\
        \frac{d}{dt} p_T(t) & \ge - \, p_T(t) \, \sum_i \lambda_i \, \frac{\sum_{j \in T} e^{tx_j} \rho_{ij} p_T(t)}{\sum_{j \in T} e^{tx_j} \rho_{ij} p_T(t) + \sum_{j \notin T} e^{tx_j} \rho_{ij} p_{T+j}(t)} \label{eqn:meta-analysis}
        \quad 
    \end{align}
    almost everywhere, then for any $T \subseteq J$ and any $0 \le t \le 1$:
    \[
        \bar{Y}_T(t) \le p_T(t)
        ~.
    \]
\end{lemma}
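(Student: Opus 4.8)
The plan is to prove the inequality $\bar{Y}_T(t) \le p_T(t)$ simultaneously for all $T \subseteq J$ by a continuity/induction-on-size argument, exploiting the structure of the differential inequalities. First I would set up the comparison: define, for each $T$, the gap $g_T(t) = \bar{Y}_T(t) - p_T(t)$. At $t = 0$ both $\bar{Y}_T$ and $p_T$ equal $1$, so $g_T(0) = 0$ for all $T$. The goal is to show $g_T(t) \le 0$ throughout $[0,1]$. The natural order to run the induction is by decreasing availability, i.e. a sort of reverse induction on $|T|$ noting $|J|$ is finite: the right-hand side of \eqref{eqn:ocs-trick} for a set $T$ couples to $\bar{Y}_{T+j}$ for $j \notin T$, which involves strictly larger sets, so one should control the largest sets ($T = J$, then $|T| = |J|-1$, etc.) first and push the bound downward — alternatively, since $\bar{Y}_J$ only references sets of size $|J|$ and larger (none), it is cleanest to handle all sets together via a single Grönwall-type argument on $\max_T g_T$, but the cleanest rigorous route is the layered one. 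Either way, I would argue as follows.

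The key step is a monotonicity/comparison lemma for the scalar ODE. Having fixed attention on a particular $T$, suppose by the inductive hypothesis that $\bar{Y}_{T+j}(t) \le p_{T+j}(t)$ for all $j \notin T$ and all $t$. Then in the right-hand side of Lemma~\ref{lem:ocs-trick}, replace each $\bar{Y}_{T+j}(t)$ by the larger $p_{T+j}(t)$: this only makes the denominator larger, hence the subtracted term smaller in magnitude, so $\bar{Y}_T$ satisfies
\[
    \frac{d}{dt}\bar{Y}_T(t) \le -\,\bar{Y}_T(t)\sum_i \lambda_i \frac{\sum_{j\in T} e^{tx_j}\rho_{ij}\bar{Y}_T(t)}{\sum_{j\in T} e^{tx_j}\rho_{ij}\bar{Y}_T(t) + \sum_{j\notin T} e^{tx_j}\rho_{ij}p_{T+j}(t)}~.
\]
Now both $\bar{Y}_T$ and $p_T$ are (sub/super)solutions of the scalar ODE $u' = F(t,u) := -u\sum_i \lambda_i \frac{\sum_{j\in T} e^{tx_j}\rho_{ij} u}{\sum_{j\in T} e^{tx_j}\rho_{ij} u + \sum_{j\notin T} e^{tx_j}\rho_{ij}p_{T+j}(t)}$ with the same initial value $1$: $\bar{Y}_T$ is a subsolution ($\bar{Y}_T' \le F(t,\bar{Y}_T)$) and $p_T$ is a supersolution ($p_T' \ge F(t,p_T)$, which is exactly \eqref{eqn:meta-analysis} after the same substitution, valid because the map $v \mapsto \frac{av}{av+b}$ used in $F$ is monotone in $v$ when $a,b \ge 0$). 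A standard comparison theorem for scalar ODEs — which applies once $F(t,\cdot)$ is (locally) Lipschitz, or more robustly once $F$ is quasi-monotone, i.e. for fixed $t$ the excess $u\mapsto F(t,u) - Cu$ is nonincreasing for some constant $C$ — then yields $\bar{Y}_T(t)\le p_T(t)$ on $[0,1]$. Concretely I would argue by contradiction: if $g_T$ becomes positive, let $t_0 = \inf\{t: g_T(t) > 0\}$; then $g_T(t_0) = 0$ by continuity and $g_T(t_0) = 0 $ with $g_T$ increasing there, but $g_T'(t_0) \le F(t_0,\bar{Y}_T(t_0)) - F(t_0,p_T(t_0)) = 0$, and a slightly finer estimate just to the right of $t_0$ (using that $F(t,u)-F(t,u')$ for $u>u'$ is bounded below by $-L(u-u')$ for some $L$, since the one remaining place $F$ fails to be decreasing in $u$ is through the numerator, contributing a bounded factor) gives $g_T(t) \le 0$ on a right-neighborhood, contradicting the definition of $t_0$. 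The base case of the induction is $T$ with no valid extension only if $T = J$, but in general the base layer is any maximal $T$, for which the sum over $j \notin T$ is empty (or, when $J_i$-structure is used, effectively so); there $\bar{Y}_T$ and $p_T$ solve the same closed scalar inequality and the comparison is immediate.

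The main obstacle I anticipate is making the comparison fully rigorous despite $F(t,u)$ not being globally smooth: the denominator $\sum_{j\in T} e^{tx_j}\rho_{ij} u + \sum_{j\notin T} e^{tx_j}\rho_{ij}p_{T+j}(t)$ can vanish (this is the "value defined as $0$ when both terms are $0$" caveat already flagged in the proof of Lemma~\ref{lem:ocs-trick}), so $F$ is only defined by continuous extension and is merely Lipschitz away from that locus. I would handle this by the perturbation trick: replace $p_T$ by $p_T^{(\delta)} := p_T + \delta e^{Lt}$ for small $\delta > 0$, which is a strict supersolution and keeps the denominator bounded away from $0$, run the clean comparison to get $\bar{Y}_T \le p_T^{(\delta)}$, and let $\delta \to 0$. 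A secondary, purely bookkeeping point is to confirm that the substitution $\bar{Y}_{T+j} \mapsto p_{T+j}$ really is in the "safe" direction for every term simultaneously — this is where I'd want to double-check signs, since $\bar Y_{T+j}$ appears only in denominators with nonnegative coefficients, so increasing it weakens the bound, exactly as needed. With absolute continuity of the $p_T$ assumed in the hypothesis, the "almost everywhere" differential inequalities integrate correctly and no further regularity is needed.
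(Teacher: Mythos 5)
Your proposal is correct and takes essentially the same route as the paper: reverse induction on $|T|$, replacing each $\bar{Y}_{T+j}$ by the inductive bound $p_{T+j}$ in the denominator, and then a scalar comparison between $\bar{Y}_T$ and $p_T$ via the sign of the derivative of their difference. Note, however, that the paper avoids your Lipschitz and perturbation machinery by observing that the relaxed right-hand side $F(t,u)=-u\sum_i\lambda_i\frac{a_iu}{a_iu+b_i}$ (with $a_i,b_i\ge0$) is genuinely monotone decreasing in $u\ge 0$ — not just quasi-monotone as you worried — so one can pick any supposed violation time $t_0$, set $t_1 = \sup\{t\le t_0 : \bar Y_T(t)\le p_T(t)\}$, and integrate the derivative-sign inequality over $(t_1,t_0)$ to reach a contradiction directly, using only absolute continuity.
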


\paragraph{Rest of the Section in a Nutshell.}
If the fractional matching is unrestricted (other than being a fractional matching), we may have $\rho_{ij} = 1$ or $0$ for all $(i,j)$, and Eqn.~\eqref{eqn:meta-analysis} for each offline vertex $j$ degenerates to $\frac{d}{dt} p_j(t) \ge - p_j(t) x_j$, giving only a trivial bound $p_j(t) = e^{-tx_j}$ (Subsection~\ref{sec:0th-level}).
Using the constraints of the Natural LP and the Converse Jensen Inequality, we can already improve the state-of-the-art competitive ratio for vertex-weighted matching from $0.7$ to $0.707$ (Subsection~\ref{sec:1st-level}).
Further, we next introduce an LP hierarchy that generalizes the Natural LP (Subsection~\ref{sec:poisson-matching-lp}), and give a Converse Jensen Inequality for the second level LP of this hierarchy (Subsection~\ref{sec:2nd-converse-jensen}).
Using the second level LP we obtain our final competitive ratio $0.716$ (Theorem~\ref{thm:2nd-level}, Subsection~\ref{sec:2nd-level}).
We leave further improvements using higher level LPs as a future research direction.

\subsection{Poisson Matching Linear Program Hierarchy}
\label{sec:poisson-matching-lp}


Let $P_k(\lambda)$ be the value of Poisson cumulative distribution function with arrival rate $\lambda$ at $k \in \Z^+$:
\[
    P_k(\lambda) = e^{-\lambda} \sum_{\ell = 0}^k \frac{\lambda^\ell}{\ell!}
    ~.
\]

\begin{lemma}
    \label{lem:poisson-cdf-property}
    For any $k \in \Z^+$ and any $\lambda \ge 0$, 
    $\frac{d}{d\lambda} P_k(\lambda) = P_{k-1}(\lambda) - P_k(\lambda)$.
    %
    %
\end{lemma}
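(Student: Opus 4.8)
The plan is to differentiate the finite-sum definition of $P_k$ directly. First I would write $P_k(\lambda) = e^{-\lambda} Q_k(\lambda)$ with $Q_k(\lambda) = \sum_{\ell=0}^{k} \lambda^\ell / \ell!$, so that the product rule gives $\frac{d}{d\lambda} P_k(\lambda) = -e^{-\lambda} Q_k(\lambda) + e^{-\lambda} \frac{d}{d\lambda} Q_k(\lambda)$. The first term on the right is exactly $-P_k(\lambda)$, so the whole claim reduces to showing $e^{-\lambda} \frac{d}{d\lambda} Q_k(\lambda) = P_{k-1}(\lambda)$, equivalently $\frac{d}{d\lambda} Q_k(\lambda) = Q_{k-1}(\lambda)$.

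Next I would carry out the term-by-term differentiation of $Q_k$. The $\ell = 0$ term is constant and drops out, and for $\ell \ge 1$ we have $\frac{d}{d\lambda}(\lambda^\ell/\ell!) = \lambda^{\ell-1}/(\ell-1)!$. Reindexing with $m = \ell - 1$ turns $\sum_{\ell=1}^{k} \lambda^{\ell-1}/(\ell-1)!$ into $\sum_{m=0}^{k-1} \lambda^m/m! = Q_{k-1}(\lambda)$, which is exactly what is needed. Combining the two displays yields $\frac{d}{d\lambda} P_k(\lambda) = -P_k(\lambda) + e^{-\lambda} Q_{k-1}(\lambda) = P_{k-1}(\lambda) - P_k(\lambda)$.

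There is essentially no obstacle here: the only point that requires the slightest care is that differentiating $Q_k$ shifts the summation range down by one while the $e^{-\lambda}$ prefactor is left untouched, and this mismatch is precisely what produces the $P_{k-1} - P_k$ cancellation. (For $k = 1$ the identity still makes sense, interpreting $P_0(\lambda) = e^{-\lambda}$.) As a sanity check, at $k = 1$ we have $P_1(\lambda) = e^{-\lambda}(1+\lambda)$ and $\frac{d}{d\lambda} P_1(\lambda) = -\lambda e^{-\lambda} = P_0(\lambda) - P_1(\lambda)$, matching the formula. One could alternatively invoke the probabilistic reading $P_k(\lambda) = \Pr\big[\mathrm{Poisson}(\lambda) \le k\big]$ together with the standard Poisson CDF identity, but the direct computation above is shorter and self-contained.
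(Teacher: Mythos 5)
Your computation is correct and is exactly the ``basic calculus'' that the paper alludes to when it omits the proof: apply the product rule to $P_k(\lambda) = e^{-\lambda}\sum_{\ell=0}^k \lambda^\ell/\ell!$ and observe that differentiating the polynomial factor shifts the sum down by one index. Nothing to add.
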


\medskip

Its proof is basic calculus which we omit.
For any $\ell \ge 1$, the \emph{$\ell$-th level Poisson Matching LP} is:
\begin{align}
    \text{maximize} \quad
    &
    \sum_{(i,j) \in E} w_{ij} x_{ij} \notag \\
    \text{subject to} \quad
    &
    \sum_{j \in J} x_{ij} \le \lambda_i && \forall i \in I \notag \\
    &
    \sum_{i \in S} \sum_{j \in T} x_{ij} \le \sum_{1 \le k \le m} \big(1 - P_{k-1}(\lambda_S) \big) && \forall 1 \le m \le \ell, \forall S \subseteq I, \forall T \subseteq J : |T| = m \label{eqn:poisson-constraint} \\[.5ex]
    &
    x_{ij} \ge 0 && \forall (i, j) \in E \notag
\end{align}

Artificially let the matching LP (i.e., to maximize $\sum w_{ij} x_{ij}$ in the matching polytope which we will present as Eqn.~\eqref{eqn:matching-polytope}) be the $0$-th level Poisson Matching LP.
The matching constraints for offline vertices, i.e., $\sum_i x_{ij} \le 1$, are subsumed by constraints~\eqref{eqn:poisson-constraint} in the $\ell$-th level Poisson Matching LPs for any $\ell \ge 1$ with $S = I, T = \{j\}$.
The Natural LP is the first level Poisson Matching LP.

The next lemmas show that the LPs in the hierarchy are indeed relaxations of online stochastic matching, and are computationally tractable at the constant levels.
The proofs are similar to the counterparts for the Natural LP \cite{HuangS:STOC:2021} so we defer them to Appendices~\ref{app:poisson-matching-lp-relaxation} and \ref{app:poisson-matching-lp-poly-time}.

\begin{lemma}
    \label{lem:poisson-matching-lp-relaxation}
    For any online stochastic matching instance, the optimal objective of the instance is at most the optimal objective of the $\ell$-th level Poisson Matching LP at any level $\ell \ge 0$.
\end{lemma}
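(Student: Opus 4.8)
The plan is to exhibit a feasible solution to the $\ell$-th level Poisson Matching LP whose objective equals the expected optimal (offline) objective of the stochastic matching instance, thereby showing the LP optimum is at least as large. First I would fix an optimal offline policy (equivalently, for each realization of the bipartite graph, an optimal matching in hindsight), and set $x_{ij}$ to be the probability, over the Poisson arrivals, that \emph{some} online vertex of type $i$ is matched to offline vertex $j$ by this optimal solution. Since an offline vertex $j$ is matched to at most one online vertex in any realization, the events ``$j$ is matched to a type-$i$ vertex'' are disjoint across $i$, so $\sum_i x_{ij} \le 1$; and because the expected number of type-$i$ online vertices matched (to anything) is at most the expected number that arrive, namely $\lambda_i$, we get $\sum_j x_{ij} \le \lambda_i$. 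The objective $\sum_{(i,j)} w_{ij} x_{ij}$ is exactly the expected weight of the optimal matching by linearity of expectation (here free disposal, if present, only helps: we may take $w_{ij}$ to be the weight actually collected). So the only real content is verifying the level-$\ell$ constraints~\eqref{eqn:poisson-constraint}.

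For the main step, fix $1 \le m \le \ell$, a subset $S \subseteq I$ of online types, and a subset $T \subseteq J$ of offline vertices with $|T| = m$. The left-hand side $\sum_{i \in S}\sum_{j \in T} x_{ij}$ is the expected number of edges in the optimal matching that go between $S$ and $T$. I would bound this by a purely combinatorial/probabilistic argument: in any fixed realization, the number of such edges is at most $\min\{ |T|, N_S \}$ where $N_S$ is the number of online vertices whose type lies in $S$ that arrive during $[0,1]$, because each such edge consumes a distinct vertex of $T$ and a distinct arrival from $S$. Since $N_S$ is Poisson with mean $\lambda_S = \sum_{i\in S}\lambda_i$ (a sum of independent Poisson processes), the expected number of $S$–$T$ edges is at most $\E[\min\{m, N_S\}]$. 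A short computation gives $\E[\min\{m, N_S\}] = \sum_{k \ge 1} \Pr[N_S \ge k] = \sum_{k=1}^{m}\Pr[N_S \ge k] + \text{(tail terms capped at }m)$; more precisely $\E[\min\{m,N_S\}] = \sum_{k=1}^{m}\Pr[N_S \ge k] = \sum_{k=1}^{m}\bigl(1 - P_{k-1}(\lambda_S)\bigr)$, using $\Pr[N_S \le k-1] = P_{k-1}(\lambda_S)$. This matches the right-hand side of~\eqref{eqn:poisson-constraint} exactly, so the constraint holds.

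The step I expect to require the most care is the identity $\E[\min\{m, N_S\}] = \sum_{k=1}^{m}(1 - P_{k-1}(\lambda_S))$ together with the claim that the optimal matching's $S$–$T$ edge count is dominated by $\min\{m, N_S\}$ realization-by-realization; the former is the ``telescoping'' tail-sum formula $\E[\min\{m,X\}] = \sum_{k=1}^m \Pr[X \ge k]$ for a nonnegative integer random variable $X$, which is routine, while the latter is the genuine structural observation — it is exactly where the restriction to $|T|=m$ and to \emph{all} levels $1 \le m \le \ell$ buys extra constraints beyond the plain matching polytope. One subtlety to address explicitly is that $x_{ij}$ must be set to the probability of the event (``$j$ matched to \emph{a} type-$i$ vertex''), not to an expected \emph{count}, since multiple type-$i$ arrivals could in principle be relevant; but as $j$ is matched at most once, the event and the count coincide, so this is harmless. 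Finally, I would note that this argument is uniform in $\ell$: the level-$\ell$ LP's constraint set is contained in the level-$\ell'$ LP's for $\ell' \ge \ell$, so feasibility at the top level implies it at all lower levels, and the $\ell = 0$ (matching polytope) case is subsumed.
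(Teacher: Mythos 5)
Your proposal is correct and follows essentially the same route as the paper's own proof: set $x_{ij}$ to be the probability that $j$ is matched to a type-$i$ vertex in the hindsight-optimal matching, check the matching constraints by linearity, and for the Poisson constraints bound the expected number of $S$--$T$ edges in the optimal matching by $\E[\min\{m, N_S\}] = \sum_{k=1}^m (1 - P_{k-1}(\lambda_S))$ via the tail-sum identity. Your explicit note on the realization-by-realization domination by $\min\{|T|, N_S\}$ and on the nesting of the LP levels merely spells out what the paper states more tersely.
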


\begin{lemma}
    \label{lem:poisson-matching-lp-poly-time}
    For any $\ell \ge 0$, the $\ell$-th level Poisson Matching LP is solvable in time polynomial in $|I|$ and $|J|^{\max \{ \ell, 1 \}}$.
\end{lemma}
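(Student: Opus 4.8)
The plan is to use the ellipsoid method via the polynomial equivalence of separation and optimization: the $\ell$-th level Poisson Matching LP has at most $|I||J|$ variables, a bounded and polynomially-described feasible region, and a polynomially bounded optimum, so it suffices to build a separation oracle running in time polynomial in $|I|$ and $|J|^{\max\{\ell,1\}}$. For $\ell = 0$ this is just the matching polytope and the oracle is classical. For $\ell \ge 1$, the constraints $\sum_j x_{ij}\le\lambda_i$ number only $|I|$ and are checked directly, so the content is separating the family~\eqref{eqn:poisson-constraint}.

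For that family, I would first simplify the right-hand side. Write $g_m(v) = \sum_{k=1}^m\big(1-P_{k-1}(v)\big)$; by Lemma~\ref{lem:poisson-cdf-property} and telescoping, $g_m'(v)=P_{m-1}(v)$, which is nonincreasing in $v$ (again by Lemma~\ref{lem:poisson-cdf-property}), so $g_m$ is concave with $g_m(0)=0$. For a fixed level $1\le m\le\ell$ and a fixed $T\subseteq J$ with $|T|=m$, set $b_i^T=\sum_{j\in T}x_{ij}$; then the constraint of~\eqref{eqn:poisson-constraint} indexed by $(m,S,T)$ is exactly $\sum_{i\in S}b_i^T\le g_m(\lambda_S)$, and deciding whether some such constraint is violated reduces to testing whether $\max_{S\subseteq I}\big(\sum_{i\in S}b_i^T-g_m(\lambda_S)\big)>0$.

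The crux is a structural claim about a maximizer $S^*$ of $h(S)=\sum_{i\in S}b_i^T-g_m(\lambda_S)$: because $g_m'$ is nonincreasing, a single-element swap argument shows $b_i^T/\lambda_i\ge g_m'(\lambda_{S^*})$ for every $i\in S^*$ and $b_i^T/\lambda_i\le g_m'(\lambda_{S^*})$ for every $i\notin S^*$ (types with $\lambda_i=0$ are forced to $x_{ij}=0$ and may be placed last). Hence, up to ties, $S^*$ is a prefix of the ordering of types by decreasing $b_i^T/\lambda_i$, so it is enough to evaluate $h$ on the $|I|+1$ prefixes of that sorted order. Enumerating over $T$, there are $\sum_{m=1}^\ell\binom{|J|}{m}=O(|J|^\ell)$ choices, so the whole oracle runs in time polynomial in $|I|$ and $|J|^\ell$ — and in $|J|$ alone when $\ell\le1$ — which yields the claimed bound via ellipsoid.

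The step I expect to take the most care is precisely this prefix lemma: unlike a pure matching constraint, the right-hand side $g_m(\lambda_S)$ couples all of $S$, so one cannot decide membership greedily, and the argument hinges on the concavity of $g_m$ (equivalently, $P_{m-1}$ nonincreasing). A secondary nuisance is that $g_m$ involves $e$, so the oracle can only certify feasibility up to an arbitrarily small additive tolerance; this is standard for the ellipsoid method and is handled exactly as in the Natural LP ($\ell=1$) case of \citet{HuangS:STOC:2021}, as is the tie-breaking bookkeeping in the prefix search.
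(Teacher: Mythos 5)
Your proposal is correct and follows essentially the same approach as the paper: build a separation oracle, enumerate all $O(|J|^\ell)$ subsets $T$ of size at most $\ell$, and for each $T$ reduce the exponentially many constraints over $S\subseteq I$ to checking only the $|I|+1$ prefixes in descending order of $\frac{1}{\lambda_i}\sum_{j\in T}x_{ij}$, using concavity of the right-hand side. The only cosmetic difference is that you establish prefix optimality directly via an add/remove exchange argument on the discrete maximizer, whereas the paper first passes through a continuous relaxation (Lemma~\ref{lem:poisson-matching-lp-poly-time-fractional}: the maximum of LHS minus RHS over the box $\prod_i[0,\lambda_i]$ is attained at a vertex) and then invokes greedy mass assignment for fixed $\sum_i\mu_i$; both arguments hinge on the same fact, namely that $g_m'=P_{m-1}$ is nonincreasing.
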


\subsection{Second Level Converse Jensen Inequality}
\label{sec:2nd-converse-jensen}

A smooth function $f : [0, 1]^d \to \R$ is \emph{diminishing returns (DR) submodular} if its Hessian matrix has no positive entries, i.e.:
\[
    \forall i, j \in [d], \forall x \in [0, 1]^d : \quad \partial_{ij} f(x) \le 0 
    ~.
\]

For any $y, z \in [0, 1]^d$, we write $y \vee z = \big( \max \{y_k, z_k\} \big)_{k \in [d]}$ and $y \wedge z= \big( \min \{y_k, z_k\} \big)_{k \in [n]}$ for the entry-wise maximum and minimum of two vectors.

\begin{lemma}[Submodularity, c.f., \citet{BianMBK:AIS:2017}]
    \label{lem:submodular}
    Suppose that $f : [0, 1]^d \to \R$ is a DR submodular function.
    Then, for any $y, z \in [0, 1]^d$:
    \[
        f(y) + f(z) \ge f(y \vee z) + f(y \wedge z)
        ~.
    \]
\end{lemma}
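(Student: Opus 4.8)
The plan is to reduce this lattice-submodularity inequality to a statement about the sign of a mixed second derivative, exploiting that $y\vee z$ and $y\wedge z$ are opposite corners of an axis-aligned box whose other two corners are $y$ and $z$. Concretely, I would set $a = y\wedge z$, $u = y - a$, and $v = z - a$, so that $u, v \ge 0$ entrywise. The key structural observation is that $u$ and $v$ have disjoint supports: for each coordinate $k$, if $y_k \ge z_k$ then $a_k = z_k$ and $v_k = 0$, while if $y_k < z_k$ then $a_k = y_k$ and $u_k = 0$. Hence $u_i v_i = 0$ for every $i$, i.e. $u_i v_j = 0$ whenever $i = j$. One also checks immediately that $a + su + tv \in [0,1]^d$ for all $s, t \in [0,1]$, since each coordinate of $a + su + tv$ lies between $a_k$ and $(y\vee z)_k$, both of which are in $[0,1]$.

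Next I would introduce $\phi(s,t) = f(a + su + tv)$ on $[0,1]^2$, which is $C^2$ since $f$ is smooth, and record that $\phi(0,0) = f(y\wedge z)$, $\phi(1,0) = f(y)$, $\phi(0,1) = f(z)$, and $\phi(1,1) = f(y\vee z)$ — the last because $(a+u+v)_k = \max\{y_k,z_k\}$ by the disjoint-support property. The desired inequality is then exactly $\phi(1,1) - \phi(1,0) - \phi(0,1) + \phi(0,0) \le 0$. Writing this discrete second difference as $\int_0^1\!\int_0^1 \partial_s\partial_t \phi(s,t)\,ds\,dt$ and applying the chain rule gives $\partial_s\partial_t\phi(s,t) = \sum_{i,j} u_i v_j\, \partial_{ij} f(a+su+tv)$.

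To conclude, observe that every term in this sum is $\le 0$: the diagonal terms $i=j$ vanish because $u_i v_i = 0$, and for $i \ne j$ we have $u_i v_j \ge 0$ while $\partial_{ij} f \le 0$ by the DR-submodularity hypothesis. Hence $\partial_s\partial_t\phi \le 0$ pointwise, the double integral is nonpositive, and rearranging gives $f(y) + f(z) \ge f(y\vee z) + f(y\wedge z)$. I do not expect a genuine obstacle here — this is a classical fact, as the citation to \citet{BianMBK:AIS:2017} indicates — and the only point that needs care is the disjoint-support observation, which is precisely what makes the bilinear form $\sum_{i,j} u_i v_j \partial_{ij} f$ nonpositive using only the off-diagonal Hessian entries (so that in fact only ordinary submodularity, not the full DR condition, is used).
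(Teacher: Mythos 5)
Your argument is correct. A point worth flagging: the paper itself does not prove this lemma --- it is stated with a citation to Bian et al.\ (2017) and used as a black box --- so there is no in-paper proof to compare against. What you give is the standard proof of the four-point submodularity inequality: decompose $y \vee z - y \wedge z$ into the sum of two non-negative, disjointly-supported vectors $u$ and $v$; express the discrete second mixed difference of $f$ over the box as $\int_0^1\!\int_0^1 \partial_s\partial_t \phi\,ds\,dt$ where $\phi(s,t)=f(a+su+tv)$; and observe that $\partial_s\partial_t\phi = \sum_{i,j} u_i v_j \partial_{ij} f \le 0$ pointwise. All steps check out, including the verification that $a+su+tv$ stays in $[0,1]^d$ and that $a+u+v = y\vee z$ (both hinging on the disjoint-support property). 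Your closing observation is also right and a nice refinement: because $u_i v_i = 0$, the diagonal Hessian entries never enter, so the lemma holds under the weaker hypothesis $\partial_{ij} f \le 0$ for $i \neq j$ only (ordinary submodularity); the full DR condition, which additionally demands concavity along coordinate axes, is more than is needed here. This is consistent with the fact that Lemma~\ref{lem:jensen-submodular} in the paper is where the diagonal entries $\partial_{kk} f \le 0$ are actually used.
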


For any $y, z \in [0, 1]^d$ write $y \ge z$ if $y_k \ge z_k$ for any $k \in [d]$.
The next lemma may be folklore, but we cannot find a direct reference.
We therefore include its proof in Appendix~\ref{app:jensen-submodular} for completeness.

\begin{lemma}[Jensen Inequality for DR Submodular Functions on Ordered Points]
    \label{lem:jensen-submodular}
    Suppose that $y : [0, 1] \to [0, 1]^d$ satisfies that for any $0 \le \mu < \mu' \le 1$, $y(\mu) \le y(\mu')$, and $f : [0, 1]^d \to \R$ is a DR submodular function.
    Then:
    \[
        \int_0^1 f\big(y(\mu)\big) d\mu \le f \Big( \int_0^1 y(\mu) d\mu \Big)
        ~.
    \]
\end{lemma}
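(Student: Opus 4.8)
The plan is to reduce the continuous statement to the discrete submodular inequality of Lemma~\ref{lem:submodular} by a discretization-and-pairing argument. First I would partition $[0,1]$ into $n$ equal subintervals and let $y_k = y(k/n)$ for $k = 0, 1, \dots, n$; the monotonicity hypothesis gives $y_0 \le y_1 \le \cdots \le y_n$ in the coordinatewise order. The Riemann sum $\frac{1}{n}\sum_{k=1}^{n} f(y_k)$ converges to $\int_0^1 f(y(\mu))\,d\mu$ since $f$ is smooth (hence continuous) on the compact box $[0,1]^d$ and $y$ is monotone (hence has at most countably many discontinuities, so is Riemann integrable and the sums converge); similarly $\frac{1}{n}\sum_{k=1}^n y_k \to \int_0^1 y(\mu)\,d\mu$, and by continuity of $f$ it suffices to prove the discrete inequality
\[
    \frac{1}{n}\sum_{k=1}^{n} f(y_k) \le f\Big( \frac{1}{n}\sum_{k=1}^{n} y_k \Big)
\]
for every $n$ and every chain $y_1 \le \cdots \le y_n$ in $[0,1]^d$, and then take $n \to \infty$.

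The core of the argument is this discrete chain inequality, which I would prove by induction on $n$ using the following "smoothing" or "averaging toward the mean" step. Given a chain $y_1 \le \cdots \le y_n$, let $\bar y = \frac{1}{n}\sum y_k$. Replace the largest and smallest elements: set $y_1' = y_1 \vee \bar y \wedge \cdots$ — more precisely, I would repeatedly take the current minimal element $y_{\min}$ and maximal element $y_{\max}$ of the multiset and replace them by $y_{\min} \vee (\text{something}) $ ... The cleanest route is: for any two points $u \le v$ in the chain and any target $t$ with $u \le t \le v$, the DR-submodular inequality of Lemma~\ref{lem:submodular} applied to $y = t$ (componentwise clamp) and $z$ chosen so that $y \vee z = v$ and $y \wedge z = u$ — which is possible exactly because $u \le t \le v$ coordinatewise, taking $z_\ell = v_\ell$ if $t_\ell \ge u_\ell$ handled per-coordinate — yields $f(t) + f(u+v-t) \ge f(v) + f(u)$ whenever $u+v-t$ lies in $[u,v]$. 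Iterating a pairing step that moves mass of the chain toward the barycentre while keeping the sum fixed and the chain within $[0,1]^d$, each step does not decrease $\sum_k f(y_k)$, and in the limit all points coincide at $\bar y$, giving $\sum_k f(y_k) \le n f(\bar y)$.

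The step I expect to be the main obstacle is making the "averaging toward the mean preserves the chain structure" argument precise: when I replace a pair $(y_{\min}, y_{\max})$ by $(y_{\min} \vee s, \, y_{\max} \wedge s')$-type updates, I must verify both that the new multiset is still totally ordered (a chain) — which can fail if I am not careful about which coordinates move — and that the sum is exactly preserved so that the barycentre is invariant. A robust way to sidestep the chain-preservation subtlety is to instead argue directly by a clean two-point lemma: prove that for $u \le v$ coordinatewise and $\alpha \in [0,1]$, $f(u) + f(v) \le f\big((1-\alpha)u + \alpha v\big) + f\big(\alpha u + (1-\alpha)v\big)$, which follows from Lemma~\ref{lem:submodular} with $y = (1-\alpha)u+\alpha v$ and $z = \alpha u + (1-\alpha)v$ since these two convex combinations have meet $u$ and join $v$ exactly because $u \le v$ is a coordinatewise inequality and each coordinate is a pair of convex combinations of $(u_\ell, v_\ell)$. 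Then a Hardy–Littlewood–Pólya style majorization argument: the vector $(y_1,\dots,y_n)$ (viewed per coordinate, or via the general theory) is majorized by the constant vector $(\bar y,\dots,\bar y)$ — no, the other direction — and $\sum f(y_k)$ is Schur-convex-like under this pairing, so it is maximized at the barycentre. I would present the two-point lemma explicitly, then invoke the standard fact that a fixed-sum chain can be transformed to the constant chain by a finite sequence of such two-point convexity moves, each nondecreasing in $\sum_k f$.
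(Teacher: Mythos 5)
Your discretization reduction is sound, but the discrete step has two problems, the second of which is fatal to the approach as written.

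First, a smaller issue: your justification of the two-point lemma is incorrect. For $p = (1-\alpha)u + \alpha v$ and $q = \alpha u + (1-\alpha) v$ with $u \le v$, the coordinatewise meet $p \wedge q$ and join $p \vee q$ are \emph{not} $u$ and $v$. Indeed $p_\ell - q_\ell = (1-2\alpha)(u_\ell - v_\ell)$ has the same sign for every $\ell$ (because $u_\ell \le v_\ell$ for all $\ell$), so $p \wedge q$ and $p \vee q$ are just $p$ and $q$ in some order, and Lemma~\ref{lem:submodular} gives only the trivial inequality $f(p)+f(q) \ge f(p)+f(q)$. The two-point inequality itself is salvageable, but for a different reason: $v - u \ge 0$ entrywise means the second directional derivative of $f$ along $v-u$ equals $\sum_{k,\ell}(v_k-u_k)(v_\ell-u_\ell)\partial_{k\ell}f \le 0$, so $f$ restricted to the segment $[u,v]$ is concave. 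You need this directional-concavity argument, not Lemma~\ref{lem:submodular} with the claimed meet and join.

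Second, and more seriously: the ``standard fact that a fixed-sum chain can be transformed to the constant chain by a finite sequence of such two-point moves'' is false once $d \ge 2$. Your two-point move keeps the pair on the segment between the original pair, so every multiset reachable by these moves consists of points lying on (iterated) segments; but the barycenter $\bar y$ of a chain generally lies on no such segment and can even be incomparable, in the coordinatewise order, to a chain point. Take $d=2$, $n=3$, $y_1=(0,0)$, $y_2=(0,1)$, $y_3=(1,1)$; this is a chain, $\bar y = (\nicefrac{1}{3}, \nicefrac{2}{3})$, and $y_2$ and $\bar y$ are incomparable (first coordinate $0 < \nicefrac{1}{3}$, second $1 > \nicefrac{2}{3}$). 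Moreover $\bar y$ lies on none of the segments $[y_1,y_2]$, $[y_2,y_3]$, $[y_1,y_3]$. So the Robin--Hood transport from the scalar Hardy--Littlewood--P\'olya theory does not carry over to the coordinatewise partial order, and your proposal has a genuine gap there that I do not see how to repair along the same lines.

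The paper's proof sidesteps this entirely by never averaging all coordinates at once: it uses a hybrid argument, replacing $y_k(\mu)$ by the constant $\bar y_k$ one coordinate $k$ at a time. For each coordinate it uses two distinct consequences of DR submodularity separately: the diagonal entry $\partial_{kk}f \le 0$ gives the first-order concavity bound $f(y^{(k)}) - f(y^{(k-1)}) \ge \partial_k f(y^{(k)}) \cdot (\bar y_k - y^{(k-1)}_k)$, and the off-diagonal entries $\partial_{k\ell}f \le 0$ make $\partial_k f(y^{(k)}(\mu))$ anti-monotone in $\mu$, so it is correlated in the right direction with $\bar y_k - y_k(\mu)$; a Chebyshev-sum rearrangement then shows the integral of the product is $\ge 0$ because $\int_0^1 (\bar y_k - y_k(\mu))\,d\mu = 0$. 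This route only needs monotonicity of $y$ in $\mu$ and never needs $\bar y$ to be comparable to any $y(\mu)$, which is exactly what broke the majorization attempt.
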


A function $f : [0, 1]^d \to \R$ is \emph{normalized} if $f(0, 0, \dots, 0) = 0$.

\begin{theorem}[Second Level Converse Jensen Inequality]
    \label{thm:2nd-converse-jensen}
    Suppose that $(x_{ij})_{(i,j) \in E}$ is in the polytope of the second level Poisson Matching LP, and $f : [0, 1]^2 \to \R$ is a normalized and DR submodular function.
    For any $j_1 \ne j_2 \in J$, let $\lambda_1^*, \lambda_2^*$ be the unique solution to:
    \begin{align*}
        x_{j_1} & = 1 - P_0(\lambda_1^*) ~; \\[1ex]
        x_{j_2} & = \big( 2 - P_0(\lambda_2^*) - P_1(\lambda_2^*) \big) - \big( 1 - P_0(\min\{\lambda_1^*, \lambda_2^*\}) \big) ~.
    \end{align*}
    Recall that $\rho_{ij} = \frac{x_{ij}}{\lambda_i}$ for any $(i,j) \in E$.
    We have:
    \[
        \sum_i \lambda_i f(\rho_{ij_1}, \rho_{ij_1}+\rho_{ij_2}) \ge 
        \begin{cases}
            \displaystyle
            \int_0^{\lambda_1^*} f \big( P_0(\lambda), P_1(\lambda) \big) d\lambda + \int_{\lambda_1^*}^{\lambda_2^*} f \big( 0, P_1(\lambda) \big) d\lambda
            & \text{if } \lambda_1^* \le \lambda_2^* ~; \\[3ex]
            \displaystyle
            \int_0^{\lambda_2^*} f \big( P_0(\lambda), P_1(\lambda) \big) d\lambda + \int_{\lambda_2^*}^{\lambda_1^*} f \big( P_0(\lambda), P_0(\lambda) \big) d\lambda
            & \text{if } \lambda_1^* > \lambda_2^* ~.
        \end{cases}
    \]
\end{theorem}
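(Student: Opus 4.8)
The plan is to reduce the theorem to the one-dimensional machinery already in place (Lemma~\ref{lem:jensen-submodular}) by constructing an appropriate increasing curve $y : [0, 1] \to [0,1]^2$ whose integral lies coordinate-wise above $\big(\sum_i \lambda_i \rho_{ij_1}, \sum_i \lambda_i (\rho_{ij_1}+\rho_{ij_2})\big) = (x_{j_1}, x_{j_1}+x_{j_2})$, and then invoke DR submodularity to move the sum on the left inside $f$. More precisely, the key identity to exploit is that $\sum_i \lambda_i f(\rho_{ij_1}, \rho_{ij_1}+\rho_{ij_2})$ is a convex combination of the form $\E_i f(\rho_{ij_1}, \rho_{ij_1}+\rho_{ij_2})$ weighted by $\lambda_i/\Lambda$ — but since the points $(\rho_{ij_1}, \rho_{ij_1}+\rho_{ij_2})$ are \emph{not} ordered across different $i$, a direct application of Lemma~\ref{lem:jensen-submodular} is not available. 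So the first real step is to re-index: sort the online types $i$ by decreasing $\rho_{ij_1}$ (breaking ties by decreasing $\rho_{ij_2}$, or the reverse — the right tie-break is one of the delicate choices), and think of the left-hand side as an integral $\int_0^{\Lambda} f(\rho_{i(s)j_1}, \rho_{i(s)j_1}+\rho_{i(s)j_2})\, ds$ over a ``mass axis'' $s$, where $i(s)$ is the type occupying position $s$. After the sort, the first coordinate $s \mapsto \rho_{i(s)j_1}$ is non-increasing, but the second coordinate need not be monotone, so I would instead build a \emph{rearrangement} of the second coordinate: replace the actual second coordinates by their decreasing rearrangement, which only decreases $\sum_i \lambda_i f$ because $f$ is DR submodular and hence its restriction to the second coordinate is concave — this is where Lemma~\ref{lem:submodular} (the exchange inequality $f(y)+f(z) \ge f(y\vee z)+f(y\wedge z)$) does the work, applied to adjacent ``swaps'' to push the profile towards a comonotone one.

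Once both coordinates are non-increasing in $s$, the curve is increasing in the sense required by Lemma~\ref{lem:jensen-submodular} after reversing the parametrization, and I would apply that lemma to get $\int f(y(s))\,ds \le \Lambda \cdot f\big(\bar y\big)$ — except we want a \emph{lower} bound, so the direction is flipped and what we actually need is the reverse: we want to show the left-hand side is \emph{at least} the stated integral, i.e.\ we want the worst (smallest) configuration of the $\rho$'s subject to the LP constraints. This is the crux. The LP constraints~\eqref{eqn:poisson-constraint} at levels $m=1$ (for each of $\{j_1\}$, $\{j_2\}$) and $m=2$ (for $\{j_1,j_2\}$) bound, for every prefix $S$ of types in the sorted order, the quantities $\sum_{i\in S}\lambda_i\rho_{ij_1} \le 1-P_0(\lambda_S)$, $\sum_{i\in S}\lambda_i\rho_{ij_2} \le \cdots$, and the pair-sum $\sum_{i\in S}\lambda_i(\rho_{ij_1}+\rho_{ij_2}) \le 2-P_0(\lambda_S)-P_1(\lambda_S)$. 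I would argue that, fixing the sorted order, the configuration minimizing $\sum_i\lambda_i f(\rho_{ij_1},\rho_{ij_1}+\rho_{ij_2})$ over all feasible $\rho$ is the ``greedy front-loaded'' one: put as much mass on $j_1$ as early as possible (making the first coordinate $P_0(\lambda)$-shaped, saturating $\lambda_1^*$), and similarly saturate the pair-constraint so the second coordinate is $P_1(\lambda)$-shaped. An exchange argument — move a bit of $\rho$-mass from a later type to an earlier one, which keeps all prefix constraints satisfied and decreases $\sum \lambda_i f$ by DR submodularity (since $f$ has non-positive cross-derivative and the move increases separation between early and late points) — should establish that the extremal profile is exactly the one plugged into the right-hand side. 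The two cases $\lambda_1^* \le \lambda_2^*$ versus $\lambda_1^* > \lambda_2^*$ arise from which of the two coordinates' ``budgets'' runs out first: when $\lambda_1^* \le \lambda_2^*$, past mass $\lambda_1^*$ the first coordinate is forced to $0$ while the second can still grow along $P_1$; when $\lambda_1^* > \lambda_2^*$, past $\lambda_2^*$ the pair-constraint is exhausted, so $\rho_{ij_2}=0$ for later types and both coordinates equal $P_0(\lambda)$ — and the definitions of $\lambda_1^*,\lambda_2^*$ in the statement (solving $x_{j_1}=1-P_0(\lambda_1^*)$ and the corrected second equation with the $\min\{\lambda_1^*,\lambda_2^*\}$ correction term) are precisely calibrated so that these piecewise-defined curves integrate to the correct marginals $x_{j_1}$ and $x_{j_1}+x_{j_2}$.

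The main obstacle I anticipate is the exchange/extremality argument in the presence of \emph{two} coupled prefix constraints plus the individual ones: a single local move of $\rho$-mass affects the first-coordinate prefix sums, the second-coordinate prefix sums, and the pair prefix sums simultaneously, and I need to verify that there is always a feasible direction of decrease unless the profile is already the claimed extremal one. I would handle this by a careful case analysis on which constraint is tight at the ``boundary'' between the moved types — if the pair-constraint is slack there, move mass in the coordinate that is individually slack; if the pair-constraint is tight, the decreasing-rearrangement step has already forced the structure, and I reduce to a one-coordinate problem governed by Theorem~\ref{thm:converse-jensen} applied to the pair-sum variable $\rho_{ij_1}+\rho_{ij_2}$ with the level-$2$ bound $2-P_0-P_1$ playing the role that $1-e^{-\lambda_S}$ plays in the first-level inequality. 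Subsidiary technical points — that $\lambda_1^*,\lambda_2^*$ are well-defined (monotonicity of $1-P_0$ and of $2-P_0-P_1$, via Lemma~\ref{lem:poisson-cdf-property}), that the rearrangement preserves feasibility of the pair-constraint (it does, since the constraint only involves the sum, which is unchanged pointwise up to reordering), and that the final curves are themselves feasible test points so the bound is tight — are routine and I would relegate them to short verifications.
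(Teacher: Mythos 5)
Your plan is essentially the paper's own proof: after passing to $y_i = \rho_{ij_1}$ and $z_i = \rho_{ij_1}+\rho_{ij_2}$, sort by $y_i$, make $z_i$ comonotone via the exchange inequality of Lemma~\ref{lem:submodular}, split types so that $\lambda_1^*,\lambda_2^*$ land at block boundaries, greedily tighten the prefix Poisson constraints for $y$ and $z$ (using non-positivity of $f$'s Hessian to show this weakly decreases $\sum_i \lambda_i f$), and finish by applying Lemma~\ref{lem:jensen-submodular} block by block against the curve $\lambda \mapsto (P_0(\lambda), P_1(\lambda))$. The coupled-constraints obstacle you flag is dissolved in the paper by observing that the Poisson constraint on $\rho_{ij_2}$ alone (i.e., on $z_i - y_i$) is never needed, leaving only the two prefix constraints on $y$ and on $z$ together with the order constraint $y_i \le z_i$, which is exactly what makes the greedy tightening go through cleanly.
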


The theorem follows by making a sequence of adjustments to $(\rho_{ij_1})_{i \in I}$ and $(\rho_{ij_2})_{i \in I}$  subject to the Poisson constraints for the singleton set $\{j_1\}$ and for set $\{j_1, j_2\}$, so that $(\rho_{ij_1})_{i \in I}$ and $(\rho_{ij_2})_{i \in I}$ are closer to their values on the right-hand-side after each adjustment, and the weighted sum of function values on the left weakly decreases.
We defer the proof to Appendix~\ref{app:2nd-converse-jensen}.

\subsection{Zeroth Level Analysis: Matching Polytope}
\label{sec:0th-level}

We start from the simplest instantiation of the meta analysis, which only assumes that the fractional matching $(x_{ij})_{(i,j) \in E}$ is from the \emph{matching polytope}:
\begin{equation}
    \label{eqn:matching-polytope}
    \forall i \in I: \sum_{j \in J} x_{ij} \le \lambda_i
    ~, \quad
    \forall j \in J: \sum_{i \in I} x_{ij} \leq 1
    ~, \quad
    \forall (i, j) \in E: x_{ij} \ge 0
    ~.
\end{equation}

\begin{lemma}
    \label{lem:0th-level}
    Suppose that $(x_{ij})_{(i,j) \in E}$ is in the matching polytope.
    Then, the Poisson OCS ensures that for any $T \subseteq J$ and any $0 \le t \le 1$:
    \[
        \bar{Y}_T(t) \le \exp \Big( - t \sum_{j \in T} x_j \Big)
        ~.
    \]
\end{lemma}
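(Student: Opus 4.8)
The plan is to apply the meta analysis, Lemma~\ref{lem:meta-analysis}, with the explicit family $p_T(t) = \exp\bigl(-t\,x_T\bigr)$, where $x_T \defeq \sum_{j \in T} x_j$. The boundary condition $p_T(0) = 1$ is immediate, each $p_T$ is positive and smooth, and $\frac{d}{dt} p_T(t) = -x_T\,p_T(t)$. So the only thing to verify is the differential inequality~\eqref{eqn:meta-analysis}, which after dividing through by $p_T(t) > 0$ amounts to showing, for every $T \subseteq J$ and every $0 \le t \le 1$,
\[
    x_T \;\le\; \sum_{i} \lambda_i\,\frac{\sum_{j \in T} e^{t x_j}\rho_{ij}\,p_T(t)}{\sum_{j \in T} e^{t x_j}\rho_{ij}\,p_T(t) + \sum_{j \notin T} e^{t x_j}\rho_{ij}\,p_{T+j}(t)}\,.
\]

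The first step is to simplify the right-hand side using $p_{T+j}(t) = p_T(t)\,e^{-t x_j}$ for $j \notin T$: the factor $e^{t x_j}$ in the $j \notin T$ sum cancels, and $p_T(t)$ cancels throughout. Combined with $x_T = \sum_i \sum_{j \in T} x_{ij} = \sum_i \lambda_i \sum_{j \in T}\rho_{ij}$, it then suffices to establish
\[
    \sum_i \lambda_i \sum_{j \in T}\rho_{ij} \;\le\; \sum_i \lambda_i\,\frac{\sum_{j \in T} e^{t x_j}\rho_{ij}}{\sum_{j \in T} e^{t x_j}\rho_{ij} + \sum_{j \notin T}\rho_{ij}}\,,
\]
and, because $\lambda_i \ge 0$, I would prove this term by term. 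Fixing $i$ and writing $a = \sum_{j \in T} e^{t x_j}\rho_{ij}$, $b = \sum_{j \notin T}\rho_{ij}$, $s = \sum_{j \in T}\rho_{ij}$, the claim is $s \le \frac{a}{a+b}$ (interpreting the fraction as $0$ when $a = b = 0$, which forces $s = 0$ as well). Two facts close it: $e^{t x_j}\ge 1$ for $t, x_j \ge 0$ gives $a \ge s$; and the matching-polytope constraint $\sum_j x_{ij}\le\lambda_i$ gives $s + b = \sum_j \rho_{ij} \le 1$, so $1 - s \ge b \ge 0$. Then $a(1 - s) \ge s(1 - s) \ge s\,b$, which rearranges to $a \ge s(a+b)$, i.e., $s \le \frac{a}{a+b}$.

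I do not expect a genuine obstacle here: this is the degenerate bottom of the hierarchy, and the only care needed is the bookkeeping cancellation $p_{T+j} = p_T\,e^{-t x_j}$ and the $0/0$ convention for a summand, which coincides exactly with the case $\rho_{ij}=0$ for all $j$. It is worth noting explicitly that this bound is tight for the matching polytope alone — it degenerates to the trivial per-vertex guarantee $\bar Y_j(t)\le e^{-t x_j}$ when $T = \{j\}$ — precisely because the matching polytope imposes nothing beyond $\sum_i x_{ij}\le 1$; the tighter LPs of the later subsections are what allow one to improve on it.
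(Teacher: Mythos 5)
Your proof is correct and follows essentially the same route as the paper: plug the explicit family $p_T(t) = \exp(-t\sum_{j\in T} x_j)$ into Lemma~\ref{lem:meta-analysis}, observe the cancellations that reduce the right-hand side to a per-$i$ fraction, and then use $e^{t x_j}\ge 1$ together with $\sum_j \rho_{ij}\le 1$ to lower-bound that fraction by $\sum_{j\in T}\rho_{ij}$. The paper closes the per-$i$ step by algebraically rewriting the fraction as $\sum_{j\in T}\rho_{ij}$ plus a nonnegative remainder, while you argue the equivalent inequality $a(1-s)\ge sb$ directly; this is a cosmetic difference, not a different method.
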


\begin{proof}
    It suffices to show that functions $p_T(t) = \exp \big( - t \sum_{j \in T} x_j \big)$ for $T \subseteq J$ satisfy Equation~\eqref{eqn:meta-analysis}.
    With these functions, and by $\sum_j \rho_{ij} \le 1$, the right-hand side (omitting the negative sign) is:
    \[
        \exp \Big( - t \sum_{j \in T} x_j \Big) \sum_i \lambda_i \frac{\sum_{j \in T} e^{tx_j} \rho_{ij}}{\sum_{j \in T} e^{tx_j} \rho_{ij} + \sum_{j \notin T} \rho_{ij}} \ge \exp \Big( - t \sum_{j \in T} x_j \Big) \sum_i \lambda_i \frac{\sum_{j \in T} e^{tx_j} \rho_{ij}}{\sum_{j \in T} (e^{tx_j} - 1) \rho_{ij} + 1}
        ~.
    \]

    Further:
    \[
        \frac{\sum_{j \in T} e^{tx_j} \rho_{ij}}{\sum_{j \in T} (e^{tx_j} - 1) \rho_{ij} + 1} = \frac{(\sum_{j \in T} (e^{tx_j} - 1) \rho_{ij}) (1 - \sum_{j \in T} \rho_{ij} )}{\sum_{j \in T} (e^{tx_j} - 1) \rho_{ij} + 1} + \sum_{j \in T} \rho_{ij}
        \ge \sum_{j \in T} \rho_{ij}
        ~.
    \]

    Hence, the right hand side is at most:
    \[
        - \exp \Big( - t \sum_{j \in T} x_j \Big) \sum_i \lambda_i \sum_{j \in T} \rho_{ij}
        = - \exp \Big( - t \sum_{j \in T} x_j \Big) \sum_{j \in T} x_j
        ~.
    \]
    which equals the left-hand-side
\end{proof}

This implies that for any $j \in J$, $1-\bar{Y}_j(1) \geq 1-e^{-x_j} \geq (1-\frac{1}{e})x_j$.
Hence, the Poisson OCS with Matching LP is $(1-\frac{1}{e})$-competitive for unweighted and vertex-weighted online stochastic matching.

\subsection{First Level Analysis: Natural Polytope}
\label{sec:1st-level}

Next, we further assume that $(x_{ij})_{(i,j) \in E}$ is in the polytope of the Natural LP.
This polytope is a subset of the matching polytope so Lemma~\ref{lem:0th-level} still holds.
Using the Converse Jensen Inequality (Theorem~\ref{thm:converse-jensen}), we prove a stronger bound for singleton sets.

\begin{theorem}
    \label{thm:1st-level}
    Suppose that $(x_{ij})_{(i,j) \in E}$ is in the polytope of the Natural LP.
    Then, Poisson OCS matches each offline vertex $j$ with probability at least $0.707\,x_j$ by the end, and therefore is $0.707$ competitive for unweighed and vertex-weighted online stochastic matching.
    %
    %
\end{theorem}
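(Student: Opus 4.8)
The plan is to invoke the meta-analysis (Lemma~\ref{lem:meta-analysis}) with a family $\{p_T\}_{T\subseteq J}$ that coincides with the zeroth-level envelope on all sets of size at least two, but is a strictly better function on singletons. Concretely, for $|T|\ge 2$ I keep $p_T(t)=\exp(-t\sum_{j\in T}x_j)$, which satisfies~\eqref{eqn:meta-analysis} exactly as in the proof of Lemma~\ref{lem:0th-level}; note every set $T+j$ appearing on the right-hand side of~\eqref{eqn:meta-analysis} for such a $T$ again has size at least two, so the mixed family is internally consistent and it remains only to construct $p_j$ for singletons. For $T=\{j\}$ the right-hand side of~\eqref{eqn:meta-analysis} references only the already-fixed pair functions $p_{\{j,j'\}}(t)=e^{-t(x_j+x_{j'})}$, and their cross terms collapse: $e^{t x_{j'}}\rho_{ij'}p_{\{j,j'\}}(t)=e^{-t x_j}\rho_{ij'}$, so type $i$ contributes $\lambda_i\cdot\frac{e^{t x_j}\rho_{ij}p_j(t)}{e^{t x_j}\rho_{ij}p_j(t)+e^{-t x_j}(\rho_i-\rho_{ij})}$ to the instantaneous match rate of $j$.

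Since $\rho_i=\sum_{j'}\rho_{ij'}\le 1$ by the first Natural-LP constraint, replacing $\rho_i-\rho_{ij}$ by $1-\rho_{ij}$ only decreases this quantity; multiplying numerator and denominator by $e^{t x_j}$ and writing $\gamma(t)\defeq e^{2t x_j}p_j(t)$, it equals $\lambda_i\,g_t(\rho_{ij})$ with $g_t(\rho)=\frac{\gamma(t)\rho}{\gamma(t)\rho+1-\rho}$, a function with $g_t(0)=0$ that is concave precisely when $\gamma(t)\ge 1$. Assuming this concavity, the Converse Jensen Inequality (Theorem~\ref{thm:converse-jensen} applied to the convex function $f=-g_t$) lower-bounds the total match rate by $\int_0^{-\ln(1-x_j)}g_t(e^{-\lambda})\,d\lambda$. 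I therefore define $p_j$ as the solution of
\[
    -\frac{d}{dt}\ln p_j(t)=\int_0^{-\ln(1-x_j)}\frac{\gamma(t)\,e^{-\lambda}}{\gamma(t)\,e^{-\lambda}+1-e^{-\lambda}}\,d\lambda,\qquad p_j(0)=1,
\]
so that the chain of inequalities above certifies~\eqref{eqn:meta-analysis} for $T=\{j\}$, and Lemma~\ref{lem:meta-analysis} yields $\bar{Y}_j(1)\le p_j(1)$.

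Two verifications remain. First, the concavity condition $\gamma(t)\ge 1$: it holds with equality at $t=0$, and $\gamma'(t)=\gamma(t)\bigl(2x_j-(-\tfrac{d}{dt}\ln p_j(t))\bigr)$, so it suffices that the integral in the ODE never exceeds $2x_j$ over the relevant range of $\gamma$ — which is bounded because $g_t(\rho)\ge\rho$ when $\gamma\ge1$ forces decay rate $\ge x_j$, hence $p_j(t)\le e^{-t x_j}$ and $\gamma(t)\le e^{x_j}\le e$ — and a short bootstrap/continuity argument then shows $\gamma$ is nondecreasing and stays $\ge 1$. Second, after the substitution $u=e^{-\lambda}$ the ODE becomes autonomous in $\gamma$, namely $\gamma'(t)=\gamma(t)\bigl(2x_j-\tfrac{\gamma(t)}{\gamma(t)-1}\ln\tfrac{\gamma(t)}{\gamma(t)-(\gamma(t)-1)x_j}\bigr)$ with $\gamma(0)=1$, which is separable; solving it and reading off $p_j(1)=e^{-2x_j}\gamma(1)$ reduces the theorem to the single-variable inequality $1-e^{-2x_j}\gamma(1)\ge 0.707\,x_j$ for all $x_j\in[0,1]$, certified by calculus with the extremal $x_j$ located numerically. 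The competitiveness claim then follows by summing $1-\bar{Y}_j(1)\ge 1-p_j(1)\ge 0.707\,x_j$ against the vertex weights and using $\sum_j w_j x_j=\opt$.

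I expect the main obstacle to be the self-referential definition of $p_j$ entangled with the concavity requirement $\gamma(t)\ge 1$ that is needed to even invoke Converse Jensen — this is what forces the bootstrap argument — together with the final optimization over $x_j$, which requires understanding the solution of the (non-elementary) separable ODE well enough to pin down the constant $0.707$. A secondary subtlety is the conceptual step of recognizing that the worst-case match-rate functional is exactly the Converse-Jensen envelope $\int_0^{-\ln(1-x_j)}g_t(e^{-\lambda})\,d\lambda$, and checking that the mixed family is admissible for Lemma~\ref{lem:meta-analysis}, which works only because singleton functions never appear on the right-hand side of~\eqref{eqn:meta-analysis} for sets of size $\ge 2$.
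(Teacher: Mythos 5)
Your proposal is correct and follows essentially the same route as the paper: keep the zeroth-level envelope $e^{-t\sum_{j\in T}x_j}$ for $|T|\ge 2$, define $p_j$ by an ODE whose right-hand side is exactly the Converse-Jensen envelope, prove the a priori bound $\gamma(t)=e^{2tx_j}p_j(t)\ge 1$ (equivalently the paper's $Q=1/\gamma\le 1$, Lemma~\ref{lem:1st-level-loose-bound}) to justify convexity/concavity, and finally optimize over $x_j\in[0,1]$. The only cosmetic differences are that you work with $\gamma=1/Q$ rather than $Q$ and you observe the resulting ODE is autonomous/separable, which the paper does not spell out but implicitly uses when it parametrizes $p_x$ by $x$ alone and then establishes via Lemma~\ref{lem:x-maximum} that the worst case is $x=1$.
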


\begin{proof}
    We will use the zeroth level bounds for all subsets $T$ with at least two offline vertices, and define $p_j(t)$ for individual offline vertices $j$ by recursively:
    %
    \begin{align}\label{eqn:1st-level-recurrence}
        p_T(t) ~ = ~ &\exp \Big( - t \sum_{j \in T} x_j \Big) ~, & \qquad \forall T\subseteq J, |T|\geq 2
        ~; \notag\\
        p_j(0) ~ = ~ & 1 ~, & \forall j\in J
        ~; \notag \\
        \frac{d}{dt} p_j(t) ~ = ~ & p_j(t) \, \frac{\log \big( 1 - x_j + x_j \cdot \frac{e^{-2tx_j}}{ p_j(t)} \big)} { 1 - \frac{e^{-2tx_j}}{p_j(t)}} ~,\footnotemark[4]%
        & \forall j\in J~.
    \end{align}

    \footnotetext[4]{Define it to be $-p_j(t) x_j$ at the boundary case when $p_j(t) = e^{-2x_j t}$.}

    We next verify for any offline vertex $j$ that function $p_j(t)$ satisfies Eqn.~\eqref{eqn:meta-analysis}.
    First we state a lower bound of $p_j(t)$ to ensure non-negativity of the denominator.
    Appendix~\ref{app:1st-level-loose-bound} includes the proof.

    \begin{lemma}
        \label{lem:1st-level-loose-bound}
        For any $j \in J$ and any $0 \le t \le 1$, $p_j(t) \ge e^{-2tx_j}$.
    \end{lemma}

    Next by $p_{\{j, j'\}} = e^{-t(x_j+x_{j'})}$, the right-hand-side of Eqn.~\eqref{eqn:meta-analysis} is (omitting the negative sign):
    \[
        p_j(t) \sum_i \lambda_i \frac{\rho_{ij}}{\rho_{ij} + \frac{e^{-2tx_j}}{p_j(t)} \sum_{j' \neq j} \rho_{i{j'}}}\\
        ~ \ge ~
        p_j(t) \sum_i \lambda_i \frac{\rho_{ij}}{\rho_{ij} + \frac{e^{-2tx_j}}{p_j(t)}(1 - \rho_{ij})}
        ~.
    \]
    
    By the Converse Jensen Inequality (Theorem~\ref{thm:converse-jensen}) with $f(x)=-\frac{x}{x+Q\cdot(1-x)}$ where $Q=\frac{e^{-2t x_j}}{p_j(t)} \le 1$ (Lemma~\ref{lem:1st-level-loose-bound}), the right-hand-side of Eqn.~\eqref{eqn:meta-analysis} is at most:
    \[
        p_j(t) \int_{0}^{-\ln(1-x_j)}-\frac{e^{-\lambda}}{e^{-\lambda}+Q\cdot(1-e^{-\lambda})} d\lambda
        ~ = ~
        p_j(t)\frac{\log\left(1-x_j+Q\cdot x_j\right)}{1-Q}
        ~,
    \]
    which equals the left-hand-side of Eqn.~\eqref{eqn:meta-analysis} according to our recurrence \eqref{eqn:1st-level-recurrence}.

    Since the recurrence~\eqref{eqn:1st-level-recurrence} is determined by $x_j$ but no other information about $j$, we instead consider for any $x \in [0, 1]$:
    \begin{equation}
        \label{eqn:1st-level-recurrence-x}
        p_x(0) = 1
        ~;\quad
        \forall\,0 \le t \le 1, ~
        \frac{d}{dt} p_x(t) = p_x(t) \, \frac{\log \big( 1 - x + x \cdot \frac{e^{-2xt}}{ p_x(t)} \big)} { 1 - \frac{e^{-2xt}}{p_x(t)}}
        ~.
    \end{equation}

    It remains to show that $\min_{x\in[0,1]} \frac{1-p_x(1)}{x} \ge 0.707$.
Calculating it numerically at $x = 1$ gives $1 - p_1(1) \approx 0.7075 > 0.707$ as desired.
    We next prove that the ratio is defined by $x = 1$.
    Consider:
    \[
        \frac{d}{dt} \frac{1-p_x(t)}{x} = - \, p_x(t) \, \frac{\log \big( 1 - x + x \cdot \frac{e^{-2tx}}{ p_x(t)} \big)} { x \big( 1 - \frac{e^{-2tx}}{p_x(t)} \big)}
        ~.
    \]

    For any time $0 \le t \le 1$ and any $0 \le x \le 1$, consider $y = \frac{1-p_x(t)}{x}$.
    Applying Lemma~\ref{lem:1st-level-loose-bound} and $\log(1-z) \le -z$ to recurrence~\eqref{eqn:1st-level-recurrence-x}, we have $\frac{d}{dt} p_x(t) \leq -p_x(t) x$ and  therefore $p_x(t) \le e^{-xt}$.
    This further means that $y\geq \frac{1-e^{-xt}}{x} \ge 1-e^{-t}$.
    We have the next lemma, whose proof is in Appendix~\ref{app:x-maximum}.

    \begin{lemma}
        \label{lem:x-maximum}
        For any $t\in[0,1]$, any $x \in [0, 1]$, and any $y \geq 1-e^{-t}$:
        \[
            (1-yx) \, \frac{\log\big(1-x+x\cdot\frac{e^{-2xt}}{ 1-yx}\big)}{x\big(1-\frac{e^{-2x t}}{1-yx}\big)}
            \le 
            (1-y) \, \frac{\log\big(\frac{e^{-2t}}{1-y}\big)}{1-\frac{e^{-2t}}{1-y}}
            ~,
        \]
        with equality at $x = 1$.
    \end{lemma}

    Hence, conditioned on any current value of $\frac{1-p_x(t)}{x}$, it decreases the fastest when $x=1$.
\end{proof}

\subsection{Second Level Analysis: Second Level Poisson Matching Polytope}
\label{sec:2nd-level}

Our final result for unweighted and vertex-weighted matching assumes that $(x_{ij})_{(i,j) \in E}$ lies in the polytope of the second level Poisson Matching LP.
Since it is a subset of the zeroth and first level polytopes, results from the last two subsections still hold.
Using the first and second level Converse Jensen Inequality (Theorems~\ref{thm:converse-jensen} and \ref{thm:2nd-converse-jensen}), we next prove the main result of the section.

\begin{theorem}
    \label{thm:2nd-level}
    Suppose that $(x_{ij})_{(i,j) \in E}$ is in the polytope of the second level Poisson Matching LP.
    Then, Poisson OCS matches each offline vertex $j$ with probability at least $0.716\,x_j$ by the end, and therefore is $0.716$ competitive for unweighed and vertex-weighted online stochastic matching.
\end{theorem}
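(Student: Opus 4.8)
The plan is to follow the blueprint of the proof of Theorem~\ref{thm:1st-level}: construct a three-tier family of functions $p_T : [0,1] \to \R^+$, $T \subseteq J$, that satisfy the reverse differential inequality~\eqref{eqn:meta-analysis}, so that Lemma~\ref{lem:meta-analysis} yields $\bar{Y}_T(t) \le p_T(t)$ for all $T, t$, and in particular an improved bound on $\bar{Y}_j(1)$ for singletons. For $|T| \ge 3$ I would keep the zeroth-level exponential bounds $p_T(t) = \exp(-t \sum_{j \in T} x_j)$ of Lemma~\ref{lem:0th-level}. For pairs $T = \{j_1, j_2\}$ I would \emph{define} $p_{\{j_1,j_2\}}$ as the solution of the ODE obtained by turning~\eqref{eqn:meta-analysis} into an equality after relaxing the size-three sets to their exponential bounds; the second level Converse Jensen (Theorem~\ref{thm:2nd-converse-jensen}) is exactly the device that converts the resulting $\sum_i \lambda_i(\cdot)$ into a univariate integral. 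For singletons $T = \{j\}$ I would again close~\eqref{eqn:meta-analysis} into an ODE, but now substitute the sharper pair bounds $p_{\{j,j'\}}$ in place of $e^{-t(x_j + x_{j'})}$, and use the first level Converse Jensen (Theorem~\ref{thm:converse-jensen}), which is still available because the second level polytope lies inside the natural LP polytope. Feeding $p_{\{j\}}(1)$ into $1 - \bar{Y}_j(1) \ge 1 - p_{\{j\}}(1)$ and optimising over $x_j$ should give the $0.716\,x_j$ guarantee.

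For the pair step, fix a type $i$ and a time $t$, and label the two vertices so that $x_{j_1} \ge x_{j_2}$. Substituting $p_{\{j_1,j_2,j_3\}}(t) = e^{-t(x_{j_1}+x_{j_2}+x_{j_3})}$ and using $\sum_{j_3} \rho_{ij_3} \le 1 - \rho_{ij_1} - \rho_{ij_2}$, the $i$-th summand on the right of~\eqref{eqn:meta-analysis} is at least $f(\rho_{ij_1},\, \rho_{ij_1}+\rho_{ij_2})$, where, writing $a(u,v) = e^{tx_{j_1}} u + e^{tx_{j_2}}(v-u)$ and $C = e^{-t(x_{j_1}+x_{j_2})}$,
\[
    f(u,v) \;=\; \frac{a(u,v)\, p_{\{j_1,j_2\}}(t)}{a(u,v)\, p_{\{j_1,j_2\}}(t) \,+\, C\,(1-v)} \,.
\]
This $f$ is normalised since $a(0,0) = 0$. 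I would verify it is DR submodular on $[0,1]^2$ --- the slices in $u$ and in $v$ are concave compositions of fractional-linear maps with affine maps, and the cross-partial is non-positive once the heavier vertex is $j_1$ and a crude lower bound on $p_{\{j_1,j_2\}}(t)$ is used to sign the remaining term. Then Theorem~\ref{thm:2nd-converse-jensen} lower bounds $\sum_i \lambda_i f(\rho_{ij_1}, \rho_{ij_1}+\rho_{ij_2})$ by its piecewise integral in $\lambda_1^*, \lambda_2^*$, and I would set $\frac{d}{dt} p_{\{j_1,j_2\}}(t)$ equal to $-p_{\{j_1,j_2\}}(t)$ times that integral with $p_{\{j_1,j_2\}}(0) = 1$, a $(x_{j_1},x_{j_2})$-parameterised ODE. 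As with Lemma~\ref{lem:1st-level-loose-bound} I would first establish a crude lower bound on $p_{\{j_1,j_2\}}(t)$ to keep every denominator positive and the ODE well posed, and check that at $x_{j_2} = 0$ it degenerates to the first level singleton recurrence~\eqref{eqn:1st-level-recurrence-x}.

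For the singleton step the new difficulty is that the denominator $e^{t x_j} \rho_{ij} p_{\{j\}}(t) + \sum_{j' \ne j} e^{t x_{j'}} \rho_{ij'} p_{\{j,j'\}}(t)$ no longer collapses to an $x_{j'}$-free quantity, as it did at the first level where it equalled $e^{-t x_j}\sum_{j'}\rho_{ij'}$. I would lower bound each $e^{t x_{j'}} p_{\{j,j'\}}(t)$ by its minimum over $x_{j'} \in [0,1]$, a quantity $\phi(t, x_j)$ depending only on $t$ and $x_j$ (and at most $e^{t x_j} p_{\{j\}}(t)$, via the $x_{j'}=0$ endpoint and the crude bounds); then after $\sum_{j' \ne j} \rho_{ij'} \le 1 - \rho_{ij}$ the $i$-th summand is at least $\frac{\rho_{ij}}{\rho_{ij} + \tilde{Q}(1-\rho_{ij})}$ with $\tilde{Q} = \phi(t, x_j)\, e^{-t x_j} / p_{\{j\}}(t) \le 1$, and applying Theorem~\ref{thm:converse-jensen} to the convex, normalised function $x \mapsto -\frac{x}{x + \tilde{Q}(1-x)}$ exactly as in the proof of Theorem~\ref{thm:1st-level} gives the ODE for $p_{\{j\}}$. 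Repeating the ``worst case is $x_j = 1$'' argument of Lemma~\ref{lem:x-maximum}, now with the companion parameter absorbed into $\phi$, reduces the claim to a low-dimensional numerical optimisation, which I expect to be governed by $x_j = 1$ and to evaluate to a constant slightly above $0.716$.

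The main obstacle is the pair step. Theorem~\ref{thm:2nd-converse-jensen} applies only to DR submodular $f$, so one must verify that the explicit ratio above is DR submodular over all of $[0,1]^2$; the sign of its cross-partial is delicate and is controlled only after labelling the heavier vertex as $j_1$ and invoking the crude lower bound on $p_{\{j_1,j_2\}}$. Next, the piecewise ($\lambda_1^* \lessgtr \lambda_2^*$) form of Theorem~\ref{thm:2nd-converse-jensen} has to be threaded consistently through the pair ODE and its $x_{j_2} = 0$ degeneration. Finally, the two extremal relaxations --- $\sum_{j_3}\rho_{ij_3} \le 1 - \rho_{ij_1} - \rho_{ij_2}$ in the pair ODE and the minimisation defining $\phi$ in the singleton ODE --- are what collapse a genuinely infinite coupled system of differential inequalities to finitely many low-dimensional inequalities checkable numerically, and one still has to confirm that the resulting $p_{\{j_1,j_2\}}$ is strictly tighter than $e^{-t(x_{j_1}+x_{j_2})}$, so the singleton bound really improves on the first level $0.707$.
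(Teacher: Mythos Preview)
Your three-tier construction --- zeroth-level exponentials for $|T|\ge 3$, a pair ODE driven by Theorem~\ref{thm:2nd-converse-jensen}, a singleton ODE driven by Theorem~\ref{thm:converse-jensen} with the improved pair bounds fed in --- is precisely the paper's, but two of the steps fail as written. In the singleton step the direction of $\phi$ is reversed: to lower bound the summand in~\eqref{eqn:meta-analysis} you must \emph{upper} bound the denominator, so you need a quantity $q_{x_j}(t)\ge e^{tx_{j'}}p_{\{j,j'\}}(t)$ for \emph{all} $j'$, i.e., the supremum over $x'\in[0,1]$, not the infimum. With $\phi$ defined as the minimum, the claimed inequality ``the $i$-th summand is at least $\rho_{ij}/(\rho_{ij}+\tilde Q(1-\rho_{ij}))$'' is false. (The bound $\tilde Q\le 1$ you want then follows from $d_{x_1,x_2}(t)\le e^{-t(x_1+x_2)}$, which gives $q_{x_j}(t)\le e^{-tx_j}$.) In the pair step, your $f$ is not DR submodular on all of $[0,1]^2$: writing $A=e^{t(2x_1+x_2)}p_{\{j_1,j_2\}}(t)$ and $B=e^{t(x_1+2x_2)}p_{\{j_1,j_2\}}(t)$, a direct computation gives $\partial_{vv}f=-2((A-B)u+B)(B-1)/D^3$, which is strictly positive whenever $B<1$, i.e., whenever $p_{\{j_1,j_2\}}(t)<e^{-t(x_1+2x_2)}$. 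You cannot rule this out a priori without a lower bound on $p_{\{j_1,j_2\}}$ that is itself circular with the ODE you are defining. The paper fixes this not via a crude lower bound but by a further relaxation: it replaces the denominator coefficients $A-1$ and $B-1$ by $(A-1)^+$ and $(B-1)^+$, which only enlarges the denominator (hence is a valid weakening of~\eqref{eqn:meta-analysis}) and makes the resulting function DR submodular unconditionally by a case analysis.

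Finally, the reduction to $x_j=1$ via an analogue of Lemma~\ref{lem:x-maximum} does not carry over. Once the pair information enters through $q_{x_j}(t)$, the singleton recurrence depends on $x_j$ in a way not captured by the single state variable $y=\frac{1-p_j(t)}{x_j}$, and no monotonicity argument of that type is available. The paper does not attempt such a reduction; it discretises both $t$ and $x$ and numerically verifies $1-s_x(1)\ge 0.716\,x$ on a grid dense enough in $[0,1]$ that continuity covers the remaining values.
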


\begin{proof}
    We will use the zeroth level bounds for all subsets $T$ with at least three offline vertices:
    \[
        p_T(t) = \exp \Big( - t \sum_{j \in T} x_j \Big) ~, \qquad \forall\,T\subseteq J \mbox{ s.t. } |T|\geq 3
        ~.
    \]

    We next define a family of functions $p_T(t)$ for $|T| = 2$ that satisfy Eqn.~\eqref{eqn:meta-analysis}.
    The proofs of all lemmas are deferred to the end of the subsection.

    \begin{lemma}
        \label{lem:2nd-level-doubleton}
        Suppose that for any $1 \ge x_1 \ge x_2 \ge 0$ and the corresponding $\lambda_1^*, \lambda_2^*$ satisfying:
        \begin{align*}
            x_1 & = 1 - P_0(\lambda_1^*) ~, \\[1ex]
            x_2 & = \big( 2 - P_0(\lambda_2^*) - P_1(\lambda_2^*) \big) - \big( 1 - P_0(\min\{\lambda_1^*, \lambda_2^*\}) \big) ~,
        \end{align*}
        function $d_{x_1, x_2} : [0, 1] \to [0, 1]$ satisfies that:
        \begin{align*}
            d_{x_1, x_2}(0) & = 1 ~; \\
            \frac{d}{dt} d_{x_1, x_2}(t) &
            = - \, d_{x_1, x_2}(t) \cdot 
            \begin{cases}
                \displaystyle
                \int_0^{\lambda_1^*} f_{x_1,x_2} \big( P_0(\lambda), P_1(\lambda) \big) d\lambda + \int_{\lambda_1^*}^{\lambda_2^*} f_{x_1,x_2} \big( 0, P_1(\lambda) \big) d\lambda
                & \mbox{if } \lambda_1^* \le \lambda_2^* ~, \\[3ex]
                \displaystyle
                \int_0^{\lambda_2^*} f_{x_1,x_2} \big( P_0(\lambda), P_1(\lambda) \big) d\lambda + \int_{\lambda_2^*}^{\lambda_1^*} f_{x_1,x_2} \big( P_0(\lambda), P_0(\lambda) \big) d\lambda
                & \mbox{if } \lambda_1^* > \lambda_2^* ~,
            \end{cases}
        \end{align*}
        where (recall that $z^+ = \max \{z, 0\}$):
        \[
            f_{x_1,x_2}(y, z) = \frac{e^{t(2x_1+x_2)} d_{x_1, x_2}(t) \, y + e^{t(x_1+2x_2)} d_{x_1, x_2}(t) \, (z-y)} { \big(e^{t(2x_1+x_2)} d_{x_1, x_2}(t) - 1\big)^+ \, y + \big(e^{t(x_1+2x_2)} d_{x_1, x_2}(t) - 1 \big)^+\, (z-y) + 1}
            ~.
        \]
        Then, for any $T = \{j_1, j_2\} \subseteq J$ where $x_{j_1} \ge x_{j_2}$, $p_T(t) = d_{x_{j_1}, x_{j_2}}(t)$ satisfies Equation~\eqref{eqn:meta-analysis}.
    \end{lemma}

    In other words, we use the second level Converse Jensen Inequality to get an improved bound when $T$ is a doubleton.
    To propagate the improvement to individual offline vertices, we consider functions $q_x : [0, 1] \to [0, 1]$ for $x \in [0, 1]$ such that for any $x' \in [0, 1]$:
    \begin{equation}
        \label{eqn:2nd-level-q-condition}
        q_x(t) \ge 
        \begin{cases}
            e^{-tx'} d_{x, x'}(t) & \mbox{if $x \ge x'$;} \\[1ex]
            e^{-tx'} d_{x', x}(t) & \mbox{if $x < x'$.}
        \end{cases}
    \end{equation}

    This allows us have the same coefficients for all $j \notin T$ in the denominator in Eqn.~\eqref{eqn:meta-analysis} for singleton $T$, as we will see in the proof of the next lemma.

    \begin{lemma}
        \label{lem:2nd-level-singleton}
        Suppose that for any $0 \le x \le 1$, function $s_x : [0, 1] \to [0, 1]$ satisfies:
        \begin{align*}
            s_x(0) & = 1 ~; \\
            \frac{d}{dt} s_x(t) &
            = \frac{s_x(t) \log \left( 1 - x + x \cdot \frac{e^{-t x} q_x(t)}{s_x(t)} \right)} { 1 - \frac{e^{-t x} q_x(t)}{s_x(t)}}
            ~.
        \end{align*}
        Then, for any $j \in J$, $p_j(t) = s_{x_j}(t)$ satisfies Equation~\eqref{eqn:meta-analysis}.
    \end{lemma}

    It remains to find functions $\big(d_{x_1, x_2}\big)_{1 \ge x_1 \ge x_2 \ge 0}$, $\big(q_x\big)_{0 \le x \le 1}$, and $\big(s_x\big)_{0 \le x \le 1}$ that on the one hand satisfy the conditions in Lemmas~\ref{lem:2nd-level-doubleton} and \ref{lem:2nd-level-singleton} and in Equation~\eqref{eqn:2nd-level-q-condition}, and on the other hand ensure that for any $0 \le x \le 1$:
    \begin{equation}
        \label{eqn:2nd-level-ratio}
        1 - s_x(1) \ge 0.716 x
        ~.
    \end{equation}

    While we do not have the functions in closed forms, Appendix~\ref{app:2nd-numerical} explains how to numerically verify it for any given $0 \le x \le 1$ by an appropriate discretization of the functions.
    Further, once we have verified it for a finite yet sufficiently dense subset of $[0, 1]$ with ratios strictly better than $0.716$, we also cover the other values of $x$ in between.
\end{proof}

We next prove that the functions $f_{x_1, x_2}$ defined above satisfy the condition of the second level Converse Jensen Inequality.
This allows us to prove Lemma~\ref{lem:2nd-level-doubleton}.

\begin{lemma}
    \label{lem:f-dr-submodular}
    For any $1 \ge x_1 \ge x_2 \ge 0$, function $f_{x_1,x_2}$ is normalized and DR submodular.
\end{lemma}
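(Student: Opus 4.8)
The plan is to verify the two requirements separately, with the bulk of the work on DR submodularity. Normalization is immediate: plugging $y=z=0$ into $f_{x_1,x_2}$ makes the numerator vanish while the denominator is $1$, so $f_{x_1,x_2}(0,0)=0$. For the rest, write $a \defeq e^{t(2x_1+x_2)}\,d_{x_1,x_2}(t)$ and $b \defeq e^{t(x_1+2x_2)}\,d_{x_1,x_2}(t)$, and note $a\ge b\ge 0$ because $x_1\ge x_2\ge 0$ and $d_{x_1,x_2}(t)\ge 0$. Then $f_{x_1,x_2}(y,z)=N(y,z)/D(y,z)$ with the linear numerator $N(y,z)=bz+(a-b)y$ and the affine denominator $D(y,z)=(b-1)^+z+\bigl((a-1)^+-(b-1)^+\bigr)y+1$. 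All coefficients of $D$ are nonnegative (using $a\ge b$ to get $(a-1)^+\ge(b-1)^+$), so $D\ge 1>0$ on $[0,1]^2$; hence $f_{x_1,x_2}$ is smooth there and $N\ge 0$ as well.

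Next I would use the standard Hessian formula for a linear-fractional function. Writing $N_i=\partial_i N$, $D_i=\partial_i D$ (all constants), one has $\partial_{ij}f_{x_1,x_2}=\bigl(2N D_iD_j-D(N_iD_j+N_jD_i)\bigr)/D^3$. Since $D>0$, DR submodularity is equivalent to the pointwise inequalities $2N D_iD_j\le D(N_iD_j+N_jD_i)$ on $[0,1]^2$ for the three index pairs $(1,1)$, $(2,2)$, $(1,2)$. Each of these is affine in $(y,z)$, so it suffices to check it at the four corners of $[0,1]^2$, and this is where I would feed in the elementary relations forced by the truncations and by $a\ge b\ge 0$: always $0\le (a-1)^+-(b-1)^+\le a-b$ and $0\le (b-1)^+\le b$, and crucially the collapse $(a-1)^+-(b-1)^+=a-b$ whenever $b\ge 1$.

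For $(1,1)$ and $(2,2)$, cancelling the common factor $2D_i$ reduces the claim to $N D_i\le D N_i$, whose corner values simplify (after using $0\le D_i\le N_i$ in the relevant coordinate) to elementary termwise inequalities such as $\max\{1,b\}\ge b$ and $p_1+p_2\ge 0$. The genuinely case-dependent step — and the one I expect to require the most careful bookkeeping — is the cross term $(1,2)$. When $b\le 1$ the coefficient $(b-1)^+$ is zero, so $D_2=0$, the left side $2N D_1D_2$ vanishes identically, and the inequality holds trivially. When $b>1$ all the truncations become identities, $D_1=N_1=a-b$ and $D_2=b-1$, and after dividing out $a-b$ (the case $a=b$ being trivial) the inequality reduces to $2N(b-1)\le D(2b-1)$, i.e.\ to $(a-b)y-(b-1)z+(2b-1)\ge 0$ on $[0,1]^2$; its minimum is attained at $y=0,z=1$ and equals $b>0$. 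Combining the three index pairs shows every entry of the Hessian is $\le 0$ on $[0,1]^2$, so $f_{x_1,x_2}$ is DR submodular.

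One point worth flagging is that DR submodularity is needed on the full square $[0,1]^2$, not merely on the region $\{0\le y\le z\}$ relevant to Theorem~\ref{thm:2nd-converse-jensen}; since the corner evaluations above already range over all of $[0,1]^2$, this costs nothing extra. With Lemma~\ref{lem:f-dr-submodular} in hand, $f_{x_1,x_2}$ is a legitimate test function for the second level Converse Jensen Inequality (Theorem~\ref{thm:2nd-converse-jensen}), which is exactly what is invoked to establish the doubleton differential inequality in Lemma~\ref{lem:2nd-level-doubleton}.
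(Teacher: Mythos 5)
Your proof is correct, and it takes a genuinely cleaner route than the paper's. The paper's proof splits into three regimes ($A \ge B \ge 1$, $A \ge 1 > B$, $A,B < 1$) and in each directly computes and sign-checks the three second partials $\partial_{y^2}f$, $\partial_{yz}f$, $\partial_{z^2}f$ from the closed form of $f$. You instead exploit that $f=N/D$ is linear-fractional with affine $N,D$, so $\partial_{ij}f = (2N D_i D_j - D(N_i D_j + N_j D_i))/D^3$ with constant $N_i,D_i$; since $D>0$, DR submodularity becomes three inequalities each affine in $(y,z)$, which can be certified by evaluating at the corners of $[0,1]^2$. This turns an explicit-derivative case analysis into a small linear-programming check and collapses the case structure (e.g., the cross term needs only $b\le 1$ versus $b>1$, since $D_2=(b-1)^+$ vanishing kills the left side outright). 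Your cross-term computation reducing to $(a-b)y-(b-1)z+(2b-1)\ge 0$ with minimum $b>0$ at $(0,1)$ is correct. Two nits: the phrase describing the diagonal corner checks is telegraphic — ``$\max\{1,b\}\ge b$'' and ``$p_1+p_2\ge 0$'' do not match any expression you defined, and a reader cannot reconstruct the intended simplification from them — but the reduction to $ND_i\le DN_i$ and the corner values are all verifiable and do hold, using $0 \le D_i \le N_i$ (i.e.\ $(a-1)^+-(b-1)^+\le a-b$ and $(b-1)^+\le b$). And your closing remark that submodularity is needed on all of $[0,1]^2$, not just $\{y\le z\}$, is a good catch that the paper does not flag explicitly; it is indeed what Lemma~\ref{lem:jensen-submodular} requires.
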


\begin{proof}
    It follows by the definition that $f_{x_1, x_2}(0, 0) = 0$.
    Next we prove DR submodularity.
    For ease of presentation, let $A = e^{t(2x_1+x_2)} d_{x_1, x_2}(t)$ and $B = e^{t(x_1+2x_2)} d_{x_1, x_2}(t)$;
    we shall simply write $f$ for $f_{x_1, x_2}$.
    By $x_1 \ge x_2$ we have that $A \ge B$.
    The function then simplifies as:
    \[
        f(y, z) = \frac{ A \, y + B \, (z - y) }{(A-1)^+ \, y + (B-1)^+ \, (z-y) + 1}
        ~.
    \]

    If $A \ge B \ge 1$:
    \begin{align*}
        \partial_{y^2} f(y, z) & ~ = ~ - \, \frac{2 (A - B)^2 (1 - z)}{\big((A-B) \, y + (B-1) \, z + 1\big)^3} ~ \le 0
        ~, \\
        \partial_{y z} f(y, z) & ~ = ~ - \, \frac{2 (B-1) (B + (A-B) y)}{\big((A-B) \, y + (B-1) \, z + 1\big)^3} ~ \le 0
        ~, \\
        \partial_{z^2} f(y, z) & ~ = ~ - \, \frac{(A-B)((A-B)y+Bz+3(1-z))}{\big((A-B) \, y + (B-1) \, z + 1\big)^3} ~ \le 0
        ~.
    \end{align*}
    
    If $A \ge 1 > B$, i.e., if $z$'s coefficient in the denominator is $0$, we have:
    \begin{align*}
        \partial_{y^2} f(y, z) & ~ = ~ - \, \frac{2 \big( (1-B)+(A-1)(1-Bz) \big)}{\big((A-1)y + 1\big)^3} ~ \leq 0
        ~, \\
        \partial_{y z} f(y, z) & ~ = ~ - \, \frac{(A-1)B}{\big((A-1)y + 1\big)^2} ~ \leq 0
        ~, \\[1ex]
        \partial_{z^2} f(y, z) & ~ = ~ 0.
    \end{align*}
    
    Finally if $A, B < 1$, $f$ is linear and therefore is DR submodular.
\end{proof}

\begin{proof}[Proof of Lemma~\ref{lem:2nd-level-doubleton}]
    Up to renaming, we will assume without loss of generality that $j_1 = 1$ and $j_2 = 2$, for ease of notations.
    Applying the zeroth level bound $p_{1,2,j}(t) = e^{-t(x_1+x_2+x_j)}$ for any $j \ne 1, 2$ to the denominator of Eqn.~\eqref{eqn:meta-analysis} with $T = \{ 1, 2 \}$, and multiplying both the numerator and denominator by $e^{t(x_1+x_2)}$, it reduces to (recall that $\sum_j \rho_{ij} \le 1$):
    \[
        \frac{d}{dt} p_T(t) \ge - \, p_T(t) \sum_i \lambda_i \, \frac{e^{t(2x_1 + x_2)} p_T(t) \rho_{i1} + e^{t(2x_1 + x_2)} p_T(t) \rho_{i2}}{e^{t(2x_1 + x_2)} p_T(t) \rho_{i1} + e^{t(2x_1 + x_2)} p_T(t) \rho_{i2} + (1-\rho_{i1}-\rho_{i2})}
        ~.
    \]

    Since the coefficients of $\rho_{i1}, \rho_{i2}$ in the denominator are $e^{t(2x_1 + x_2)} - 1 \le (e^{t(2x_1 + x_2)} - 1)^+$, and $e^{t(x_1 + 2x_2)} - 1 \le (e^{t(x_1 + 2x_2)} - 1)^+$, it suffices to prove:
    \[
        \frac{d}{dt} p_T(t) \ge - \, p_T(t) \sum_i \lambda_i f_{x_1, x_2} \big( \rho_{i1}, \rho_{i1}+\rho_{i2} \big)
        ~.
    \]
    which follows by the second level Converse Jensen Inequality and the recurrence of $p_T = d_{x_1, x_2}$.
\end{proof}

Next we show that our new bound for any doubleton $T$ is indeed an improvement compared to the first level analysis.

\begin{lemma}
    \label{lem:2nd-level-doubleton-trivial}
    For any $1 \ge x_1 \ge x_2 \ge 0$, and any $0 \le t \le 1$, $d_{x_1, x_2}(t) \le e^{-t(x_1+x_2)}$.
\end{lemma}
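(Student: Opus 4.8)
The plan is a Gr\"onwall-type comparison between $d_{x_1,x_2}$ and the function $g(t) = e^{-t(x_1+x_2)}$, both of which equal $1$ at $t = 0$. Write the ODE defining $d_{x_1,x_2}$ as $\frac{d}{dt} d_{x_1,x_2}(t) = -\, d_{x_1,x_2}(t)\, G\big(t, d_{x_1,x_2}(t)\big)$, where $G(t, \delta)$ is the case-analysed integral of Lemma~\ref{lem:2nd-level-doubleton} with $A = e^{t(2x_1+x_2)}\delta$ and $B = e^{t(x_1+2x_2)}\delta$ substituted into $f_{x_1,x_2}$. The lemma will follow from two claims: (i) $G(t, g(t)) \ge x_1 + x_2$ for every $0 \le t \le 1$; and (ii) $\delta \mapsto G(t, \delta)$ is non-decreasing. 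Granting both, whenever $d_{x_1,x_2}(t) \ge g(t)$ we have $G(t, d_{x_1,x_2}(t)) \ge G(t, g(t)) \ge x_1 + x_2$ by (ii) then (i), hence $\frac{d}{dt} d_{x_1,x_2}(t) \le -\, d_{x_1,x_2}(t)(x_1 + x_2)$; a short computation then shows the ratio $d_{x_1,x_2}(t)/g(t)$ has non-positive derivative wherever it is at least $1$, so, starting from $1$, it never exceeds $1$. (If the well-posedness of the ODE for $d_{x_1,x_2}$ is not taken for granted, one first notes $d_{x_1,x_2}(t) = \exp\big(-\int_0^t G(s, d_{x_1,x_2}(s))\, ds\big) > 0$ so the division is legitimate; and if $x_1 = x_2 = 0$ both sides equal $1$.)

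For (i), at $\delta = g(t)$ we have $A = e^{t x_1} \ge 1$ and $B = e^{t x_2} \ge 1$, so $f_{x_1,x_2}$ collapses to $\frac{N}{N + (1 - z)}$ with $N = e^{t x_1} y + e^{t x_2}(z - y)$. Every pair $(y, z)$ appearing inside $G$ is one of $(P_0(\lambda), P_1(\lambda))$, $(0, P_1(\lambda))$, or $(P_0(\lambda), P_0(\lambda))$, and hence satisfies $0 \le y \le z \le 1$ since $P_0(\lambda) \le P_1(\lambda) \le 1$. Therefore $N \ge y + (z - y) = z$ and $1 - z \ge 0$, which give the pointwise bound $f_{x_1,x_2}(y, z) = 1 - \frac{1 - z}{N + 1 - z} \ge z$. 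Substituting $f_{x_1,x_2}(y, z) \ge z$ into the two branches of $G$, evaluating the $\lambda$-integrals via $\int_0^L P_0(\lambda)\, d\lambda = 1 - P_0(L)$ and $\int_0^L P_1(\lambda)\, d\lambda = 2 - P_0(L) - P_1(L)$ (both consequences of Lemma~\ref{lem:poisson-cdf-property}), and simplifying the equations defining $\lambda_1^*, \lambda_2^*$ by plugging in the value of $\min\{\lambda_1^*, \lambda_2^*\}$ appropriate to each branch, one obtains $G(t, g(t)) \ge 2 - P_0(\lambda_2^*) - P_1(\lambda_2^*)$ when $\lambda_1^* \le \lambda_2^*$ and $G(t, g(t)) \ge 2 - P_0(\lambda_1^*) - P_1(\lambda_2^*)$ when $\lambda_1^* > \lambda_2^*$; in both branches this lower bound is exactly $x_1 + x_2$.

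For (ii), $A$ and $B$ are each increasing in $\delta$, so it suffices that $f_{x_1,x_2}(y, z)$ is non-decreasing in $A$ and in $B$ for $0 \le y \le z \le 1$. Splitting into the regimes $A \ge B \ge 1$, $A \ge 1 > B$, and $A, B < 1$ exactly as in the proof of Lemma~\ref{lem:f-dr-submodular}, a short computation of $\partial_A f_{x_1,x_2}$ and $\partial_B f_{x_1,x_2}$ in each regime yields non-negative expressions (of the form $\frac{y(1 - z)}{(\cdots)^2}$, $\frac{(z - y)(1 - z)}{(\cdots)^2}$, and their degenerations), using only $0 \le y \le z \le 1$.

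The main obstacle is the bookkeeping in (i): one has to run the case split ($\lambda_1^* \le \lambda_2^*$ versus $\lambda_1^* > \lambda_2^*$), substitute $\min\{\lambda_1^*, \lambda_2^*\}$ correctly into the equation defining $\lambda_2^*$, and match the resulting closed forms to the integrals of $P_0$ and $P_1$. The single idea that makes the rest mechanical is the pointwise inequality $f_{x_1,x_2}(y, z) \ge z$, which holds precisely in the regime $A, B \ge 1$ — a regime that contains $\delta = g(t)$ because $x_1, x_2 \ge 0$ — and which, after the integral identities, collapses to the equality $G(t, g(t)) = x_1 + x_2$.
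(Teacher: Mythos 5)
Your proof is correct and fills in exactly the details that the paper's one-line argument omits: verifying that $d_{x_1,x_2}(t)\ge e^{-t(x_1+x_2)}$ forces $\frac{d}{dt}d_{x_1,x_2}(t)\le -d_{x_1,x_2}(t)(x_1+x_2)$, then running the standard comparison. Step~(ii) is an unnecessary detour, though: whenever $d_{x_1,x_2}(t)\ge e^{-t(x_1+x_2)}$ you already have $A\ge e^{tx_1}\ge 1$ and $B\ge e^{tx_2}\ge 1$, so the pointwise bound $f_{x_1,x_2}(y,z)\ge z$ applies directly at the current value of $d_{x_1,x_2}(t)$ and yields $G(t,d_{x_1,x_2}(t))\ge x_1+x_2$ without first evaluating $G$ at $g(t)$ and invoking monotonicity in $\delta$.
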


\begin{proof}
    Observe that $d_{x_1, x_2}(t) \ge e^{-t(x_1+x_2)}$ implies $\frac{d}{dt} d_{x_1, x_2}(t) \le - d_{x_1, x_2}(t) (x_1 + x_2)$ according to its recurrence, and that it holds with equality at $t = 0$.
    The rest is a standard argument in analysis which we omit. 
    See, e.g., the proof of Lemma~\ref{lem:meta-analysis} in Appendix~\ref{app:meta-analysis} for a similar argument.
\end{proof}

By the definition of $q_x(t)$ in Equation~\eqref{eqn:2nd-level-q-condition}, we have the following corollary.

\begin{corollary}
    \label{cor:2nd-level-q-trivial}
    For any $0 \le x \le 1$ and $0 \le t \le 1$, $q_x(t) \le e^{-tx}$.
\end{corollary}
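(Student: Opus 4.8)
The plan is to show that the pointwise-minimal choice of $q_x$ consistent with \eqref{eqn:2nd-level-q-condition} already lies below $e^{-tx}$, and that this is the choice we may and do adopt. First I would observe that for each fixed $t$, Equation~\eqref{eqn:2nd-level-q-condition} imposes a family of lower bounds on the scalar $q_x(t)$, one for every $x' \in [0,1]$. Hence the smallest function satisfying all of them simultaneously is $q_x(t) = \sup_{x' \in [0,1]} g_x(t,x')$, where $g_x(t,x') = e^{-tx'} d_{x,x'}(t)$ when $x \ge x'$ and $g_x(t,x') = e^{-tx'} d_{x',x}(t)$ when $x < x'$. Since $d_{\cdot,\cdot}$ maps into $[0,1]$ by definition, this supremum also lies in $[0,1]$, so it is a legitimate instantiation of $q_x$.

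Next I would bound each term $g_x(t,x')$ via Lemma~\ref{lem:2nd-level-doubleton-trivial}. In the case $x \ge x'$, the lemma applied with $x_1 = x$, $x_2 = x'$ gives $d_{x,x'}(t) \le e^{-t(x+x')}$, so $g_x(t,x') \le e^{-tx'} e^{-t(x+x')} = e^{-t(x+2x')} \le e^{-tx}$, the last step using $t \ge 0$ and $x' \ge 0$. The case $x < x'$ is symmetric: $d_{x',x}(t) \le e^{-t(x'+x)}$ yields $g_x(t,x') \le e^{-t(x+2x')} \le e^{-tx}$ as well. Taking the supremum over $x'$ then gives $q_x(t) \le e^{-tx}$, which is the claim.

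There is essentially no hard part: the corollary is an immediate consequence of Lemma~\ref{lem:2nd-level-doubleton-trivial} combined with the nonnegativity of $t$ and $x'$. The only point that merits a word of care is the remark that we are free to take $q_x$ to be the minimal function satisfying \eqref{eqn:2nd-level-q-condition}, namely the supremum above; one should note that this is precisely the $q_x$ used in the numerical verification of \eqref{eqn:2nd-level-ratio}, so that Corollary~\ref{cor:2nd-level-q-trivial} indeed applies to the function entering the proof of Theorem~\ref{thm:2nd-level}.
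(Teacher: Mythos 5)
Your proposal is correct and fills in the details of the paper's one-line justification (``By the definition of $q_x(t)$ in Equation~\eqref{eqn:2nd-level-q-condition}, we have the following corollary''). You rightly observe that $q_x$ may be taken to be the pointwise-minimal function meeting the family of lower bounds in \eqref{eqn:2nd-level-q-condition}, and you bound each constraint via Lemma~\ref{lem:2nd-level-doubleton-trivial}; that is exactly the intended reasoning.

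One point worth flagging as you read on: the displayed version of Eqn.~\eqref{eqn:2nd-level-q-condition} appears to carry a sign typo. The step in the proof of Lemma~\ref{lem:2nd-level-singleton} that replaces $\sum_{j'\neq j} e^{tx_{j'}}\rho_{ij'}\,p_{\{j,j'\}}(t)$ by $\sum_{j'\neq j} q_{x_j}(t)\,\rho_{ij'}$ needs $q_x(t) \ge e^{+tx'}\,d_{\max\{x,x'\},\min\{x,x'\}}(t)$ with a \emph{positive} exponent, and the numerical counterpart $\hat{q}_x(t)$ in Appendix~\ref{app:2nd-numerical} likewise uses $e^{t(x'+\Dx)}$. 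With that corrected sign, the computation is tight: $e^{tx'}\,d_{\max,\min}(t) \le e^{tx'}\cdot e^{-t(x+x')} = e^{-tx}$. Your version, $e^{-tx'}\cdot e^{-t(x+x')} = e^{-t(x+2x')} \le e^{-tx}$, takes the written $e^{-tx'}$ at face value and reaches the same conclusion, but with unnecessary slack and without verifying that the $q_x$ so defined is actually large enough to carry out the denominator substitution in Lemma~\ref{lem:2nd-level-singleton}. So the corollary as stated is fine either way, but you should adopt the positive-exponent reading to make the surrounding lemmas consistent.
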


\begin{lemma}
    \label{lem:2nd-level-Q}
    For any $0 \le x \le 1$ and $0 \le t \le 1$, $s_x(t) \ge e^{-tx} q_x(t)$.
\end{lemma}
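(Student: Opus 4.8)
The plan is to adapt the argument behind Lemma~\ref{lem:1st-level-loose-bound}, working directly with the ODE that defines $s_x$. Write $h(t) = e^{-tx} q_x(t)$ and $R(t) = h(t)/s_x(t)$, so that the desired inequality $s_x(t) \ge e^{-tx} q_x(t)$ is exactly $R(t) \le 1$; note that this is also precisely what makes the denominator $1 - \frac{e^{-tx} q_x(t)}{s_x(t)}$ in the recurrence of $s_x$ nonnegative, so that the recurrence is well posed (with the boundary convention $\frac{d}{dt} s_x(t) = -x\,s_x(t)$ at points where $R(t)=1$, in analogy with footnote~4). I would first record two facts. (i) \emph{Boundary value}: applying Equation~\eqref{eqn:2nd-level-q-condition} with any $x'$ together with $d_{\cdot,\cdot}(0)=1$ gives $q_x(0) \ge 1$, while Corollary~\ref{cor:2nd-level-q-trivial} gives $q_x(0) \le 1$; hence $q_x(0) = s_x(0) = 1$ and $R(0) = 1$. (ii) \emph{Monotonicity of $q_x$}: take $q_x$ to be the pointwise supremum over $x' \in [0,1]$ of the right-hand sides of \eqref{eqn:2nd-level-q-condition} (the smallest admissible choice). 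Each $t \mapsto e^{-tx'} d_{x \vee x',\, x \wedge x'}(t)$ is the product of the nonnegative non-increasing factor $e^{-tx'}$ and the function $d_{x \vee x',\, x \wedge x'}$, which is itself nonnegative and non-increasing because its defining recursion in Lemma~\ref{lem:2nd-level-doubleton} has the form $\frac{d}{dt} d = -\,d \cdot (\text{nonnegative})$: the integrands there are $f_{x_1,x_2}$ evaluated at pairs $(y,z)$ with $0 \le y \le z$, where the explicit formula makes $f_{x_1,x_2}$ nonnegative. A supremum of non-increasing functions is non-increasing, so $q_x'(t) \le 0$ wherever it exists.

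Given (i) and (ii), the rest is the standard comparison argument for differential inequalities. Suppose $s_x(t) < h(t)$ for some $t$; the set of such $t$ is open, so let $t_0$ be the left endpoint of one of its connected components. By continuity $s_x(t_0) = h(t_0)$, i.e.\ $R(t_0) = 1$, so the recurrence gives $\frac{d}{dt} s_x(t_0) = -x\,s_x(t_0)$. On the other hand, $\frac{d}{dt} h(t_0) = -x\,e^{-t_0 x} q_x(t_0) + e^{-t_0 x} q_x'(t_0) = -x\,s_x(t_0) + e^{-t_0 x} q_x'(t_0) \le -x\,s_x(t_0)$ by (ii). Hence $s_x - h$ is non-decreasing at $t_0$, contradicting that $(s_x - h)(t_0) = 0$ while $s_x - h < 0$ immediately to the right of $t_0$. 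The contradiction is made rigorous exactly as in the proof of Lemma~\ref{lem:meta-analysis} in Appendix~\ref{app:meta-analysis}: for fixed $t$ the right-hand side of the recurrence is non-increasing in $s_x$ near $s_x = h(t)$, so $s_x$ on $[t_0,\cdot)$ can be compared with the solution of the same recurrence started at $h(t_0)$, which stays $\ge h$.

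The step I expect to be most delicate is the degeneracy of the recurrence for $s_x$ at $R = 1$, where the right-hand side is a $0/0$ form: one has to verify that it extends continuously there with value $-x\,s_x$ — via $\log(1-x+xR) = \log\!\big(1 - x(1-R)\big) = -x(1-R) + O\big((1-R)^2\big)$ — and that the comparison argument is unaffected at such points (equivalently, that the recurrence determines a unique $s_x$). A secondary technical point is to check that the ``pointwise supremum'' choice of $q_x$ used in (ii) remains compatible with all the hypotheses on $q_x$ in Lemmas~\ref{lem:2nd-level-doubleton}–\ref{lem:2nd-level-singleton} and in Equation~\eqref{eqn:2nd-level-q-condition}; this is immediate because \eqref{eqn:2nd-level-q-condition} only lower-bounds $q_x$, and this canonical choice is the one intended when instantiating the functions in the proof of Theorem~\ref{thm:2nd-level}.
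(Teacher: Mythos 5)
Your proof is correct and matches what the paper intends when it says the argument is ``almost verbatim'' that of Lemma~\ref{lem:1st-level-loose-bound}. The one ingredient the paper leaves implicit, and which you rightly surface, is that $q_x$ must be non-increasing for the comparison to close; your derivation of this from the fact that each $d_{x_1,x_2}$ solves $\frac{d}{dt}d = -d \cdot (\text{nonnegative})$ — since $f_{x_1,x_2}(y,z)\ge 0$ whenever $0\le y\le z$, which holds for every integrand $(P_0,P_1)$, $(0,P_1)$, $(P_0,P_0)$ appearing in Lemma~\ref{lem:2nd-level-doubleton} — together with taking $q_x$ to be the pointwise supremum in \eqref{eqn:2nd-level-q-condition}, is exactly right. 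One small tightening worth making when you invoke the argument of Lemma~\ref{lem:meta-analysis}: rather than only checking the derivative inequality at the boundary point $t_0$, verify directly that whenever $g(t)=s_x(t)-h(t)\le 0$ (i.e.\ $R(t)\ge 1$), one has $s_x'(t)\ge -x\,s_x(t)$ from $\log(1+z)\le z$, while $h'(t)\le -x\,h(t)\le -x\,s_x(t)$ from $q_x'\le 0$ and $h\ge s_x$, so $g'(t)\ge 0$ almost everywhere on the bad set; integrating from $t_1=\sup\{t\le t_0: g(t)\ge 0\}$ then gives the contradiction cleanly, without needing the slightly circular ``compare with the solution started at $h(t_0)$'' phrasing.
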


This is the counterpart of Lemma~\ref{lem:1st-level-loose-bound} from the first level analysis, except this time we have $q_x(t) \le e^{-tx}$ (Corollary~\ref{cor:2nd-level-q-trivial}) instead of exactly equal.
We omit the almost verbatim proof.


\begin{proof}[Proof of Lemma~\ref{lem:2nd-level-singleton}]
    The right-hand-side of the Eqn.~\eqref{eqn:meta-analysis} equals (omitting the negative sign):
    \begin{align*}
        p_j(t) \sum_i \lambda_i \frac{e^{tx_j} \rho_{ij} p_j(t)}{e^{tx_j} \rho_{ij} p_j(t) + \sum_{j' \neq j} e^{tx_{j'}} \rho_{ij'} p_{\{j,j'\}}(t)} 
        \ge ~ & p_j(t) \sum_i \lambda_i \frac{e^{tx_j} \rho_{ij} p_j(t)}{e^{tx_j} \rho_{ij} p_j(t) + \sum_{j' \neq j} q_{x_j}(t) \rho_{ij'}} \\
        \ge ~ & p_j(t) \sum_i \lambda_i \frac{e^{tx_j} \rho_{ij} p_j(t)}{e^{tx_j} \rho_{ij} p_j(t) + q_{x_j}(t) (1-\rho_{ij})}
        ~.
    \end{align*}
        
    The lemma then follows by the first level Converse Jensen Inequality (with convexity ensured by Lemma~\ref{lem:2nd-level-Q}), and the recurrence for $p_j(t) = s_{x_j}(t)$.
\end{proof}

\clearpage

\section{Hardness Results}
\label{sec:hardness}

In this section, $\opt$ denotes the expected objective of the optimal matching for the realized bipartite graph, and $\algo$ denotes the expected objective of the online algorithm.

\subsection{Hardness for Edge-weighted Matching without Free Disposal}
\label{sec:hardness-edge-weighted}

In unweighted matching, vertex-weighted matching, and edge-weighted matching \emph{with free disposal}, it is without loss of generality to match every online vertex whenever possible.
In edge-weighted matching \emph{without free-disposal}, however, we may want to leave an online vertex unmatched even if it has an unmatched neighbor, so that the neighbor will be available for a possible heavier edge later.
This is another disadvantage of the online algorithm, because the offline optimal knows the realization of online vertices and correctly decides whether to match the lighter edge.
We combine this with the hard instance of \citet{ManshadiOS:MOR:2012} to obtain a harder instance for edge-weighted matching without free disposal, separating it from the other three settings.

\begin{theorem}
    There is an instance of edge-weighted online stochastic matching model without free disposal for which no algorithm has a competitive ratio better than $0.703$.
\end{theorem}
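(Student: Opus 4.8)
The plan is to build a hard instance by composing two sources of loss that are specific to edge-weighted matching without free disposal: (i) the information-theoretic loss already present in the \citet{ManshadiOS:MOR:2012} instance, which shows no algorithm can match every offline vertex with probability matching the LP too closely, and (ii) the loss that comes from the fact that, without free disposal, matching a light edge can ``waste'' an offline vertex that a heavier edge would have wanted. The idea is to attach to each offline vertex of the Manshadi--Oveis Gharan--Saberi instance a two-tier weight structure: a ``small'' online type that arrives early and offers a low-weight edge, and a ``large'' online type that arrives (in expectation) later and offers a high-weight edge. An algorithm matching the light edge greedily forecloses the heavy one; an algorithm declining the light edge risks the heavy online vertex never realizing. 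The offline optimum, knowing the realization, always picks correctly, so it extracts strictly more than any online algorithm.

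First I would recall the base instance and its guarantee: there is a family of unweighted instances on which every online algorithm matches the ``critical'' offline vertices with probability at most $\beta < 1$ times the corresponding LP mass, where $\beta$ is the \citet{ManshadiOS:MOR:2012} bound ($\approx 0.823$). Next I would define the weighted overlay: set the light-edge weight to $1$ and the heavy-edge weight to $W$ for a parameter $W > 1$ to be optimized, and choose arrival rates so that with constant probability both the light and the heavy online vertex arrive, and the heavy one tends to arrive after the light one. I would then set up the offline optimum's value exactly as a function of the realization probabilities and $W$. For the online side, I would argue a dichotomy via an exchange/averaging argument: at the moment the light online vertex arrives and the algorithm must (irrevocably) decide, it either takes the light edge — losing the chance at $W$ if the heavy vertex later arrives — or declines — losing $1$ if the heavy vertex does not arrive; in either branch there is a quantifiable shortfall relative to offline. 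Combining this per-gadget shortfall with the base-instance bound on how many critical vertices the algorithm can reach at all, and then optimizing over $W$ and the arrival rates, should drive the ratio below $0.703$.

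The main obstacle I expect is making the composition rigorous rather than heuristic: the two loss mechanisms interact, and I need to avoid double-counting. Concretely, the base instance's bound is a statement about match probabilities of offline vertices under the LP relaxation, while the ``leave it unmatched'' loss is a statement about which \emph{edge} gets matched; I must phrase both as bounds on the same quantity — say, the expected weight collected at each offline vertex — and show they compound rather than merely coexist. A clean way is probably to fix the algorithm, condition on the realization of the online graph, and write the algorithm's expected collected weight at an offline vertex $j$ as a sum over the (at most two) edge types, bounding each term using (a) the base instance's constraint on $\Pr[j \text{ matched at all}]$ and (b) a worst-case-over-decisions bound on the conditional weight given $j$ is matched. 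Then take expectations, sum over $j$, compare to the offline optimum computed exactly, and tune the parameters. I would present the final numerical optimization of $W$ and the arrival rates as the last step, verifying that the resulting ratio rounds up to $0.703$.
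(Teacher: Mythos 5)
Your high-level plan is aimed in the right direction — the paper's hard instance is indeed the \citet{ManshadiOS:MOR:2012} instance augmented with a high-weight degree-one online type per offline vertex, exploiting precisely the ``matching a light edge wastes the vertex for a heavy one'' tension you identify. But as written, the proposal has two gaps that I think would prevent it from closing.

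First, you propose to use the Manshadi et al.\ bound ($\approx 0.823$) as a black box constraint on the probability that each critical offline vertex is matched. That constraint is an \emph{endpoint} statement — it bounds $\Pr[j \text{ ever matched}]$ at the end of the process — but the loss from the heavy edges depends on the whole \emph{trajectory}: the heavy online vertex arrives at a random time, and its edge is lost only if its unique neighbor is already matched \emph{at that time}. You need a time-indexed bound on $\E A(t)$ (the expected number of matched offline vertices after step $t$), not just on $\E A(n)$. The paper therefore does not reuse the $0.823$ figure at all; it re-derives a recursion $\E A(t+1) \le \E A(t) + 1 - \frac{m}{n}\binom{\E A(t)}{2}/\binom{n}{2} - \frac{m}{n}\binom{\E A(t)}{3}/\binom{n}{3}$ from the $I_2,I_3$ structure and Jensen, and couples it with $\E B(t+1) \le \E B(t) + x\big(1 - \E A(t)/n\big)$ for the heavy-edge contribution. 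The tension you describe as a per-arrival ``take or decline'' dichotomy is then captured cleanly as an optimization over trajectories of $\E A(t)$: one shows the maximizer keeps $\E A(t)=0$ for an initial prefix $t < k$ (the ``decline everything'' regime) and then runs the recursion with equality, and numerically optimizes over $k$. This avoids the per-vertex conditioning and the double-counting worry you flagged, because it never tries to apportion the two loss sources to individual vertices.

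Second, the construction needs a concrete choice that makes the interaction tractable, and the natural one — the one the paper uses — is to give the new degree-one heavy type an \emph{infinitesimal} arrival rate $\varepsilon$ and weight $x/\varepsilon$. This means the heavy type contributes a constant $x$ to both $\opt$ and (potentially) the algorithm, but it arrives so rarely that it never perturbs the $A(t)$ dynamics coming from the base instance, and neither $\opt$ nor the algorithm would ever \emph{decline} a heavy edge when it is available. Without this scaling you would have to account for the order statistics of heavy-vs-light arrivals and for the event that a heavy online vertex itself occupies a slot, which is exactly the ``interaction between loss mechanisms'' you were worried about. With it, the only decision that matters for the algorithm is how aggressively to match the unit-weight edges, which is what the $A(t)$-vs-$B(t)$ trade-off formalizes. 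Your proposal as stated does not include this scaling, and I do not see how the exchange/averaging argument would go through cleanly without it. If you add the infinitesimal-rate heavy gadget and replace the black-box $0.823$ bound with the explicit time-indexed recursion, you essentially recover the paper's proof.
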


\begin{proof}
    The instance is online-vertex-weighted, which means that each online vertex type has a positive weight $w_i$ and $w_{ij}=w_i$ for all its adjacent edges $(i,j)$.

    Consider $|J| = n$ offline vertices for a sufficiently large $n$.
    There are four kinds of online types $I = I_1 \sqcup I_2 \sqcup I_3 \sqcup I_n$.
    For $k \in \{1, 2, 3, n\}$, $I_k$ contains $\binom{n}{k}$ online types, each adjacent to a different subset of $k$ offline vertices.
    Following the instance of \citet{ManshadiOS:MOR:2012}, consider $m = \frac{1}{2} c^*_{2.5} n$ where $c^*_{2.5} \approx 0.81$ is a constant from \citet{DietzfelbingerGMMPR:ICALP:2010}, and let online types in $I_2$ and $I_3$ have unit weights and arrival rates $m/\binom{n}{2}$ and $m/\binom{n}{3}$ respectively.
    Further, let $\varepsilon$ be an infinitesimal constant, and let $x$ be a constant to be determined.
    The unique online type in $I_n$ also has unit weight, and arrives at rate $n-2m-n\varepsilon$.%
    \footnote{We let it be $n-2m-n\varepsilon$ instead of $n-2m$ so that the arrive rates sum to $n$, making the hard instance valid in the original online stochastic matching model. It could be $n-2m$ if we only consider the Poisson arrival model.}
    Finally, let the online types in $I_1$ have weights $\frac{x}{\varepsilon}$ and arrival rates $\varepsilon$.
    See Figure~\ref{fig:hardness-edge-weighted-without-free-disposal} for an illustrative picture.
    
    On the one hand, offline optimal can match all online vertices with types in $I_2, I_3, I_n$ with high probability~\cite{ManshadiOS:MOR:2012, DietzfelbingerGMMPR:ICALP:2010}, when no online vertices have types in $I_1$.
    On the other hand, offline optimal can match each online vertex with type in $I_1$ (potentially making an online vertex with types in $I_2, I_3, I_n$ unmatched), and increases the objective by at least $\frac{x}{\varepsilon}-1 \approx \frac{x}{\varepsilon}$.
    Hence, $\opt = (1+x)n - o(n)$.


    Next consider any online algorithm.
    It is easier to analyze it under the original online stochastic model;
    the result applies to the Poisson arrival model as well by the asymptotic equivalence.
    Since the arrival rates sum to $\Lambda = n$, we next consider steps $t = 1, 2, \dots, n$ each of which has an online vertex drawn from the distribution.
    For $1 \le t \le n$, $A(t)$ denotes the number of matched offline vertices after step $t$;
    let $A(0) = 0$.
    Fixing any step $0 \le t < n$ and the value of $A(t)$, consider step $t+1$.
    For $k = 2, 3$, with probability $\frac{m}{n}$ the online vertex has type in $I_k$, and conditioned on that its neighbors are all matched with probability $\binom{A(t)}{k}/\binom{n}{k}$.
    Hence:
    \[
        A(t+1) \leq A(t) + 1 - \frac{m}{n} \frac{\binom{A(t)}{2}}{\binom{n}{2}} - \frac{m}{n} \frac{\binom{A(t)}{3}}{\binom{n}{3}}
        ~.
    \]

    Taking expectation on both sides and by the Jensen Inequality:
    \begin{equation}
        \label{eqn:hardness-edge-weighted-1}
        \E\,A(t+1)
        \leq \E\,A(t) + 1 - \frac{m}{n} \frac{\binom{\E A(t)}{2}}{\binom{n}{2}} - \frac{m}{n}\frac{\binom{\E A(t)}{3}}{\binom{n}{3}}
        ~.
    \end{equation}

    Let $B(t)$ denote the total weight of matched online vertices with types in $I_1$ after time $t$, with $B(0) = 0$.
    Fix any step $0 \le t < n$ and the value of $A(t)$, and consider step $t+1$.
    With probability $\varepsilon$ the online vertex has type in $I_1$.
    Conditioned on that, matching it adds $\frac{x}{\varepsilon}$ to $B(t+1)$, but with probability $1-\frac{A(t)}{n}$ the unique neighbor is already matched.
    Hence:
    \begin{equation}
        \label{eqn:hardness-edge-weighted-2}
        \E\,B(t+1) \le \E\,B(t) + x \Big(1-\frac{\E\,A(t)}{n} \Big)
        ~.
    \end{equation}

    By definition we have $\algo \leq \E\,A(n)+\E\,B(n)$.
    It remains to bound $\E\,A(n)+\E\,B(n)$ subject to Equations~\eqref{eqn:hardness-edge-weighted-1} and \eqref{eqn:hardness-edge-weighted-2}.
    First, it is without loss of generality to assume that Eqn.~\eqref{eqn:hardness-edge-weighted-2} holds with equality.
    We next prove that the maximum value can only be achieved when for all $0 \le t < n$, either $\E\,A(t) = 0$, or Eqn.~\eqref{eqn:hardness-edge-weighted-1} holds with equality.
    Suppose for contrary that for some $t$ we have $\E\,A(t) > 0$ but the left-hand-side of Eqn.~\eqref{eqn:hardness-edge-weighted-1} is strictly smaller than the right.
    We then decrease $\E\,A(t)$ by a sufficiently small amount so that Eqn.~\eqref{eqn:hardness-edge-weighted-1} still holds, with the same value for $\E\,A(n)$.
    By Eqn.~\eqref{eqn:hardness-edge-weighted-2} with equality, on the other hand, the value of $\E\,B(n)$ strictly increases.

    Finally, to numerically bound the maximum of $\E\,A(n)+\E\,B(n)$, we can enumerate $1 \le k \le n$, and for each $k$ consider $\E\,A(t) = 0$ for $t < k$ and for some value of $0 \le \E\,A(k) \le 1$ recursively compute $\E\,A(t)$ for $k < t \le n$ by Eqn. \eqref{eqn:hardness-edge-weighted-1} with equality.
    In fact, we further assume $\E\,A(k)=0$ which introduces an absolute error of at most $1$ in the bound.
    For $n=10^6$, $x=0.94$, the numerical bound shows that $\frac{\algo}{\opt}<0.703$, with the maximum value achieved when $k \approx 2.07 \times 10^5$.
\end{proof}

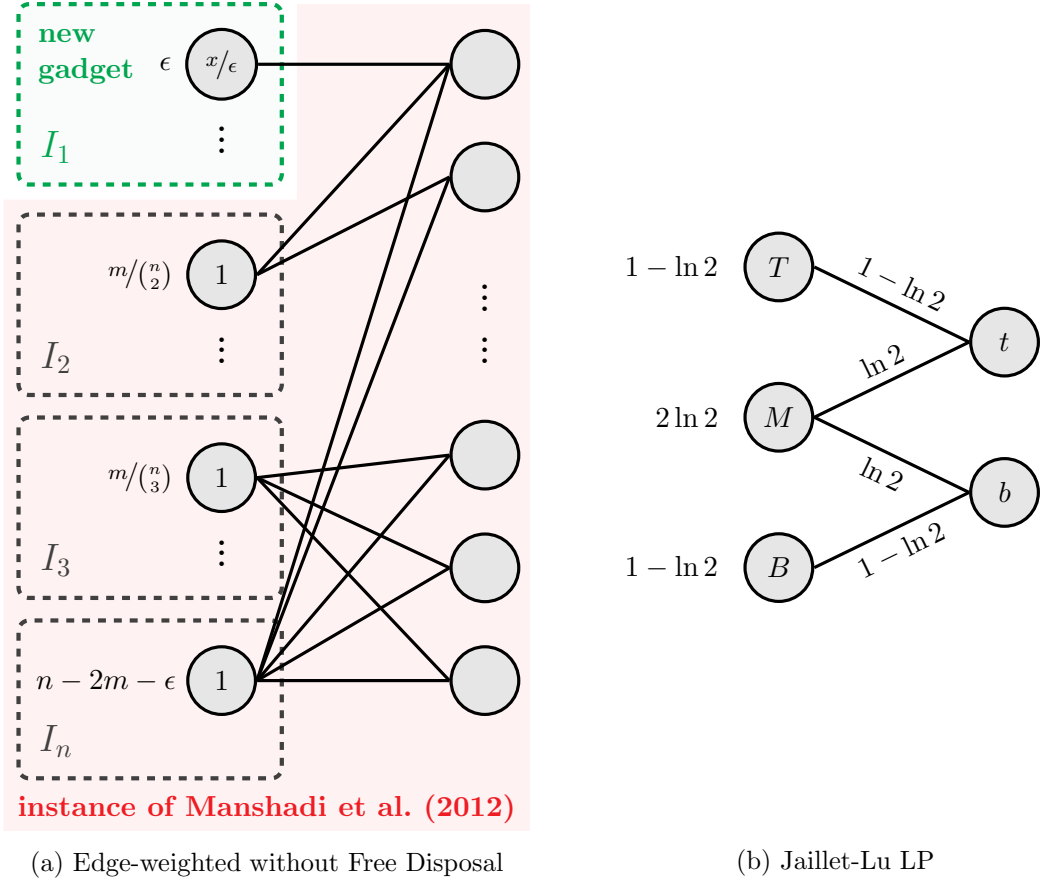
\begin{figure}[t]
    \tikzset{vertex/.style={draw=black,very thick,fill=gray!20,minimum size=.9cm,circle}}
    \tikzset{edge/.style={draw=black,very thick}}
    \centering
    \begin{subfigure}{.5\textwidth}
        \centering
        \begin{tikzpicture}
            \begin{scope}
                \draw[draw=Green,ultra thick,dashed,rounded corners,fill=Green!2] (.8,.8) rectangle +(-3.5,-2.4);
                \node[vertex] (1) at (0,0) {$\nicefrac{x}{\epsilon}$};
                \node[left=0.05cm of 1] {$\epsilon$};
                \node at (0,-0.9) {\bf \vdots};
                \node at (-2.2,-1.1) {\textcolor{Green}{\Large $I_1$}};
                \node[align=left] at (-1.8,.1) {\textcolor{Green}{\bf new}\\\textcolor{Green}{\bf gadget}};
            \end{scope}
            \draw[draw=none,ultra thick,dashed,fill=red!5] (4.1,.8)--(4.1,-10.2)--(-2.9,-10.2)--(-2.9,-1.8)--(1,-1.8)--(1,.8)--(4.1,.8);
            \node at (0.6,-9.9) {\textcolor{Red}{\bf instance of Manshadi et al.~(2012)}};
            \begin{scope}[shift={(0,-2.8)}]
                \node[vertex] (2) at (0,0) {$1$};
                \node[left=0.05cm of 2] {$\nicefrac{m}{\binom{n}{2}}$};
                \node at (0,-0.9) {\bf \vdots};
                \draw[draw=darkgray,ultra thick,dashed,rounded corners] (.8,.8) rectangle +(-3.5,-2.4);
                \node at (-2.2,-1.1) {\textcolor{darkgray}{\Large $I_2$}};
            \end{scope}
            \begin{scope}[shift={(0,-5.5)}]
                \node[vertex] (3) at (0,0) {$1$};
                \node[left=0.05cm of 3] {$\nicefrac{m}{\binom{n}{3}}$};
                \node at (0,-0.9) {\bf \vdots};
                \draw[draw=darkgray,ultra thick,dashed,rounded corners] (.8,.8) rectangle +(-3.5,-2.4);
                \node at (-2.2,-1.1) {\textcolor{darkgray}{\Large $I_3$}};
            \end{scope}
            \begin{scope}[shift={(0,-8.2)}]
                \node[vertex] (n) at (0,0) {$1$};
                \node[left=0cm of n] {$n-2m-\epsilon$};
                \draw[draw=darkgray,ultra thick,dashed,rounded corners] (.8,.8) rectangle +(-3.5,-2.1);
                \node at (-2.2,-.8) {\textcolor{darkgray}{\Large $I_n$}};
            \end{scope}
            \begin{scope}[shift={(3.5,0)}]
                \node[vertex] (R1) at (0,0) {};
                \node[vertex] (R2) at (0,-1.5) {};
                \node at (0,-3) {\bf\vdots};
                \node at (0,-3.7) {\bf\vdots};
                \node[vertex] (R3) at (0,-5.2) {};
                \node[vertex] (R4) at (0,-6.7) {};
                \node[vertex] (R5) at (0,-8.2) {};
            \end{scope}
            \path[edge] (1.east) edge (R1.west);
            \path[edge] (2.east) edge (R1.west);
            \path[edge] (2.east) edge (R2.west);
            \path[edge] (3.east) edge (R3.west);
            \path[edge] (3.east) edge (R4.west);
            \path[edge] (3.east) edge (R5.west);
            \path[edge] (n.east) edge (R1.west);
            \path[edge] (n.east) edge (R2.west);
            \path[edge] (n.east) edge (R3.west);
            \path[edge] (n.east) edge (R4.west);
            \path[edge] (n.east) edge (R5.west);
        \end{tikzpicture}
        \caption{Edge-weighted without Free Disposal}
        \label{fig:hardness-edge-weighted-without-free-disposal}
    \end{subfigure}
    \begin{subfigure}{.4\textwidth}
        \centering
        \begin{tikzpicture}
            \draw[fill=none,draw=none] (-2.5,3.5) rectangle +(6.5,-11);
            \node[vertex] (T) at (0,0) {$T$};
            \node[left=.2cm of T] {$1-\ln 2$};
            \node[vertex] (M) at (0,-2) {$M$};
            \node[left=.2cm of M] {$2\ln 2$};
            \node[vertex] (B) at (0,-4) {$B$};
            \node[left=.2cm of B] {$1-\ln 2$};
            \node[vertex] (t) at (3,-1) {$t$};
            \node[vertex] (b) at (3,-3) {$b$};
            \path[edge] (T.east) edge node[midway,above,sloped] {$1-\ln 2$} (t.west);
            \path[edge] (M.east) edge node[midway,above,sloped] {$\ln 2$} (t.west);
            \path[edge] (M.east) edge node[midway,below,sloped] {$\ln 2$} (b.west);
            \path[edge] (B.east) edge node[midway,below,sloped] {$1-\ln 2$} (b.west);
        \end{tikzpicture}
        \caption{Jaillet-Lu LP}
        \label{fig:hardness-jl-lp}
    \end{subfigure}
    \caption{Illustration of Hard Instances. The number on the left of each online type is the arrival rate. On the left, the number inside a vertex is the vertex-weight. On the right, the letter inside a vertex is its name.}
\end{figure}

    %

\subsection{Hardness for the Jaillet-Lu Linear Program}

Since Top Half Sampling only needs the inequality in Corollary~\ref{cor:inverse-jensen}, its competitive ratio still holds if we use the LP of \citet{JailletL:MOR:2014} (instead of the Natural LP), which we restate below:
\begin{alignat}{2}
    \mbox{maximize}\quad & \sum_{(i,j)\in E} w_{ij} x_{ij} & {} & \nonumber \\
    \mbox{subject to}\quad & \sum_{j \in J_i} x_{ij}\leq\lambda_i &\quad& \forall i\in I \nonumber \\
    & \sum_{i \in I_j} x_{ij} \le 1 &\quad& \forall j\in J \nonumber \\
    & \sum_{i \in I_j} ( 2x_{ij} - \lambda_i )^+ \le 1 - \ln 2 &\quad& \forall j\in J \nonumber \\
    & x_{ij} \geq 0 &\quad& \forall(i,j)\in E \nonumber
\end{alignat}

In fact, our competitive ratio is the same as that of \citet{JailletL:MOR:2014} for unweighted matching, which has been the state-of-the-art until a very recent improvement by \citet{HuangS:STOC:2021}.
We next show that this ratio is tight if we compare to the Jaillet-Lu LP.
In this sense our analysis is tight.
It also indicates that more expressive LPs such as the Natural LP and the Poisson Matching LPs are necessary for the better ratios in unweighted and vertex-weighted matching in Section~\ref{sec:vertex-weighted-ocs}.

\begin{theorem}
    There is an instance of unweighted online stochastic matching for which no algorithm gets $\frac{\algo}{\jl} > 1 - \frac{1}{1-\ln 2} \big( \frac{1}{2e} - \frac{\ln 2}{e^2} \big) \approx 0.706$, where $\jl$ is the optimal value of the Jaillet-Lu LP.
\end{theorem}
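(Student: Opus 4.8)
The plan is to exhibit the single small instance drawn in Figure~\ref{fig:hardness-jl-lp}, analysed in the Poisson arrival model, and to show that \emph{every} online algorithm attains the ratio $\Gamma \defeq 1 - \frac{1}{1-\ln 2}\big(\frac{1}{2e} - \frac{\ln 2}{e^2}\big)$ against $\jl$ on it \emph{exactly}; since $\jl$ upper bounds everything, this gives the claimed impossibility. The instance has two offline vertices $t,b$ and three online types: type $T$ adjacent only to $t$ with arrival rate $1-\ln 2$, a symmetric type $B$ adjacent only to $b$ with rate $1-\ln 2$, and type $M$ adjacent to both with rate $2\ln 2$. (The total arrival rate is $2$, so the instance is valid in the original model as well; alternatively one uses $n$ disjoint copies and the asymptotic equivalence of the two models.) The first step is to record that $\jl = 2$: the assignment $x_{Tt} = x_{Bb} = 1-\ln 2$ and $x_{Mt} = x_{Mb} = \ln 2$ is feasible for the Jaillet--Lu LP --- the only nonroutine check is that $\sum_{i} (2x_{ij}-\lambda_i)^+ = (1-\ln 2)+0 = 1-\ln 2$ at each of $t$ and $b$ --- and has value $2$, while every edge touches exactly one offline vertex, so the objective is $\sum_i x_{it} + \sum_i x_{ib} \le 1+1 = 2$.

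The second step reduces an arbitrary algorithm to a fixed differential equation. Because the instance is unweighted, by the greedy-domination principle recalled at the start of this section it is without loss of generality that the algorithm matches every arriving online vertex to a free neighbour whenever one exists. Let $s$ denote the time parameter, and for the offline vertex pair $\{t,b\}$ let $p_0(s)$, $p_1(s)$ be the probabilities that exactly $0$, respectively exactly $1$, of them is matched at time $s$. Out of the all-free state every arrival (total rate $(1-\ln 2)+2\ln 2+(1-\ln 2) = 2$) finds a free neighbour and moves to the one-matched state, so $p_0' = -2p_0$, $p_0(0)=1$, giving $p_0(s) = e^{-2s}$. From the one-matched state --- say $t$ matched, $b$ free, the other configuration being symmetric --- a type-$B$ or type-$M$ arrival matches $b$ and a type-$T$ arrival is wasted, so the transition to the both-matched state occurs at rate $(1-\ln 2)+2\ln 2 = 1+\ln 2$ \emph{no matter how the algorithm broke ties}, and all inflow into the one-matched state (total rate $2p_0$) comes from the all-free state; hence $p_1' = 2p_0 - (1+\ln 2)p_1$, $p_1(0)=0$, whose solution is $p_1(s) = \frac{2}{1-\ln 2}\big(e^{-(1+\ln 2)s} - e^{-2s}\big)$.

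Finally, the algorithm's expected objective is the expected number of matched offline vertices at time $1$, i.e. $\algo = 2 - 2p_0(1) - p_1(1)$; substituting $e^{-(1+\ln 2)} = \tfrac1{2e}$ and $e^{-2} = \tfrac1{e^2}$ and simplifying via the identity $\ln 2 + (1-\ln 2) - 1 = 0$ yields $\algo/\jl = \algo/2 = \Gamma$, completing the argument. The computations are all routine; the two points that deserve explicit care are (i) the claim that $p_1(\cdot)$, and hence $\algo$, is independent of the tie-breaking rule, which I would justify exactly as above by noting that the outflow rate from the one-matched state equals $1+\ln 2$ in both of its symmetric configurations and that the total inflow is $2p_0$ regardless; and (ii) that greedily matching whenever possible truly dominates here and no clever delaying helps --- this is the unweighted greedy-domination principle, worth stating explicitly precisely because the neighbouring hardness subsection exploits its failure in the edge-weighted model without free disposal.
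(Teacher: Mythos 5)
Your proposal is correct, reaches exactly the same closed-form value, and hits every point that needs to be established: the feasibility and value of the Jaillet--Lu solution, the reduction to greedy, the tie-breaking independence, and the final ratio. The route, however, is genuinely different from the paper's. The paper fixes a concrete tie-breaking rule (in favor of $t$), expresses $\algo = 2 - \Pr[t \text{ unmatched}] - \Pr[b \text{ unmatched}]$, and evaluates the two unmatched probabilities directly from Poisson statistics --- the second of which requires the somewhat delicate event $\{n_B=0,\, n_M=1,\, i_0 = M\}$, needing an integral over the arrival time of the lone $M$-vertex. You instead track the aggregate state ``number of matched offline vertices'' as a two-state pure-death chain with killing rates $2$ and $1+\ln 2$, solve the resulting linear ODE system, and read off $\algo = 2 - 2p_0(1) - p_1(1)$. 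The structural reason tie-breaking is irrelevant is also made cleaner in your version: the paper invokes symmetry somewhat tersely, whereas you observe that both one-matched sub-states have the identical outflow rate $\lambda_M + \min\{\lambda_T,\lambda_B\} = 1+\ln 2$, so the aggregate $p_1$ evolves autonomously and no coupling-style argument is needed. Each approach has its virtues: the paper's computation is entirely elementary Poisson arithmetic and never mentions differential equations, while yours avoids the case analysis on which offline vertex is vacated and makes the algorithm-independence manifest. I would just add, in a polished write-up, one sentence confirming the ODE is valid even after conditioning on the sub-state of the one-matched configuration (since that is where a reader might worry the tie-breaking rule could sneak in), but as argued that point is exactly what your rate observation settles.
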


\begin{proof}
    We prove it with a small instance in the Poisson arrival model.
    By duplicating many copies of the instance and the asymptotic equivalence of models, the theorem holds in the original online stochastic matching model as well.

    Let $I=\{T,M,B\}$, $J=\{t,b\}$, where $T, t$ stand for top, $B, b$ stand for bottom, and $M$ stands for middle.
    Let $E=\{(T,t),(B,b),(M,t),(M,b)\}$.
    Online types $T, B$ have arrival rates $\lambda_T=\lambda_B=1-\ln2$.
    Online type $M$ has arrival rate $\lambda_M=2\ln2$.
    The Jaillet-Lu LP optimal is $x_{Tt}=x_{Bb}=1-\ln2$, $x_{Mt}=x_{Mb}=\ln2$, with objective  $\jl=2$.
    See Figure~\ref{fig:hardness-jl-lp} for an illustrative picture.

    We claim that the optimal algorithm simply matches each online vertex whenever possible and arbitrarily.
    For an online vertex of type $T$ or $B$, or an online vertex of type $M$ with only one unmatched neighbor when it arrives, it is trivially true.
    For an online vertex of type $M$ with both neighbors unmatched, this is still true by symmetry.
    Consider an optimal algorithm that break ties in favor of $t$.
    For any $i \in I$, let $n_i$ be the number of online vertices of type $i$.
    Further let $i_0$ be the type of the earliest online vertex.
    We have:
    \begin{align*}
        \algo
        &
        = 2 - \Pr \big[ \text{$t$ is unmatched} \big] - \Pr \big[ \text{$b$ is unmatched} \big] \\[2ex]
        &
        = 2 - \Pr \big[n_T=n_M=0\big] - \Big( \Pr \big[n_B=n_M=0\big] + \Pr \big[n_B=0, n_M=1, i_0=M\big] \Big)
        ~.
    \end{align*}

    By the Poisson arrival model:
    \[
        \Pr \big[n_T=n_M=0\big] = \Pr \big[n_B=n_M=0\big] = e^{-(1+\ln2)}
        ~.
    \]

    Finally:
    \begin{align*}
        \Pr \big[n_B=0, n_M=1, i_0=M\big]
        &
        = \underbrace{\vphantom{\bigg[}e^{-(1-\ln2)}}_{\Pr [n_B=0]} ~ \underbrace{\vphantom{\bigg[}(2\ln2) e^{-2\ln2}}_{\Pr[n_M=1]} \underbrace{\vphantom{\bigg[}\int_0^1 e^{-(1-\ln2)t} dt}_{\Pr [i_0=M \mid n_B=0, n_M=1]}
        \\[1ex]
        &
        = e^{-(1+\ln2)} \frac{2\ln2}{1-\ln2} \Big(1-\frac{2}{e}\Big)
        ~.
    \end{align*}

    Putting together gives:
    \[
        \algo = 2 \Big(1 - \frac{1}{1-\ln2}\Big(\frac{1}{2e}-\frac{\ln2}{e^2}\Big) \Big)
        ~.
    \]

    Comparing $\algo$ and $\jl$ proves the theorem.
\end{proof}


\section*{Acknowledgments}

We thank Donglei Du for helpful discussions on DR submodular functions.
We also thank Zipei Nie and Nengkun Yu for their help with the analysis of differential inequalities.

\bibliographystyle{plainnat}
\bibliography{reference}

\appendix

\section{Missing Proofs in Section~\ref{sec:edge-weighted-differential-inequality}}

\subsection{Proof of Lemma~\ref{lem:alg-init-rate}}
\label{app:alg-init-rate}

The definition of $\sigma_{i,t}$ ensures that $w_{i,\sigma_{i,t}(\theta)}(t)$ is non-increasing over $\theta$, so we have:
\begin{align*}
    \frac{d}{dt} \E \bar{\alg}(t=0)
    &
    = - \sum_{i\in I} \underbrace{\vphantom{\bigg[} \lambda_i}_\text{arrival rate of $i$} \cdot \underbrace{\vphantom{\bigg[} \frac{2}{\lambda_i} \int_{0}^{\lambda_i/2}w_{i,\sigma_{i,0}(\theta)}d\theta}_\text{expected gain from an arrival of $i$}
    \\
    &
    \le - \sum_{i\in I} \int_{0}^{\lambda_i}w_{i,\sigma_{i,0}(\theta)}d\theta
    \\
    &
    = - \sum_{i\in I} \sum_{j\in J_i} x_{ij}w_{ij}
    \\
    &
    = - \opt
    ~.
\end{align*}

\subsection{Missing Analysis of Differential Inequality from Section~\ref{sec:edge-weighted-differential-inequality}}
\label{app:edge-weighted}

Recall that the differential inequality is:
\[
    (2 + 2 \ln 2)\,\E\,\bar{\alg}(t) + (3 + \ln 2)\,\frac{d}{dt}\,\E\,\bar{\alg}(t) + \frac{d^2}{dt^2}\,\E\,\bar{\alg}(t) \le 0
    ~,
\]
and the boundary conditions are:
\[
    \bar{\alg}(0) = \opt
    ~,\quad
    \frac{d}{dt}\,\E\,\bar{\alg}(t=0) \le -\opt
    ~.
\]

Consider $B(t)$ that satisfies the differential inequlaity and boundary conditions with equalities:
\[
    B(t) = \frac{\opt}{1-\ln 2} \Big( \frac{1}{(2e)^t} - \frac{\ln 2}{e^{2t}} \Big)
    ~.
\]

We seek to prove that $\E \bar{\alg}(1) \le B(1)$.
In fact we will prove a stronger claim that $\E \bar{\alg}(t) \le B(t)$ for all $0 \le t \le 1$.
Let $C(t) = \E \bar{\alg}(t) - B(t)$.
The claim is then $C(t) \le 0$ for all $t$.

The differential inequality and equality for $\E \bar{\alg}(t)$ and $B(t)$ and their boundary conditions imply:
\[
    (2 + 2 \ln 2)\,C(t) + (3 + \ln 2)\frac{d}{dt}\,C(t) + \frac{d^2}{dt^2}C(t) \le 0
    ~,
\]
and boundary conditions:
\[
    C(0) = 0
    ~,\quad
    \frac{d}{dt}C(t=0) \le 0
    ~.
\]

Further consider $D(t) = 2C(t)+\frac{d}{dt}C(t)$.
The differential inequality and boundary conditions for $C(t)$ implies a differential inequality for $D(t)$:
\[
    (1+\ln 2) D(t) + \frac{d}{dt} D(t) \le 0
    ~,
\]
and its boundary condition:
\[
    D(0) \le 0
    ~.
\]

The differential inequality for $D$ gives:
\[
    \frac{d}{dt} (2e)^t D(t) = (2e)^t \Big( (1+\ln 2) D(t) + \frac{d}{dt} D(t) \Big) \le 0
    ~.
\]

Hence, for any $t$ we have $(2e)^t D(t) \le D(0) \le 0$.
Thus, $D(t) \le 0$, or equivalently:
\[
    2C(t)+\frac{d}{dt}C(t) \le 0
    ~.
\]

Similarly, this means that:
\[
    \frac{d}{dt} e^{2t} C(t) = e^{2t} \Big( 2 C(t) + \frac{d}{dt} C(t) \Big) \le 0
    ~,
\]
and therefore $e^{2t} C(t) \le C(0) \le 0$ for any $t$.
This gives $C(t) \le 0$ as desired.

\section{Missing Proofs in Section~\ref{sec:vertex-weighted-ocs}}
\label{app:vertex-weighted}

\subsection{Proof of Lemma~\ref{lem:meta-analysis}}
\label{app:meta-analysis}

We shall prove the lemma by an induction on the size of $T$ in descending order.
The base case is $T = J$.
Recall that $\Lambda = \sum_i \lambda_i$ is the total arrival rate.
The conditions about $\bar{Y}_J(t)$ simplify to:
\[
    \bar{Y}_J(0) = 1
    ~, \quad
    \frac{d}{dt} \bar{Y}_J(t) \le - \bar{Y}_J(t) \Lambda
    ~.
\]
which implies $\frac{d}{dt} e^{\Lambda t} \bar{Y}_J(t) \le 0$ and therefore $\bar{Y}_J(t) \le e^{-\Lambda t}$.

Similarly, the condition about $p_J(t)$ simplify to:
\[
    p_J(0) = 1
    ~, \quad
    \frac{d}{dt} p_J(t) \ge - p_J(t) \Lambda
    \quad \text{almost everywhere,}
\]
which implies $\frac{d}{dt} e^{\Lambda t} p_J(t) \ge 0$ almost everywhere and therefore $p_J(t) \ge e^{-\Lambda t}$.

Next for some $n < |J|$ suppose that the inequality holds for any subset $T \subseteq J$ with $|T| = n+1$.
Consider any subset $T \subset J$ with $|T| = n$.
We first relax the differential inequality for $\bar{Y}_T(t)$ by bounding $\bar{Y}_{T+j}(t)$ using the inductive hypothesis:
\[
    \frac{d}{dt} \bar{Y}_T(t) \le - \bar{Y}_T(t) \sum_i \lambda_i \frac{\sum_{j \in T} e^{tx_j} \rho_{ij} \bar{Y}_T(t)}{\sum_{j \in T} e^{tx_j} \rho_{ij} \bar{Y}_T(t) + \sum_{j \notin T} e^{tx_j} \rho_{ij} p_{T+j}(t)}
    ~.
\]

The right-hand-side is decreasing in $\bar{Y}_T(t)$.
Hence, almost everywhere we get that $\bar{Y}_T(t) \ge p_T(t)$ implies $\frac{d}{dt} \bar{Y}_T(t) \le \frac{d}{dt} p_T(t)$.
Consider an auxiliary function $g(t) = p_T(t) - \bar{Y}_T(t)$ for $t \in [0, 1]$.
We have that $g(0) = 0$, and almost everywhere $g(t) \le 0$ implies $\frac{d}{dt} g(t) \ge 0$.
Further by absolutely continuity of $p_T$ and observing that $\bar{Y}_T(t)$ is also absolutely continuous (in fact, it is decreasing and $\Lambda$-Lipschitz because it cannot decrease faster than the total arrival rate of online vertices), $g$ is also absolutely continuous.
The inequality of the lemma is equivalent to $g(t) \ge 0$ for $0 \le t \le 1$.
Suppose for contrary that there is $0 \le t_0 \le 1$ such that $g(t_0) < 0$.
Consider $t_1 = \sup \{ 0 \le t \le t_0 : g(t) \ge 0 \}$.
We have $g(t_1) = 0$, $t_1 < t_0$, and $g(t) < 0$ for any $t_1 < t \le t_0$.
On the one hand, $g(t_0) - g(t_1) = g(t_0) < 0$.
On the other hand , $g(t_0) - g(t_1) = \int_{t_1}^{t_0} g'(t) dt \ge 0$ since $g(t) < 0$ implies $g'(t) \ge 0$ almost everywhere for $t_1 < t < t_0$.
We have a contradiction.

\subsection{Proof of Lemma~\ref{lem:poisson-matching-lp-relaxation}}
\label{app:poisson-matching-lp-relaxation}

For any $(i, j) \in E$, let $x_{ij}$ be the probability that offline vertex $j$ is matched to an online vertex of type $i$ in the optimal matching of the realized graph.
By definition, the expected objective of the optimal matching is:
\[
    \sum_{(i, j) \in E} w_{ij} x_{ij}
    ~.
\]

It remains to prove that this is feasible for the $\ell$-th level Poisson Matching LP for any $\ell \ge 0$.
Non-negativity holds trivially.
We next verify the other constraints.

\paragraph{Matching Constraints (Online).}
For any online type $i \in I$, $\sum_j x_{ij}$ is the expected number of matched online vertices of type $i$ in the optimal matching for the realized graph, which is upper bounded by the expected number of online vertices of type $i$, i.e.:
\[
    \sum_{j \in J} x_{ij} \le \lambda_i
    ~.
\]

\paragraph{Matching Constraints (Offline).}
This is relevant only at the zeroth level.
For any offline vertex $j \in J$, $\sum_i x_{ij}$ is the probability that $j$ is matched in the optimal matching for the realized graph, and therefore cannot exceed one, i.e.:
\[
    \sum_{i \in I} x_{ij} \le 1
    ~.
\]

\paragraph{Poisson Constraints ($\ell$-th Level).}
For any subset $S \subseteq I$ of online types, any subset $T \subseteq J$ of offline vertices such that $|T| = \ell$, $\sum_{i \in I} \sum_{j \in J} x_{ij}$ is the expected number of offline vertices in $T$ that are matched to some online vertices with types in $S$, in the optimal matching for the realized graph.
This is upper bounded by the expectation of the number of online vertices with types in $S$ \emph{capped by $|T| = \ell$}.
Recall that $1 - P_{k-1}(\Lambda_S)$ is the probability of having at least $k$ online vertices with types $S$ in the realized graph in the Poisson arrival model.
The aforementioned expectation can be written as $\sum_{k=1}^\ell \big( 1 - P_{k-1}(\lambda_S) \big)$.
Therefore:
\[
    \sum_{i \in S} \sum_{j \in T} x_{ij} \le \sum_{k=1}^\ell \big( 1 - P_{k-1}(\lambda_S) \big)
    ~.
\]

\subsection{Proof of Lemma~\ref{lem:poisson-matching-lp-poly-time}}
\label{app:poisson-matching-lp-poly-time}

The case of $\ell = 0$ holds because the matching LP has at most $|I||J|$ variables and $|I| + |J|$ non-trivial constraints.

Next consider any level $\ell \ge 1$.
The $\ell$-th level Poisson Matching LP has $|I||J|$ variables, and $|I|$ matching constraints for online types.
It remains to give a separation oracle for the exponentially many Poisson constraints.
For any $1 \le m \le \ell$, and any of the $\binom{|J|}{m}$ subsets $T \subseteq J$ with $|T| = m$, we will give a polynomial-time (in $|I|$ and $|J|$) separation oracle for the Poisson constraints:
\begin{equation}
    \label{eqn:poisson-matching-lp-poly-time-subcase}
    \forall S \subseteq I ~: \quad \sum_{i \in S} \sum_{j \in J} x_{ij} \le \sum_{k=1}^m \big( 1 - P_{k-1}(\lambda_S) \big)
    ~.
\end{equation}

Then, combining these $O(|J|^\ell)$ separation oracles gives one for the $\ell$-th level Poisson Matching LP with running time polynomial in $|I|$ and $|J|^\ell$.

\begin{lemma}
    \label{lem:poisson-matching-lp-poly-time-fractional}
    The Poisson constraints~\eqref{eqn:poisson-matching-lp-poly-time-subcase} are equivalent to that for any $0 \le \mu_i \le \lambda_i$, $i \in I$:
    \begin{equation}
        \label{eqn:poisson-matching-lp-poly-time-equivalent}
        \sum_{i \in I} \frac{\mu_i}{\lambda_i} \sum_{j \in J} x_{ij} \le \sum_{k=1}^m \Big( 1 - P_{k-1}\Big(\sum_{i \in I} \mu_i \Big) \Big)
        ~.
    \end{equation}
\end{lemma}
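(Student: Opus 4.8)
The plan is to prove the two implications separately. The direction from~\eqref{eqn:poisson-matching-lp-poly-time-equivalent} to~\eqref{eqn:poisson-matching-lp-poly-time-subcase} is immediate: given any $S \subseteq I$, set $\mu_i = \lambda_i$ for $i \in S$ and $\mu_i = 0$ otherwise. Then $\frac{\mu_i}{\lambda_i}$ equals the indicator of $i \in S$ and $\sum_{i \in I} \mu_i = \lambda_S$, so~\eqref{eqn:poisson-matching-lp-poly-time-equivalent} reduces to exactly~\eqref{eqn:poisson-matching-lp-poly-time-subcase}.

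For the converse, I would set $X_i = \sum_{j \in J} x_{ij} \ge 0$ and $\phi(M) = \sum_{k=1}^m \big( 1 - P_{k-1}(M) \big)$ for $M \ge 0$, and consider the function $F(\mu) = \phi\big( \sum_{i \in I} \mu_i \big) - \sum_{i \in I} \frac{\mu_i}{\lambda_i} X_i$ on the box $B = \prod_{i \in I} [0, \lambda_i]$; inequality~\eqref{eqn:poisson-matching-lp-poly-time-equivalent} is precisely the assertion that $F \ge 0$ on $B$. First I would establish that $\phi$ is concave: writing $\phi(M) = m - \sum_{k=0}^{m-1} P_k(M)$ and applying Lemma~\ref{lem:poisson-cdf-property} with the convention $P_{-1} \equiv 0$, the sum telescopes to give $\phi'(M) = P_{m-1}(M)$, hence $\phi''(M) = P_{m-2}(M) - P_{m-1}(M) = - e^{-M} \frac{M^{m-1}}{(m-1)!} \le 0$. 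Since $M \mapsto \phi(M)$ is concave and $\mu \mapsto \sum_i \mu_i$ is linear, $\phi(\sum_i \mu_i)$ is concave in $\mu$, and subtracting the linear term leaves $F$ concave on $B$.

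The key step is then to minimize the concave function $F$ over the polytope $B$. A concave function on a compact polytope attains its minimum at an extreme point, and the extreme points of $B$ are exactly the vectors with $\mu_i \in \{0, \lambda_i\}$ for every $i$, i.e., $\mu_i = \lambda_i \cdot \mathbf{1}[i \in S]$ for some $S \subseteq I$. At such a point, $F(\mu) = \phi(\lambda_S) - \sum_{i \in S} X_i = \sum_{k=1}^m \big( 1 - P_{k-1}(\lambda_S) \big) - \sum_{i \in S} \sum_{j \in J} x_{ij}$, which is nonnegative by~\eqref{eqn:poisson-matching-lp-poly-time-subcase} applied to $S$. Therefore $\min_{\mu \in B} F(\mu) \ge 0$, which is~\eqref{eqn:poisson-matching-lp-poly-time-equivalent}.

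I do not anticipate a serious obstacle here. The only two points needing a line of care are the telescoping computation of $\phi'$ and $\phi''$ from Lemma~\ref{lem:poisson-cdf-property}, and invoking the standard fact that a concave function on a bounded polyhedron is minimized at a vertex. Note that nonnegativity of the $X_i$ is not even used in the extreme-point reduction, though it holds since $x_{ij} \ge 0$.
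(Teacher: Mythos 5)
Your proof is correct and follows essentially the same approach as the paper's: observe that the fractional constraint is linear in $\mu$ on the left and concave on the right, so the difference is concave and attains its minimum at a vertex of the box $\prod_i [0,\lambda_i]$, i.e., at a $\{0,\lambda_i\}$-valued point corresponding to some $S\subseteq I$. The only difference is that you explicitly verify the concavity of $\phi(M)=\sum_{k=1}^m(1-P_{k-1}(M))$ via the telescoping computation $\phi'=P_{m-1}$, $\phi''=P_{m-2}-P_{m-1}\le 0$, whereas the paper asserts concavity without detail; that verification is correct.
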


\begin{proof}
    On the one hand, the Poisson constraints~\eqref{eqn:poisson-matching-lp-poly-time-subcase} are special cases of the stated constraint in this lemma, when $\mu_i \in \{0, \lambda_i\}$, $i \in I$.
    On the other hand, the left-hand-side is linear in $\mu_i$, $i \in I$, and the right-hand-side is concave.
    Hence, the difference between the right and the left is minimized at a vertex of the hyperrectangle $\times_{i \in I} [0, \lambda_i]$.
\end{proof}

For any fixed value of $\sum_i \mu_i$, we can maximize the left-hand-side of \eqref{eqn:poisson-matching-lp-poly-time-equivalent} by greedily assigning masses in descending order of $\frac{1}{\lambda_i} \sum_{j \in J} x_{ij}$.
Therefore, it suffices to check constraints~\eqref{eqn:poisson-matching-lp-poly-time-subcase} only for the subsets $S$ of first $k$ elements in $I$ by the descending order of $\frac{1}{\lambda_i} \sum_{j \in J} x_{ij}$ for $1 \le k \le |I|$.

\subsection{Proof of Lemma~\ref{lem:jensen-submodular}}
\label{app:jensen-submodular}

We prove it by a hybrid argument.
Let $\bar{y} = \int_0^1 y(\mu) d\mu$.
For $0 \le k \le d$, define $y^{(k)} : [0, 1] \to [0, 1]^d$ so that for any coordinate $\ell \in [d]$:
\[
    y^{(k)}_\ell(\mu) =
    \begin{cases}
        y_\ell(\mu) & \text{ if $\ell > k$,} \\
        \bar{y}_\ell & \text{ otherwise.}
    \end{cases}
\]

By definition $y^{(0)}(\mu) = y(\mu)$, and $y^{(d)}(\mu) = \bar{y}$ for any $0 \le \mu \le 1$.
It suffices to prove that for any $1 \le k \le d$:
\[
    \int_0^1 f \big(y^{(k-1)}(\mu) \big) d\mu
    \le
    \int_0^1 f \big(y^{k}(\mu) \big) d\mu
    ~.
\]

The lemma then follows by summing this inequality for $1 \le k \le d$.

Next fix any $1 \le k \le d$.
By definition $y^{(k-1)}(\mu), y^{(k)}(\mu)$ differ only in the $k$-th coordinate.
Let:
\[
    \mu^* = \inf \{ \mu : y^{(k-1)}_k(\mu) \ge \bar{y}_k \}
    ~.
\]

Then, for any $0 \le \mu < \mu^*$ the $k$-th coordinate of $y^{(k-1)}(\mu)$ is less than or equal to $\bar{y}_k$, the $k$-th coordinate of $y^{(k)}(\mu)$;
we also have that $y^{(k)}(\mu) \le y^{(k)}(\mu^*)$.
Similarly, for any $\mu^* < \mu < 1$, the $k$-th coordinate of $y^{(k-1)}(\mu)$ is greater than or equal to $\bar{y}_k$, the $k$-th coordinate of $y^{(k)}(\mu)$;
we also have that $y^{(k)}(\mu) \ge y^{(k)}(\mu^*)$.
Combining this with the (anti-)monotonicity of $\partial_k f$, for any $0 \le \mu \le 1$:
\begin{equation}
    \label{eqn:app-jensen-submodular}
    \Big( \bar{y}_k - y^{(k-1)}_k(\mu) \Big) \Big( \partial_k f \big( y^{(k)}(\mu) \big) - \partial_k f \big( y^{(k)}(\mu^*) \big) \Big)
    \ge 0
    ~.
\end{equation}

Hence, by the diminishing returns of $f$ we have:
\begin{align*}
    \int_0^1 f \big(y^{k}(\mu) \big) d\mu - \int_0^1 f \big(y^{(k-1)}(\mu) \big) d\mu
    & 
    = \int_0^1 \Big( f \big(y^{k}(\mu) \big) - f \big(y^{(k-1)}(\mu) \big) \Big) d\mu
    \\
    & 
    \ge \int_0^1 \partial_k f \big( y^{(k)}(\mu) \big) \big( \bar{y}_k - y^{(k-1)}_k(\mu) \big) d\mu
    \tag{$\partial_{kk} f \le 0$} \\
    & 
    \ge \partial_k f \big( y^{(k)}(\mu^*) \big) \int_0^1  \big( \bar{y}_k - y^{(k-1)}_k(\mu) \big) d\mu
    \tag{Eqn.~\eqref{eqn:app-jensen-submodular}}\\[1ex]
    & 
    = 0
    ~.
\end{align*}

\subsection{Proof of Theorem~\ref{thm:2nd-converse-jensen}}
\label{app:2nd-converse-jensen}

For ease of notations we shall write $y_i = \rho_{ij_1}$ and $z_i = \rho_{ij_1} + \rho_{ij_2}$.
They satisfy:
\begin{enumerate}
    \item
    \emph{(Poisson constraint for $y$)}
    For any $S \subseteq I$:
    \[
        \sum_{i \in S} \lambda_i y_i \le 1 - P_0(\lambda_S)
        ~.
    \]
    \item
    \emph{(Poisson constraint for $z$)}
    For any $S \subseteq I$:
    \[
        \sum_{i \in S} \lambda_i z_i \le 2 - P_0(\lambda_S) - P_1(\lambda_S)
        ~.
    \]
    \item
    \emph{(Order constraint)}
    For any $i \in I$:
    \[
        y_i \le z_i
        ~.
    \]
    \item
    \emph{(Fixed sum constraint)}
    \[
        \sum_i \lambda_i y_i = x_{j_1} ~, \quad \sum_i \lambda_i z_i = x_{j_1} + x_{j_2}
        ~.
    \]
\end{enumerate}

This proof does not need the Poisson constraint for $\rho_{j2}$ (i.e., $z_i - y_i$).
We shall next perform a sequence of transformations to the $y_i$'s and $z_i$'s such that $\sum_i f(y_i, z_i)$ is non-increasing after each transformation.

First, sort $I$ in descending order of $y_i$.
We may further assume without loss of generality that $z_i$'s are also in descending order.
Otherwise, suppose that $y_i > y_{i+1}$ but $z_i < z_{i+1}$.
We can change the value of $z_i$ in the first $\min \{ \lambda_i, \lambda_{i+1} \}$ portion of $i$ to $z_{i+1}$, and change the value of $z_{i+1}$ in the last $\min \{ \lambda_i, \lambda_{i+1} \}$ portion of $i+1$ to $z_i$.
By submodularity (Lemma~\ref{lem:submodular}), this weakly decreases the value of $\sum_i f(y_i, z_i)$.

Further, define $\lambda_{\le i} = \sum_{k \le i} \lambda_k$ in the remaining argument for notational simplicity. 
We may assume without loss of generality that there are indices $i_1$ and $i_2$ such that $\lambda_{\le i_1} = \lambda_1^*$ and $\lambda_{\le i_2} = \lambda_2^*$.
This is because we can split any $i$ into two copies with the same $y_i$ and $z_i$ and whose arrival rates sum to $\lambda_i$.

Next we prove the case of $\lambda_1^* \le \lambda_2^*$, which implies $i_1 \le i_2$.
The other case of $\lambda_1^* < \lambda_2^*$ be proved similarly.

We may further assume without loss of generality that both Poisson constraints are tight for any $i \le i_1$, i.e.:
\begin{align*}
    \sum_{k \le i} \lambda_k y_k
    &
    = 1 - P_0 \Big( \lambda_{\le i} \Big)
    ~, \\
    \sum_{k \le i} \lambda_k z_k
    &
    = 2 - P_0 \big( \lambda_{\le i} \big) - P_1 \big( \lambda_{\le i} \big)
    ~.
\end{align*}

Suppose not.
Let $i$ be the smallest index for which the above constraints are not tight.
We may increase the values of $y_i$ and $z_i$ to make them tight, while decreasing the values of $y_{i'}$ and $z_{i'}$ for $i' > i$ to maintain the fixed sum constraint.
By the non-positivity of $f$'s Hessian entries, this weakly decreases the value of $\sum_i \lambda_i y_i$.
Then, for any $i \le i_1$, by the Jensen inequality (Lemma~\ref{lem:jensen-submodular}):
\[
    \lambda_i f(y_i, z_i) \ge \int_{\lambda_{\le i-1}}^{\lambda_{\le i}} f \big( P_0(\lambda), P_1(\lambda) \big) d \lambda
\]

Finally, the tightness of the Poisson constraints for $i \le i_1$ implies $y_i = 0$ for $i > i_1$.
By a similar argument, we may assume without loss of generality that the Poisson constraint for $z$ is tight for any $i > i_1$.
Then, for any $i > i_1$, by the Jensen inequality (Lemma~\ref{lem:jensen-submodular}):
\[
    \lambda_i f(y_i, z_i) \ge \int_{\lambda_{\le i-1}}^{\lambda_{\le i}} f \big( 0, P_1(\lambda) \big) d \lambda
    ~.
\]

Putting together proves the inequality of the theorem.

\subsection{Proof of Lemma~\ref{lem:1st-level-loose-bound}}
\label{app:1st-level-loose-bound}

The lemma holds trivially if $x_j = 0$.
Next consider $x_j > 0$.
Since $p_j(0) = e^{-2tx_j} = 0$, it suffices to prove that for $0 \le t < 1$:
\[
    \frac{d}{dt} \log p_j(t) \ge -2x_j
    ~.
\]

Suppose for contrary that $\{ 0 \le t < 1: \frac{d}{dt} \log p_j(t) < -2x_j \}$ is non-empty.
Let $t_0$ be its infimum.
Then, we have $\frac{d}{dt} \log p_j(t) \ge - 2x_j$ for $0 \le t < t_0$ and therefore $p_j(t) \ge e^{-2tx_j}$ for $0 \le t \le t_0$.
Then, by Eqn.~\eqref{eqn:1st-level-recurrence} and $\log(1-y) \le -y$ for any $y \le 1$, we get that for any $0 \le t \le t_0$:
\[
    \frac{d}{dt} \log p_j(t) \le - x_j
    ~,
\]
and thus:
\[
    p_j(t) \le e^{-tx_j}
    ~.
\]

In particular, the above bound at $t = t_0$ implies that:
\[
    1 - e^{-2x_jt_0} p_j(t_0) \ge 1 - e^{-x_jt_0} \ge 1-\frac{1}{e}
    ~.
\]

Putting into Eqn.~\eqref{eqn:1st-level-recurrence}, we get that:
\[
    \frac{d}{dt} \log p_j(t = t_0) \ge \frac{\log \big(1 - (1-\frac{1}{e})\,x_j \big)}{1-\frac{1}{e}} \ge - \frac{e}{e-1} x_j
    ~.
\]

Since $p_j$ is continuous which further implies the continuity of $\frac{d}{dt} \log p_j(t)$ by our recurrence~\eqref{eqn:1st-level-recurrence}, the above inequality indicates that $\frac{d}{dt} \log p_j(t) \ge - 2 x_j$ holds for $t \in [t_0, t_0+\varepsilon)$ with a sufficiently small $\varepsilon$, contradicting the definition of $t_0$.

\subsection{Proof of Lemma \ref{lem:x-maximum}}
\label{app:x-maximum}

Let $f(x)=\frac{\log(1-x)}{x}$, $g(x)=x\big(1-\frac{e^{-2xt}}{1-yx}\big)$, the left-hand-side is then $(1-yx)f(g(x))$.
Consider:
\[
    h(x)=x\Big(1-e^{-2xt-x\log(1-y)}\Big) \le g(x)
    ~,
\]
which follows by $(1-y)^x \ge 1-yx$.
It holds with equality holds when $x=0$ and $1$.
It remains to prove that $(1-yx)f(h(x))$ achieves its maximum value at $x=1$.

We next show that it is non-decreasing in $x$.
The derivative is:
\begin{equation}
    \label{eqn:x-maximum-derivative}
    \frac{d}{dx} (1-yx)f(h(x)) = (1-yx)f'(h(x))h'(x)-yf(h(x))
    ~,
\end{equation}
where:
\[
    h'(x) = 1+e^{-2xt-x\log(1-y)} \big( 2xt+x\log(1-y)-1 \big)
    ~.
\]

Fix any $x,t$.
When $y$ decreases, $1-yx$ increases, $h'(x)$ increases, $h(x)$ increases, $\frac{f'(h(x))}{f(h(x))}$ increases.
Hence the derivative in Eqn.~\eqref{eqn:x-maximum-derivative} is minimized when $y = 1 - e^{-t}$.
In this case $h(x) = x \big(1-e^{-xt}\big)$ and $h'(x) = 1-(1-xt) e^{-xt}$.
The non-negativity of derivative reduces to:
\[
    \big(1-(1 - e^{-t})x\big) \, \big(1-(1-xt)e^{-xt}\big) \, \frac{f'(x\left(1-e^{-xt}\right))}{f(x\left(1-e^{-xt}\right))}\leq 1 - e^{-t}
    ~.
\]

By $x(1-e^{-xt}) \le 1-e^{-xt}$ and the monotonicity of $\frac{f'(z)}{f(z)}$, it further reduces to:
\[
    \left(\frac{1}{1-e^{-t}}-x\right) \, \big(1-(1-xt)e^{-xt}\big) \, \frac{f'(1-e^{-xt})}{f(1-e^{-xt})} \le 1
    ~.
\]

Next fix the product $z=xt$, let $\alpha(t)=\frac{1}{1-e^{-t}}-x=\frac{1}{1-e^{-t}}-\frac{z}{t}$.
Its derivative is:
\[
    \alpha'(t) = -\frac{e^{-t}}{(1-e^{-t})^2}+\frac{z}{t^2}
    ~.
\]

Therefore:
\[
    e^t(1-e^{-t})^2\alpha'(t) = \frac{2z(\cosh t-1)}{t^2}-1
    ~,
\]
which is increasing.
Hence, $\alpha'(t)$ is at first negative and then non-negative in $t \in [z, 1]$, including always negative and always non-negative as special cases.
This means that the maximum of $\alpha(t)$ is either at $t=z$ or at $t=1$.
It remains to verify:
\begin{align*}
    \left(\frac{1}{1-e^{-z}}-1\right)\cdot(1-(1-z)e^{-z})\cdot\frac{f'(1-e^{-z})}{f(1-e^{-z})}\leq 1.\\
    \left(\frac{1}{1-e^{-1}}-z\right)\cdot(1-(1-z)e^{-z})\cdot\frac{f'(1-e^{-z})}{f(1-e^{-z})}\leq 1.
\end{align*} 

If $z\geq 0.8$, $\frac{1}{1-e^{-z}}-1\leq 0.82$, $\frac{1}{1-e^{-1}}-z\leq 0.78$, while $\frac{f'(1-e^{-z})}{f(1-e^{-z})}\leq \frac{f'(1-e^{-1})}{f(1-e^{-1})}=1.14$.
Hence the inequalities hold.

If $z\leq 0.8$, we have $\frac{1}{1-e^{-z}}-1\geq\frac{1}{1-e^{-1}}-z$.
Hence, it suffices to verify the first inequality.
Expanding $\frac{f'(1-e^{-z})}{f(1-e^{-z})}$, it becomes:
%
\[
    \big(1-(1-z)e^{-z}\big) \, \left(\frac{e^z}{z}-\frac{e^z}{e^z-1}\right) \leq e^z-1
    ~.
\]

Multiplying both sides by $z (e^z-1)$, and viewing it as a quadratic function of $e^z-1$ with coefficients depending on $z$, the inequality follows by:
\[
    \forall z \in [0, 1] : \quad e^z-1 \le \frac{z}{\sqrt{1-z}}
    ~.
\]

\section{Numerical Verification for the Second Level Analysis}
\label{app:2nd-numerical}

This section explains how to numerically lower bound the probability that Poisson OCS matches an offline vertex $j$ by the end when its LP matched level is $x_j = x$.
In other words, we will numerically compute an upper bound of $s_x(1)$.

Let $\Dx, \Dt$ be sufficiently smaller constants such that $\frac{1}{\Dx}, \frac{1}{\Dt}$ are integers.
We will write $[0, 1]_\Delta$ as the set of multiples of $\Delta$ between $0$ and $1$.

\begin{enumerate}
    \item For any $x' \in [0, 1]_\Dx$:
    \begin{enumerate}
        \item Let $x_1 = \max \{x, x'\}$ and $x_2 = \min \{x, x'\}$, and compute $\lambda_1^*, \lambda_2^*$ that satisfy:
        \begin{align*}
            x_1 & = 1 - P_0(\lambda_1^*) ~, \\[1ex]
            x_2 & = \big( 2 - P_0(\lambda_2^*) - P_1(\lambda_2^*) \big) - \big( 1 - P_0(\min\{\lambda_1^*, \lambda_2^*\}) \big) ~.
        \end{align*}
        \item For any time $t \in [0, 1]$ and any $D \in [0, 1]$, define $\Delta \log \hat{d}(D, t)$ as:
        \[
            \Delta \log \hat{d}(D, t)
            =
            - 
            \begin{cases}
                \displaystyle
                \int_0^{\lambda_1^*} \hat{f}_{D, t} \big( P_0(\lambda), P_1(\lambda) \big) d\lambda + \int_{\lambda_1^*}^{\lambda_2^*} \hat{f}_{D, t} \big( 0, P_1(\lambda) \big) d\lambda
                & \mbox{if } \lambda_1^* \le \lambda_2^* ~, \\[3ex]
                \displaystyle
                \int_0^{\lambda_2^*} \hat{f}_{D, t} \big( P_0(\lambda), P_1(\lambda) \big) d\lambda + \int_{\lambda_2^*}^{\lambda_1^*} \hat{f}_{D, t} \big( P_0(\lambda), P_0(\lambda) \big) d\lambda
                & \mbox{if } \lambda_1^* > \lambda_2^* ~,
            \end{cases}
        \]
        where (recall that $z^+ = \max \{z, 0\}$):
        \[
            \hat{f}_{D, t}(y, z) = \frac{e^{t(2x_1+x_2)} D \, y + e^{t(x_1+2x_2)} D \, (z-y)} { \big(e^{t(2x_1+x_2)} D - 1\big)^+ \, y + \big(e^{t(x_1+2x_2)} D - 1 \big)^+\, (z-y) + 1}
            ~.
        \]
        \item For $t \in [0, 1]_\Dt$ recursively compute $\hat{d}_{x_1, x_2}(t)$ by:
        \begin{align*}
            \hat{d}_{x_1, x_2}(0) & = 1 ~, \\[1ex]
            \tilde{d}_{x_1, x_2}(t) & = \hat{d}_{x_1, x_2}(t) \cdot \exp \big( \Dt \cdot \Delta \log \hat{d}(\hat{d}_{x_1, x_2}(t), t) \big) ~, \\[1ex]
            \hat{d}_{x_1, x_2}(t + \Dt) & = \hat{d}_{x_1, x_2}(t) \cdot \exp \big( \Dt \cdot \Delta \log \hat{d}(\tilde{d}_{x_1, x_2}(t), t) \big) ~,
            ~.
        \end{align*}
    \end{enumerate}
    \item For any $t \in [0, 1]_\Dt$, compute:
    \[
        \hat{q}_x(t) = \min \Big\{ e^{-t x}, \max_{x'\in [0, 1]_\Dx} e^{t (x' + \Dx)} \hat{d}_{\max\{x, x'\}, \min\{x, x'\}}(t) \Big\}
        ~.
    \]

    \item For $t \in [0, 1]_\Dt$, recursively compute $\hat{s}_x(t)$ by:
    \begin{align*}
        \hat{s}_x(0) & = 1
        ~; \\[1ex]
        \tilde{s}_x(t) & = \hat{s}_x(t) \exp \big( \Dt \cdot \Delta \log \hat{s}_x(t) \big)
        ~; \\[1ex]
        \hat{s}_x(t + \Delta t) & = \hat{s}_x(t) \exp \big( \Dt \cdot \Delta \log \tilde{s}_x(t) \big)
        ~,
    \end{align*}
    where:
    \begin{align*}
        \Delta \log \hat{s}_x(t)
        &
        = \frac{ \log \left( 1 - x + x \cdot \frac{e^{-t x} \hat{q}_x(t)}{\hat{s}_x(t)} \right)} { 1 - \frac{e^{-t x} \hat{q}_x(t)}{\hat{s}_x(t)}}
        ~,
        \\
        \Delta \log \tilde{s}_x(t)
        &
        = \frac{ \log \left( 1 - x + x \cdot \frac{e^{-t x} \hat{q}_x(t)}{\tilde{s}_x(t)} \right)} { 1 - \frac{e^{-t x} \hat{q}_x(t)}{\tilde{s}_x(t)}}
        ~.
    \end{align*}
\end{enumerate}

\begin{lemma}
    \label{lem:2nd-level-D-hat}
    For any $1 \ge x_1 \ge x_2 \ge 0$ and $t\in [0, 1]_\Dt$, $\hat{d}_{x_1, x_2}(t) \ge d_{x_1, x_2}(t)$.
\end{lemma}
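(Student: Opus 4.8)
The plan is to pass to logarithmic coordinates and recognize the recursion defining $\hat d_{x_1,x_2}$ as a conservative (over-estimating) predictor--corrector discretization of the ODE defining $d_{x_1,x_2}$. Fix $1 \ge x_1 \ge x_2 \ge 0$ and set $F(L,t) := \Delta\log\hat d(e^{L},t)$. Then, rewriting the recurrence in Lemma~\ref{lem:2nd-level-doubleton}, the function $L(t) := \log d_{x_1,x_2}(t)$ satisfies $L(0)=0$ and $\tfrac{d}{dt}L(t) = F(L(t),t)$ (well-defined since $d_{x_1,x_2}(t)\in(0,1]$, its log-derivative being nonpositive and bounded), while the grid values $\hat L(t) := \log\hat d_{x_1,x_2}(t)$ for $t\in[0,1]_{\Dt}$ obey $\hat L(0)=0$ and
\[
    \tilde L(t) = \hat L(t) + \Dt\, F\big(\hat L(t),t\big), \qquad \hat L(t+\Dt) = \hat L(t) + \Dt\, F\big(\tilde L(t),t\big).
\]
The assertion $\hat d_{x_1,x_2}(t) \ge d_{x_1,x_2}(t)$ is equivalent to $\hat L(t) \ge L(t)$ on the grid, which I would prove by induction on $t$.

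The first step is to record three monotonicity properties of $F$ (equivalently of the integrand $\hat f_{D,t}$, since $\Delta\log\hat d(D,t)$ equals $-1$ times an integral of $\hat f_{D,t}$ over intervals that do not depend on $D$ or $t$, and on those intervals the arguments satisfy $0 \le y \le z \le 1$): (i) $F(L,t)\le 0$; (ii) $F(L,t)$ is non-increasing in $L$; (iii) $F(L,t)$ is non-increasing in $t$. Property (i) is immediate because the numerator of $\hat f_{D,t}$ is nonnegative and its denominator is at least $1$. For (ii) and (iii), write $u = e^{t(2x_1+x_2)}D$ and $v = e^{t(x_1+2x_2)}D$; since $x_1 \ge x_2$ one has $u \ge v > 0$, and increasing either $D$ or $t$ only increases $u$ and $v$. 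So (ii) and (iii) both follow once one shows that $\frac{uy+v(z-y)}{(u-1)^+ y + (v-1)^+(z-y)+1}$ is non-decreasing in each of $u$ and $v$ on $\{u\ge v>0\}$ when $0\le y\le z\le 1$. I expect this to be the main obstacle: it requires a case analysis over whether $u\ge 1$ and whether $v\ge 1$, using $z\le 1$, $y\le z$, and continuity across the thresholds. It is of the same flavour as, and slightly easier than, the Hessian computation already carried out in the proof of Lemma~\ref{lem:f-dr-submodular}; in fact what is needed is precisely a ``quasi-monotonicity'' of the vector field that also makes the ODE for $d_{x_1,x_2}$ well-behaved.

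Granting (i)--(iii), the induction is a one-step comparison in the spirit of the argument in Appendix~\ref{app:meta-analysis}. The base case is the equality $\hat L(0)=0=L(0)$. For the step, assume $\hat L(t)\ge L(t)$. On the continuous side, $F\le 0$ makes $L$ non-increasing, so for $s\in[t,t+\Dt]$ we have $L(s)\ge L(t+\Dt)$, hence $F(L(s),s)\le F(L(s),t)\le F(L(t+\Dt),t)$ by (iii) then (ii); integrating over $[t,t+\Dt]$ gives $L(t+\Dt)\le L(t)+\Dt\,F(L(t+\Dt),t)$. On the discrete side, $\tilde L(t)=\hat L(t)+\Dt F(\hat L(t),t)\le \hat L(t)$ by (i), so $F(\tilde L(t),t)\ge F(\hat L(t),t)$ by (ii) and therefore $\hat L(t+\Dt)\ge \tilde L(t)$; applying (ii) once more with $\tilde L(t)\le\hat L(t+\Dt)$ yields $F(\tilde L(t),t)\ge F(\hat L(t+\Dt),t)$, i.e. $\hat L(t+\Dt)\ge \hat L(t)+\Dt\,F(\hat L(t+\Dt),t)$. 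Finally, $\Phi(x):=x-\Dt F(x,t)$ is strictly increasing by (ii), and combining the two displayed inequalities with $\hat L(t)\ge L(t)$ gives $\Phi(\hat L(t+\Dt))\ge \hat L(t)\ge L(t)\ge \Phi(L(t+\Dt))$, whence $\hat L(t+\Dt)\ge L(t+\Dt)$. Exponentiating closes the induction and proves Lemma~\ref{lem:2nd-level-D-hat}.
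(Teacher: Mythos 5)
Your proof is correct and follows essentially the same approach as the paper: an induction on $t\in[0,1]_{\Dt}$ driven by the monotonicity of $\hat f_{D,t}$ in $D$ and in $t$ (hence of $\Delta\log\hat d(D,t)$ in the opposite direction, together with its sign), which are exactly the facts the paper invokes. The only difference is organizational — the paper handles the inductive step via a case split (either $\tilde d_{x_1,x_2}(t)\ge d_{x_1,x_2}(t+\Dt)$, or else integrate the ODE and compare its log-derivative to $\Delta\log\hat d(\tilde d_{x_1,x_2}(t),t)$), whereas you avoid the split by deriving implicit-Euler bounds on both sides and closing through the strictly increasing map $\Phi(x)=x-\Dt\,F(x,t)$; this is a modest clean-up, not a different argument.
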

\begin{proof}
    We will prove it by induction in $t \in [0, 1]_\Dt$ by ascending order.
    The base case of $t = 0$ holds with equality because both sides equal $1$.
    Suppose that for some $t < 1$ we have $\hat{d}_{x_1, x_2}(t) \ge d_{x_1, x_2}(t)$.
    Next consider the lemma at time $t+\Dt$.

    First we have that $\tilde{d}_{x_1, x_2}(t) \le \hat{d}_{x_1, x_2}(t)$ by the monotonicity of $\hat{f}_{D,t}(y,z)$ in $D$.
    It implies that $\hat{d}_{x_1, x_2}(t+\Dt) \ge \tilde{d}_{x_1, x_2}(t)$.
    If further $\tilde{d}_{x_1, x_2}(t) \ge d_{x_1, x_2}(t+\Dt)$ we have proved the lemma for $t+\Dt$.

    Otherwise, we can verify that $\Delta \log \hat{d}(\tilde{d}_{x_1, x_2}(t), t) \ge \frac{d}{dt} \log d_{x_1, x_2}(t')$ for any $t \le t' \le t+\Dt$, so by the definition of $\hat{d}_{x_1, x_2}(t+\Dt)$ it is greater than or equal to $d_{x_1, x_2}(t+\Dt)$.
\end{proof}

\begin{lemma}
    \label{lem:2nd-level-Q-hat}
    For any $x\in[0,1]$ and $t\in [0, 1]_\Dt$, $\hat{q}_x(t)\geq q_x(t)$.
\end{lemma}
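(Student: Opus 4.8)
The plan is to deduce Lemma~\ref{lem:2nd-level-Q-hat} from three ingredients already available: the trivial ceiling $q_x(t)\le e^{-tx}$ of Corollary~\ref{cor:2nd-level-q-trivial}; the bound $\hat d\ge d$ of Lemma~\ref{lem:2nd-level-D-hat}; and a monotonicity property of the doubleton functions $d_{x_1,x_2}$, which is what allows one to replace the supremum over $x'\in[0,1]$ hidden inside $q_x$ by the maximum over the grid $[0,1]_{\Dx}$ that appears in $\hat q_x$.

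First I would recall that $q_x$ is the tightest function obeying \eqref{eqn:2nd-level-q-condition}, so that $q_x(t)=\sup_{x'\in[0,1]} e^{tx'}\,d_{\max\{x,x'\},\min\{x,x'\}}(t)$ (the form in which this quantity enters the proof of Lemma~\ref{lem:2nd-level-singleton}). Since $\hat q_x(t)=\min\{e^{-tx},\,M_x(t)\}$ with $M_x(t)=\max_{x'\in[0,1]_{\Dx}} e^{t(x'+\Dx)}\hat d_{\max\{x,x'\},\min\{x,x'\}}(t)$, and Corollary~\ref{cor:2nd-level-q-trivial} already gives $q_x(t)\le e^{-tx}$, it remains to prove $q_x(t)\le M_x(t)$. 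Fix any $x'\in[0,1]$, let $g$ be the largest element of $[0,1]_{\Dx}$ with $g\le x'$ (so $x'\le g+\Dx$, hence $e^{tx'}\le e^{t(g+\Dx)}$), and observe that the map $s\mapsto(\max\{x,s\},\min\{x,s\})$ is coordinatewise non-decreasing. Thus, granting that $d_{x_1,x_2}(t)$ is non-increasing in each coordinate on $1\ge x_1\ge x_2\ge 0$, we obtain $d_{\max\{x,x'\},\min\{x,x'\}}(t)\le d_{\max\{x,g\},\min\{x,g\}}(t)\le \hat d_{\max\{x,g\},\min\{x,g\}}(t)$, the last step by Lemma~\ref{lem:2nd-level-D-hat}; multiplying by the exponentials and taking the supremum over $x'$ gives $q_x(t)\le M_x(t)$, which with Corollary~\ref{cor:2nd-level-q-trivial} is exactly $q_x(t)\le\hat q_x(t)$.

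It therefore suffices to prove: \emph{for $1\ge x_1\ge x_2\ge 0$ and $t\in[0,1]$, $d_{x_1,x_2}(t)$ is non-increasing in $x_1$ (with $x_2$ fixed) and in $x_2$ (with $x_1$ fixed).} I would prove this by an ODE comparison argument in the spirit of Lemma~\ref{lem:meta-analysis} and Lemma~\ref{lem:2nd-level-doubleton-trivial}. Write the recurrence of Lemma~\ref{lem:2nd-level-doubleton} as $\tfrac{d}{dt}d_{x_1,x_2}=-d_{x_1,x_2}\,R_{x_1,x_2}(d_{x_1,x_2},t)$, where $R_{x_1,x_2}(D,t)$ is the bracketed integral with $d_{x_1,x_2}(t)$ replaced by the running value $D$. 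Since $d_{x_1,x_2}(0)=1$ independently of $(x_1,x_2)$ and $D\mapsto D\,R_{x_1,x_2}(D,t)$ is non-decreasing (the coefficients $A=e^{t(2x_1+x_2)}D$, $B=e^{t(x_1+2x_2)}D$ of $f_{x_1,x_2}$ increase with $D$, and $f_{x_1,x_2}$ is non-decreasing in $A$ and $B$), it is enough to show $R_{x_1,x_2}(D,t)$ is non-decreasing in $x_1$ and in $x_2$ at fixed $(D,t)$; then for $\tilde x_1\ge x_1$ the function $d_{\tilde x_1,x_2}$ is a sub-solution of the ODE governing $d_{x_1,x_2}$, and the comparison propagates the inequality to all $t$ (similarly in the $x_2$ variable). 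To see this monotonicity of $R$ I would use: (i) $\lambda_1^*$ is increasing in $x_1$ (from $x_1=1-e^{-\lambda_1^*}$), and $\lambda_2^*$ is increasing in $x_2$ and weakly increasing in $x_1$ — both read off from the defining equation of $\lambda_2^*$ by treating the regimes $\lambda_2^*\le\lambda_1^*$ and $\lambda_2^*>\lambda_1^*$ separately and noting the two expressions agree at $\lambda_1^*=\lambda_2^*$; (ii) $A$ and $B$ increase with $x_1$ and $x_2$, and (an elementary computation from the explicit form of $f_{x_1,x_2}$ in the proof of Lemma~\ref{lem:f-dr-submodular}) $f_{x_1,x_2}(y,z)$ is non-negative and non-decreasing in each of $A,B,y,z$; and (iii) with integrands held fixed, $R$ has non-negative partial derivatives in $\lambda_1^*$ and $\lambda_2^*$ — differentiating the two-piece formula yields either $f_{x_1,x_2}$ at a single point or a difference $f_{x_1,x_2}(\cdot,P_1)-f_{x_1,x_2}(\cdot,P_0)$ or $f_{x_1,x_2}(P_0,\cdot)-f_{x_1,x_2}(0,\cdot)$, all $\ge 0$ by (ii) together with $P_1\ge P_0$ — and the two pieces agree on $\lambda_1^*=\lambda_2^*$, so $R$ is a well-defined continuous function of $(\lambda_1^*,\lambda_2^*)$, non-decreasing in each. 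Combining (i)–(iii): perturbing $x_1$ (resp.\ $x_2$) upward moves each of $\lambda_1^*,\lambda_2^*,A,B$ up (or leaves it fixed), and $R$ is non-decreasing in all of them, giving the monotonicity of $R$ and hence of $d_{x_1,x_2}$.

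The main obstacle is this last lemma, and within it the bookkeeping around the regimes $\lambda_1^*\le\lambda_2^*$ and $\lambda_1^*>\lambda_2^*$: one must track how the thresholds $\lambda_1^*,\lambda_2^*$ co-move when a parameter is perturbed (a perturbation can push the pair across $\lambda_1^*=\lambda_2^*$), and confirm that the rate function $R$ is continuous there and monotone on each side, so that the global monotonicity survives the regime switch. The remaining pieces — the rounding step, the invocations of Corollary~\ref{cor:2nd-level-q-trivial} and Lemma~\ref{lem:2nd-level-D-hat}, the comparison argument, and the sign computations for $f_{x_1,x_2}$ — are routine and parallel arguments already carried out in the paper.
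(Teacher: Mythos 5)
Your proof is correct and follows the same route as the paper's: round the continuous $x'$ up to a grid point in the exponent and down in the subscripts of $d$ using the coordinatewise monotonicity of $d_{x_1,x_2}$, bound $d \le \hat d$ via Lemma~\ref{lem:2nd-level-D-hat}, and cap with Corollary~\ref{cor:2nd-level-q-trivial}. The paper simply \emph{observes} that $d_{x_1,x_2}(t)$ is non-increasing in each argument with a one-line ODE-comparison remark, whereas you spell out the full verification that $D\mapsto D\,R_{x_1,x_2}(D,t)$ and $(x_1,x_2)\mapsto R_{x_1,x_2}(D,t)$ are monotone (via the behavior of $\lambda_1^*,\lambda_2^*,A,B$ and the sign computations for $f_{x_1,x_2}$); this is more detail than the paper provides, not a different argument.
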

\begin{proof}
    We first observe that for any $t \in [0, 1]$, $d_{x_1,x_2}(t)$ is non-increasing in both $x_1$ and $x_2$.
    Suppose that $x_1' \ge x_1$ and $x_2' \ge x_2$.
    By the recursion of $d_{x_1, x_2}$ and $d_{x_1', x_2'}$, $d_{x_1, x_2}(t) \le d_{x_1', x_2'}(t)$ would imply:
    \[
        \frac{d}{dt} d_{x_1, x_2}(t) \ge \frac{d}{dt} d_{x_1', x_2'}(t)
        ~.
    \]

    Hence, for any $x' \in [0, 1]_\Dx$ and any $x'' \in [x', x'+\Dx)$, and any time $t \in [0, 1]_\Dt$ we have:
    \begin{align*}
        e^{tx''} d_{\max\{x, x''\}, \min\{x, x''\}}(t)
        &
        \le e^{t(x'+\Dx)} d_{\max\{x, x'\}, \min\{x, x'\}}(t) \\[1ex]
        &
        \le e^{t(x'+\Dx)} \hat{d}_{\max\{x, x'\}, \min\{x, x'\}}(t)
        ~.
        \tag{Lemma~\ref{lem:2nd-level-D-hat}}
    \end{align*}
    
    Combining it the trivial bound of $q_x(t) \le e^{-xt}$ (Lemma~\ref{cor:2nd-level-q-trivial}), the lemma follows from the definition of $\hat{q}_x(t)$.
\end{proof}

Similar to Lemma \ref{lem:2nd-level-Q}, we have the following lemma which underlies the first-level converse Jensen inequality, and the proof is omitted.

\begin{lemma}
    For any $0 \le x \le 1$ and $0 \le t \le 1$, $\hat{s}_x(t) \ge e^{-tx} \hat{q}_x(t)$.
\end{lemma}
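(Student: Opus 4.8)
The plan is to follow the argument behind Lemma~\ref{lem:1st-level-loose-bound} (equivalently, the omitted proof of Lemma~\ref{lem:2nd-level-Q}), carried out directly on the grid $[0,1]_{\Dt}$ for the two-stage predictor-corrector recursion defining $\hat{s}_x$. Writing $R(t)=\frac{e^{-tx}\hat{q}_x(t)}{\hat{s}_x(t)}$, the assertion is precisely $R(t)\le 1$ at every grid point, and the update obeys $\Delta\log\hat{s}_x(t)=\phi_x(R(t))$ with $\phi_x(R)=\frac{\log(1-x(1-R))}{1-R}$ (and $\phi_x(1)=-x$ in the limit). The two facts I would lean on are: (i) $\phi_x$ is increasing in $R$ with $\phi_x(R)\le -x$ once $R\le 1$, so the rate is self-correcting---a dip in $\hat{s}_x$ makes its subsequent decay less steep; and (ii) $\hat{q}_x(t)\le e^{-tx}$ holds by construction, so $R(t)\le 1$ would already follow from the cleaner bound $\hat{s}_x(t)\ge e^{-2tx}$---this is what I would actually prove, up to the $O(\Dt)$ slack inherent in the scheme.

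First I would run a grid induction establishing the pair of invariants $e^{-2tx}\le\hat{s}_x(t)$ and $\hat{s}_x(t)\le e^{-tx}$, each read with an $O(\Dt)$ fudge factor that does not compound over the $1/\Dt$ steps, in the same spirit as the proofs of Lemmas~\ref{lem:meta-analysis} and~\ref{lem:2nd-level-D-hat}. The upper invariant propagates from fact~(i): $R(t)\le 1$ gives $\Delta\log\hat{s}_x(t)\le -x$; the predictor value satisfies $\tilde{s}_x(t)\le\hat{s}_x(t)$ and hence pushes the effective ratio at most $e^{O(\Dt)}$ above $1$, where $\phi_x$ still lies within $O(\Dt)$ of $-x$, so $\hat{s}_x(t+\Dt)\le\hat{s}_x(t)\,e^{-x\Dt}\,e^{O(\Dt^2)}$. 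For the lower invariant I need $R(t)$ bounded away from $0$: combining the already-propagated upper invariant with a lower bound $\hat{q}_x(t)\ge e^{-\kappa x t}$ gives $R(t)\ge e^{-\kappa xt}(1-O(\Dt))$, hence $1-R(t)\le 1-e^{-\kappa}+O(\Dt)$; once this is below the positive root $z^{*}\approx 0.797$ of $\log(1-z)=-2z$, I may invoke $\log(1-x(1-R))\ge -2x(1-R)$ to conclude $\Delta\log\hat{s}_x(t)\ge -2x$, and likewise $\Delta\log\tilde{s}_x(t)\ge -2x$ since the corrector evaluates $\phi_x$ at a ratio no smaller than $R(t)$, so $\hat{s}_x(t+\Dt)\ge\hat{s}_x(t)\,e^{-2x\Dt}\ge e^{-2(t+\Dt)x}(1-O(\Dt))$.

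The remaining input is the lower bound $\hat{q}_x(t)\ge e^{-\kappa x t}$ with $\kappa=\frac{e}{e-1}$. Taking $x'=0$ in the definition of $\hat{q}_x$ gives $\hat{q}_x(t)\ge e^{t\Dx}\hat{d}_{x,0}(t)\ge\hat{d}_{x,0}(t)$, and Lemmas~\ref{lem:2nd-level-D-hat} and~\ref{lem:2nd-level-doubleton-trivial} give $\hat{d}_{x,0}(t)\ge d_{x,0}(t)$ and $d_{x,0}(t)\le e^{-tx}$, so it suffices to bound $d_{x,0}$ from below. For $x_2=0$ one has $\lambda_2^{*}=0$, so the recurrence of $d_{x,0}$ collapses to $\frac{d}{dt}\log d_{x,0}(t)=-\int_{1-x}^{1}\frac{A}{(A-1)^{+}u+1}\,du$ with $A=e^{2tx}d_{x,0}(t)$; plugging in $d_{x,0}(t)\le e^{-tx}$ forces $A\le e^{tx}\le e$, and a routine evaluation shows this integral is maximized over the feasible range at $\frac{e}{e-1}x$, attained at $x=1$. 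Hence $\frac{d}{dt}\log d_{x,0}(t)\ge -\frac{e}{e-1}x$ throughout, so $d_{x,0}(t)\ge e^{-\frac{e}{e-1}xt}$, and therefore $\hat{q}_x(t)\ge e^{-\frac{e}{e-1}xt}$. Since $1-e^{-e/(e-1)}\approx 0.794$ is below $z^{*}\approx 0.797$, the threshold required in the previous paragraph holds with a small but positive margin.

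The main obstacle, exactly as in Lemma~\ref{lem:1st-level-loose-bound}, is the self-referential character of the recursion: the ratio $R(t)$ that must be controlled also governs the decay rate of $\hat{s}_x$, so the bounds $\hat{s}_x(t)\le e^{-tx}$ and $\hat{s}_x(t)\ge e^{-2tx}$ cannot be proved separately and have to be bootstrapped together along the grid; moreover the inequality that closes the loop has little room to spare---$\frac{e}{e-1}\approx 1.58$ sits below $2$, and $1-e^{-e/(e-1)}$ below $z^{*}$, only by small constants. Layered on top of this is the discretization bookkeeping forced by the two-stage update, where every exponential estimate must be carried with an $O(\Dt)$ error term that a sufficiently small $\Dt$ keeps from accumulating.
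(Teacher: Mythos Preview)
The paper omits this proof entirely, pointing to Lemma~\ref{lem:2nd-level-Q}, which in turn defers to Lemma~\ref{lem:1st-level-loose-bound}. Your sketch correctly follows that bootstrap structure and, importantly, identifies a gap the paper's ``almost verbatim'' remark glosses over: in the first-level argument the lower bound on the ratio $Q$ came for free from $p_j(t)\le e^{-tx_j}$ because the numerator was the fixed function $e^{-2tx_j}$; here the numerator is $e^{-tx}\hat q_x(t)$ with $\hat q_x$ only bounded \emph{above}, so you need a separate lower bound on $\hat q_x$. Your route through $x'=0$ and the analysis of $d_{x,0}$ is correct and goes beyond the paper---the inequality $\int_{1-x}^1 \frac{A}{(A-1)^+u+1}\,du\le \frac{e}{e-1}x$ for $A\le e$ does hold (it is increasing in $A$, and one checks $\log\frac{e}{e-(e-1)x}\le x$ on $[0,1]$ with equality at the endpoints).

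The concern is the $O(\Dt)$ slack. The lemma is stated as an exact inequality at every grid point, and your stronger target $\hat s_x(t)\ge e^{-2tx}$ is only claimed ``up to $O(\Dt)$.'' But $\hat q_x(t)$ can equal $e^{-tx}$ exactly---in particular whenever the outer $\min$ in its definition is active, which certainly happens near $t=0$---so the gap between $e^{-2tx}$ and $e^{-tx}\hat q_x(t)$ can be zero, and an $O(\Dt)$ fudge on the stronger target does not obviously yield the exact weaker statement. Relatedly, your numerical margin $1-e^{-e/(e-1)}\approx 0.794$ versus $z^*\approx 0.797$ is only about $0.003$, and unless the discretization error is tracked with constants (not just $O(\cdot)$), it is not clear it fits.

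These issues look fixable rather than fatal: the predictor--corrector update is self-correcting in the right direction (the predictor $\tilde s_x(t)\le \hat s_x(t)$ forces $\tilde R\ge R$, hence $\phi_x(\tilde R)\ge \phi_x(R)$ by monotonicity of $\phi_x$), and a discrete induction patterned exactly on Lemma~\ref{lem:2nd-level-D-hat} should deliver the exact inequality without slack. But the sketch as written does not close this.
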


\begin{lemma}
    For any $x\in[0,1]$ and $t\in [0, 1]_\Dt$, $\hat{s}_x(t)\geq s_x(t)$.
\end{lemma}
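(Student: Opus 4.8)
The plan is to mirror the proof of Lemma~\ref{lem:2nd-level-D-hat} almost verbatim, with the doubleton recursion $d_{x_1,x_2}$ replaced by the singleton recursion $s_x$ and the functional $\hat f_{D,t}$ replaced by the scalar rate map $D \mapsto \frac{\log(1-x+x\,e^{-tx}\hat q_x(t)/D)}{1-e^{-tx}\hat q_x(t)/D}$. We may assume $x>0$, since $x=0$ forces $s_x\equiv\hat s_x\equiv1$. Induct on $t\in[0,1]_\Dt$ in ascending order; the base case $t=0$ is immediate since $\hat s_x(0)=s_x(0)=1$. For the inductive step, assume $\hat s_x(t)\ge s_x(t)$ and consider time $t+\Dt$. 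First record the elementary monotonicity of the rate functional: writing $\Phi(S,q,\tau)=\frac{\log(1-x(1-u))}{1-u}$ with $u=e^{-\tau x}q/S$, the substitution $v=1-u$ turns $\Phi$ into $\frac{\log(1-xv)}{v}$, and $\frac{d}{dv}\frac{\log(1-xv)}{v}=\frac{-xv/(1-xv)-\log(1-xv)}{v^2}\le0$ on the whole domain (because $-\frac{w}{1-w}-\log(1-w)\le0$ for $w=xv$). Hence $\Phi\le0$ everywhere, $\Phi$ is non-decreasing in $u$, and therefore non-increasing in $S$. In particular $\Delta\log\hat s_x(t)=\Phi(\hat s_x(t),\hat q_x(t),t)\le0$, so the predictor satisfies $\tilde s_x(t)\le\hat s_x(t)$; monotonicity in $S$ then gives $\Delta\log\tilde s_x(t)=\Phi(\tilde s_x(t),\hat q_x(t),t)\ge\Delta\log\hat s_x(t)$, whence $\hat s_x(t+\Dt)\ge\tilde s_x(t)$. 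Both $s_x$ and $\hat s_x$ are non-increasing, again because $\Phi\le0$.

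Now split into two cases, exactly as in Lemma~\ref{lem:2nd-level-D-hat}. If $\tilde s_x(t)\ge s_x(t+\Dt)$, then $\hat s_x(t+\Dt)\ge\tilde s_x(t)\ge s_x(t+\Dt)$ and we are done. Otherwise $\tilde s_x(t)<s_x(t+\Dt)$, and I claim that in this case $\Delta\log\tilde s_x(t)\ge\frac{d}{ds}\log s_x(s)$ for every $s\in[t,t+\Dt]$. Granting the claim, integrate over $[t,t+\Dt]$ and use the inductive hypothesis: $\hat s_x(t+\Dt)=\hat s_x(t)\exp\!\big(\Dt\,\Delta\log\tilde s_x(t)\big)\ge s_x(t)\exp\!\big(\int_t^{t+\Dt}\frac{d}{ds}\log s_x(s)\,ds\big)=s_x(t+\Dt)$, completing the induction. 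To prove the claim, observe that the right side is $\Phi(s_x(s),q_x(s),s)$ and the left side is $\Phi(\tilde s_x(t),\hat q_x(t),t)$, so by monotonicity of $\Phi$ in $u$ it suffices to show $\frac{e^{-tx}\hat q_x(t)}{\tilde s_x(t)}\ge\frac{e^{-sx}q_x(s)}{s_x(s)}$. Here $e^{-tx}\ge e^{-sx}$ since $t\le s$; $\tilde s_x(t)\le s_x(t+\Dt)\le s_x(s)$ in this case since $s_x$ is non-increasing; and $\hat q_x(t)\ge q_x(t)\ge q_x(s)$ by Lemma~\ref{lem:2nd-level-Q-hat} together with the fact that the (minimal valid) choice of $q_x$ is non-increasing in its argument. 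That last fact follows since each $d_{x_1,x_2}$ is non-increasing — its recursion gives $\frac{d}{dt}d_{x_1,x_2}(t)=-d_{x_1,x_2}(t)\cdot[\text{a non-negative integral of }f_{x_1,x_2}]\le0$ — so a pointwise supremum over $x'$ of the non-increasing quantities $e^{tx'}d_{\max\{x,x'\},\min\{x,x'\}}(t)$, capped by the non-increasing $e^{-tx}$, stays non-increasing. Throughout, the logarithm arguments remain positive and $\frac{d}{ds}\log s_x(s)$ is well defined thanks to the companion bound $\hat s_x(t)\ge e^{-tx}\hat q_x(t)$ (and its continuous analogue $s_x(t)\ge e^{-tx}q_x(t)$) proved in the lemma immediately above.

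The main obstacle is not any single hard inequality but the joint monotonicity bookkeeping in the second case: pinning down the sign and the direction of monotonicity of the rate functional $\Phi(S,q,\tau)$ simultaneously in all three arguments, confirming that the predictor value $\tilde s_x(t)$ underestimates $s_x$ on the entire grid cell $[t,t+\Dt]$, and checking that the discrete envelope $\hat q_x$ dominates the continuous $q_x$ across the cell. This parallels the ``one can verify'' step in the proof of Lemma~\ref{lem:2nd-level-D-hat}, with the $q_x$-monotonicity contributing the one genuinely new layer of argument.
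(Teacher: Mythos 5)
Your proof follows the same inductive strategy as the paper's (which refers the reader to the argument of Lemma~\ref{lem:2nd-level-D-hat}'s proof, despite the text's likely typo pointing at Lemma~\ref{lem:2nd-level-Q-hat}): induct over the time grid, show the predictor step never overshoots, and in the hard case compare the discrete log-rate with the continuous one over the whole cell. The analysis of $\Phi$ --- its sign, its monotonicity in $u=e^{-\tau x}q/S$, and hence its monotonicity in $S$ and in $\tau$ --- is correct and is the right bookkeeping.

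However, the step where you argue that the minimal valid $q_x$ is non-increasing contains a genuine gap. You write that each $e^{tx'}d_{\max\{x,x'\},\min\{x,x'\}}(t)$ is non-increasing in $t$, and justify this by observing that $d_{x_1,x_2}$ itself is non-increasing. That inference does not follow: the prefactor $e^{tx'}$ is \emph{increasing}, so knowing only that $d$ is non-increasing is compatible with the product $e^{tx'}d_{\max\{x,x'\},\min\{x,x'\}}(t)$ being \emph{increasing} (e.g.\ if $d$ decays like $e^{-ct}$ for $c<x'$). What one actually needs is the quantitative rate bound
$\frac{d}{dt}\log d_{x_1,x_2}(t)\le -x_1$ with $x_1=\max\{x_1,x_2\}\ge x'$,
equivalently $\int f_{x_1,x_2}\,d\lambda \ge x_1$ at every time. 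This is true but requires real work: first show $d_{x_1,x_2}(t)\ge e^{-t(2x_1+x_2)}$ by the same bootstrapping style as Lemma~\ref{lem:1st-level-loose-bound}, which gives $A=e^{t(2x_1+x_2)}d_{x_1,x_2}(t)\ge 1$; then observe $f_{x_1,x_2}(y,z)\ge \frac{Ay}{(A-1)^+y+1}\ge y$; and finally use $\int_0^{\lambda_1^*}P_0(\lambda)\,d\lambda = 1-P_0(\lambda_1^*)=x_1$. Without this (or some equivalent argument), the claim $\hat q_x(t)\ge q_x(s)$ for $s\in[t,t+\Dt]$ --- and with it the comparison of rates across the grid cell --- is not established, so the induction does not close.

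One small independent note: Equation~\eqref{eqn:2nd-level-q-condition} as printed has $e^{-tx'}$, but the proof of Lemma~\ref{lem:2nd-level-singleton} and the definition of $\hat q_x$ show that the intended constraint is $q_x(t)\ge e^{tx'}d_{\max\{x,x'\},\min\{x,x'\}}(t)$; you used the correct sign, which is precisely why the monotonicity step cannot be dispatched as cheaply as you state.
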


This is the same argument as Lemma~\ref{lem:2nd-level-Q-hat} so we omit the proof.

One can now numerically verify that $\frac{1-\hat{s}_x(1)}{x}\geq 0.716$ for any $x\in[0,1]$ when $\Dx = \Dt = 10^{-4}$, e.g., using \href{https://github.com/h-zhiyi/poisson-ocs/blob/c15622fe8758dfab48762388b4c0e749419db048/numerical/level2_parallel.jl}{our implementation} in Julia.

\end{document}